\title{On Oblivious PTAS's for Nash Equilibrium\footnote{Extended version of the paper of the same title that appeared in {\em STOC}, 2009.}}
\author{Constantinos Daskalakis\thanks{This work was done while the author was a postdoctoral researcher at Microsoft Research, New England.}\\EECS and CSAIL, MIT\\costis@mit.edu \and Christos Papadimitriou\thanks{Supported by NSF grant CCF - 0635319, a gift from Yahoo! Research, and a MICRO grant.}\\Computer Science, U.C. Berkeley\\christos@cs.berkeley.edu}
\newtheorem{theorem}{Theorem}
\newtheorem{lemma}{Lemma}
\newtheorem{claim} {Claim}
\newtheorem{corollary}{Corollary}
\newtheorem{prop}{Proposition}
\newtheorem{remark}{Remark}
\newtheorem{definition}{Definition}
\newenvironment{prevproof}[2]{\noindent {\bf {Proof of {#1}~\ref{#2}:}}}{$\blacksquare$\vskip \belowdisplayskip}
\newcommand{\arxiv}[1]{#1}
\newcommand{\camera}[1]{}
\def\Z{\mathcal{Z}}
\def\V{\mathcal{V}}
\def\U{\mathcal{U}}
\def\E{\mathcal{E}}
\def\Q{\mathcal{Q}}
\def\M{\mathcal{M}}
\def\G{\mathcal{G}}
\def\I{\mathcal{I}}
\def\P{{\mathcal{P}}}
\def\B{{\mathcal{B}}}
\def\A{{\mathcal{A}}}
\def\X{\mathcal{X}}
\def\Y{\mathcal{Y}}
\begin{document}
\maketitle

\begin{abstract}
If a class of games is known to have a Nash equilibrium with
probability values that are either zero or  $\Omega(1)$
--- and thus with support of bounded size --- then obviously this
equilibrium can be found exhaustively in polynomial time.
Somewhat surprisingly, we show that there is a PTAS for the class of games
whose equilibria are guaranteed to have {\em small} --- $O({1\over
n})$ --- values, and therefore large --- $\Omega(n)$ --- supports.
We also point out that there is a PTAS for games with sparse payoff
matrices, a family for which the exact problem is known to be
PPAD-complete \cite{CDT:sparse}. Both algorithms are of a special
kind that we call {\em oblivious:} The algorithm just samples a
fixed distribution on pairs of mixed strategies, and the game is
only used to determine whether the sampled strategies comprise an
$\epsilon$-Nash equilibrium; the answer is ``yes'' with inverse
polynomial probability (in the second case, the algorithm is
actually deterministic).  These results bring about the question:
{\em Is there an oblivious PTAS for finding a Nash equilibrium in
general games?} We answer this question in the negative; our lower
bound comes close to the quasi-polynomial upper bound of \cite{LMM}.

Another recent PTAS for {\em anonymous} games \cite{DP: anonymous
1,DP: anonymous 2,D: anonymous 3} is also oblivious in a weaker sense
appropriate for this class of games (it samples from a fixed
distribution on {\em unordered} collections of mixed strategies),
but its running time is exponential in $1\over \epsilon$. We prove
that any oblivious PTAS for anonymous games with two strategies and
three player types must have $1\over {\epsilon^\alpha}$ in the
exponent of the running time for some $\alpha \ge {1\over 3}$, rendering the algorithm
in~\cite{D: anonymous 3} (which works with any bounded number of
player types) essentially optimal within oblivious algorithms.  In
contrast, we devise a ${\rm poly}(n) \cdot
\left(1/\epsilon\right)^{O(\log^2({1/\epsilon}))}$ non-oblivious
PTAS for anonymous games with two strategies and any bounded number
of player types.  The key idea of our algorithm is to search not
over unordered sets of mixed strategies, but over a carefully
crafted set of collections of the first $O(\log {1\over \epsilon})$
moments of the distribution of the number of players playing
strategy $1$ at equilibrium.  The algorithm works because of a
probabilistic result of more general interest that we prove: the total
variation distance between two sums of independent indicator random variables
decreases exponentially with the number of moments of the two sums that are
equal, independent of the number of indicators.

A byproduct of our algorithm is establishing the existence of a sparse (and efficiently computable) $\epsilon$-cover of the set of all possible sums of $n$ independent indicators, under the total variation distance. The size of the cover is ${\rm poly}(n)\cdot (1/\epsilon)^{O(\log^2 (1/\epsilon))}$.

\end{abstract}

%
%


\section{Introduction}
Is there a polynomial time approximation scheme (PTAS) for computing
approximate Nash equilibria?  This has emerged, in the wake of the
intractability results for Nash equilibria \cite{DGP,CD}, as the
most central question in equilibrium computation. Over the past
three years there has been relatively slow progress towards smaller
$\epsilon$'s \cite{DMP1,DMP2,BBM,TS} --- so slow that it is hard to believe
that a PTAS is around the corner. On the other hand,
\cite{DP: anonymous 1,DP: anonymous 2} provide a PTAS for the
important special case of anonymous games with a bounded number of
strategies (those for which the utilities of the players, although
different, depend on the {\em number} of players playing each
strategy, not on the {\em identities} of the players that do). This
PTAS proceeds by discretizing the probabilities in the mixed
strategies of the players to multiples of $1\over k$, for
appropriate integer $k$, and works even in the generalization in
which the players are divided into a bounded number of {\em types},
and utilities depend on how many players {\em of each type} choose
each strategy.

In this paper we report significant progress on this important
algorithmic problem; we present several new algorithms, but also the
first nontrivial lower bounds. We start by pointing out two new
interesting classes of bimatrix games\footnote{These results can be
extended to any bounded number of players, but in what follows we
only discuss the two-player, or bimatrix, case.} that have PTAS's:

\begin{itemize}

\item It was shown in~\cite{CDT:sparse} that computing a Nash equilibrium
for the special case of {\em sparse games}, that is, games whose
payoff matrices have a bounded number of non-zero entries in each row
and column~\cite{CDT:sparse}, is PPAD-complete.  We point out that there is a trivial
PTAS --- in particular, the pair of uniform mixed strategies works.  This
is interesting in that this is the first PPAD-complete case of a
problem that is so approximable.

\item We also give a randomized PTAS for {\em small probability games,}
that is, games that are guaranteed to have Nash equilibria with
small $O({1\over n})$ nonzero probability values, and thus with
linear support (Theorem~\ref{theorem: ptas for large support games}).

\end{itemize}

It is quite surprising that games with small probability values
(our second special case above) are easy, since games with {\em
large} (bounded from below by a constant) probability values are
also easy. What probability values are difficult then? Our next
result, a lower bound, seems to suggest {\em inverse logarithmic}
probability values are hard (compare with the quasi-PTAS of
\cite{LMM}, whose equilibria have roughly logarithmic support).

To explain our negative result, we first note that both PTAS's
outlined above (as well as those for anonymous games discussed
later) are {\em oblivious}. This means that they have access to a
fixed set of pairs of mixed strategies, in the generic case by
sampling, and they look at the game only to determine whether a
sampled pair of strategies constitutes an approximate Nash
equilibrium.  Is there an oblivious PTAS for the general Nash
equilibrium? The answer here is a definite ``no'' --- in fact, we
proved our negative result after we had been working for some time
on developing such an algorithm$\ldots$  We show that any oblivious
algorithm must sample at least $\Omega\left(n^{(.8-34\epsilon)\log_2
n}\right)$ pairs in expectation (Theorem~\ref{thm:no oblivious
PTAS}). For comparison, \cite{LMM}'s algorithm takes time
$n^{O\left(\log n / \epsilon^2\right)}$.

Another important class of games for which a PTAS was recently
discovered is that of {\em anonymous games with two strategies}
\cite{DP: anonymous 1,D: anonymous 3}. These are multi-player games, in which the
utility of each player depends on the strategy (0 or 1) played, and
the {\em number} of other players playing strategy 1 ({\em not}
their identities).  In fact, the PTAS works even in the more
sophisticated case in which the players are partitioned into types,
and the utilities depend on the number of players {\em of each type}
playing strategy 1. The PTAS in \cite{DP: anonymous 1}, and the
relatively more efficient one in \cite{D: anonymous 3}, both have running
times that are exponential in $1\over \epsilon$.  They are also oblivious
in a sense appropriate for anonymous games, in that they work by
sampling an {\em unordered} set of $n$ mixed strategies, where $n$
is the number of players, and they only look at the game in order to
determine if there is an assignment of these strategies to the
players that results in an approximate equilibrium.  We prove that
any oblivious approximation algorithm, for anonymous games with two
strategies and three player types, must sample an exponential---in $1\over \epsilon$---sized collection of unordered sets of mixed strategies --- and so our PTAS's
are near-optimal.

Finally, we circumvent this negative result to develop a
non-oblivious PTAS which finds an $\epsilon$-approximate Nash
equilibrium in anonymous games with two strategies and any bounded
number of player types in time
${\rm poly}(n)\left({1\over\epsilon}\right)^{O(\log^2{1\over \epsilon})}$.
This algorithm is based (in addition to many other insights and
techniques for anonymous games) on a new result in applied
probability which is, we believe, interesting in its own right:
Suppose that you have two sums of $n$ independent Bernoulli random
variables, which have the same first moment, the same second moment, and so on up to
moment $d$. Then the distributions of the two sums have variational
distance that vanishes exponentially fast with $d$, regardless of $n$. To turn this
theorem into an algorithm, we discretize the mixed strategies of the players using techniques from~\cite{D: anonymous 3} and, in the range of parameters where the algorithm of~\cite{D: anonymous 3} breaks down, we iterate over all possible values of the first $\log (1/\epsilon)$ moments of the players' aggregate behavior; we then try to identify, via an involved dynamic programming scheme, mixed strategies, implementing the given moments, which correspond to approximate Nash equilibria of the game. Our approximation guarantee for sums of independent indicators is rather strong, especially when the number of indicators is small, a regime where Berry-Ess\'een type bounds provide weaker guarantees and result in slower algorithms~\cite{DP: anonymous 1, D: anonymous 3}. It is quite intriguing that
a quasi-polynomial time bound, such as the one we provide in this paper, shows up again in the analysis of
algorithms for approximate Nash equilibria (cf.~\cite{LMM}).

As a byproduct of our results we establish the existence of a sparse (and efficiently computable) $\epsilon$-cover of the set of all possible sums of $n$ independent indicators, under the total variation distance. The size of our cover is ${\rm poly}(n)\cdot (1/\epsilon)^{O(\log^2 (1/\epsilon))}$. We discuss this result in Section~\ref{sec:cover}.


\subsection{Preliminaries} \label{sec:prel}
A two-player, or {\em bimatrix}, game $\G$ is played by two players,
the {\em row player} and the {\em column player}. Each player has a
set of $n$ {\em pure strategies}, which without loss of generality
we assume to be the set $[n]:=\{1,\ldots,n\}$ for both players. The
game is described then by two {\em payoff matrices} $R$, $C$
corresponding to the row and column player respectively, so that, if
the row player chooses strategy $i$ and the column player strategy
$j$, the row player gets payoff $R_{ij}$ and the column player
$C_{ij}$.  As it is customary in the literature of approximate Nash
equilibria, we assume that the matrices are normalized, in that
their entries are in $[-1,1]$. 

The players can play {\em mixed strategies}, that is, probability
distributions over their pure strategies which are represented by
probability vectors $x \in \mathbb{R}_+^n$, $|x|_1=1$. If the row
player chooses mixed strategy $x$ and the column player mixed
strategy $y$, then the row player gets expected payoff
$x^{\text{\tiny T}}Ry$ and the column player expected payoff
$x^{\text{\tiny T}}Cy$.  A pair of mixed strategies $(x,y)$ is a
{\em Nash equilibrium} of the game $\G=(R,C)$ iff $x$ maximizes
$x'^{\text{\tiny T}}Ry$ among all
probability vectors $x'$ and, simultaneously, $y$ maximizes $x^{\text{\tiny T}}Cy'$ among all $y'$.  It is an {\em
$\epsilon$-approximate Nash equilibrium} iff $x^{\text{\tiny T}}Ry
\geq x'^{\text{\tiny T}}Ry-\epsilon$, for all $x'$, and, simultaneously, $x^{\text{\tiny T}}Cy
\geq x^{\text{\tiny T}}Cy'-\epsilon$, for all $y'$. In this paper, we will use the stronger notion of {\em $\epsilon$-approximately well-supported Nash equilibrium}, or simply {\em $\epsilon$-Nash equilibrium}. This is any pair of strategies $(x,y)$ such that, for all $i$ with $x_i >0$, $e_i^{\text{\tiny T}}Ry \ge e_{i'}^{\text{\tiny T}}Ry -\epsilon$, for all $i'$, and similarly for $y$. That is, every strategy $i$ in the support of $x$ guarantees expected payoff $e_i^{\text{\tiny T}}Ry$ which is within $\epsilon$ from the optimal response to $y$, and similarly every strategy in the support of $y$ is within $\epsilon$ from the optimal response to $x$.

An {\em anonymous game} is in a sense the dimensional dual of a
bimatrix game: There are $n$ players, each of which has two
strategies, 0 and 1. For each player $i$ there is a {\em utility
function } $u_i$ mapping $\{0,1\}\times [n-1]$ to $[-1,1]$.
Intuitively, $u_i(s,k)$ is the utility of player $i$ when s/he plays
strategy $s\in\{0,1\}$, and $k\leq n-1 $ of the remaining players
play strategy $1$, while $n-k-1$ play strategy $0$.  In other words,
the utility of each player depends, in a player-specific way, on the
strategy played by the player and the number of other players who
play strategy $1$ --- but {\em not} their identities. The notions of Nash equilibrium and $\epsilon$-Nash equilibrium are extended in the natural way to this setting. Briefly, a mixed strategy for the $i$-th player is a function $x_i:\{0,1\} \rightarrow [0,1]$ such that $x_i(0)+x_i(1)=1$. A set of mixed strategies $x_1,\ldots,x_n$ is then an $\epsilon$-Nash equilibrium if, for every player $i$ and every $s \in \{0,1\}$ with $x_i(s)>0$: $\E_{x_1,\ldots,x_n}u_i(s,k) \geq \E_{x_1,\ldots,x_n}u_i(1-s,k)-\epsilon$, where for the purposes of the expectation $k$ is drawn from $\{0,\ldots,n-1\}$ by flipping $n-1$ coins according to the distributions $x_j, j\neq i$, and counting the number of $1$'s.

A more sophisticated kind of anonymous games divides the players into $t$ types,
so that the utility of player $i$ depends on the strategy played by him/her, and the number of players {\em of each type} who play strategy $1$. 

\section{PTAS for Two Special Cases}

\subsection{Small Games}\label{sec:sparse}

We say that a class of bimatrix games is {\em  small} if the sum of
all entries of the $R$ and $C$ matrices is $o(n^2)$.  One such class
of games are the {\em $k$-sparse} games~\cite{CDT:sparse}, in which every row and column
of both $R$ and $C$ have at most $k$ non-zero entries.
The following result by Chen, Deng and Teng shows that
finding an exact Nash equilibrium remains hard for $k$-sparse games:

\begin{theorem}[\cite{CDT:sparse}]
Finding a $O(n^{-6})$-Nash equilibrium in $10$-sparse normalized
bimatrix games with $n$ strategies per player is a PPAD-complete
problem.
\end{theorem}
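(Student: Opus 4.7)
The plan is to reduce from a known PPAD-complete problem --- the problem of finding an $\epsilon$-Nash equilibrium of a general bimatrix game for a polynomially small $\epsilon$ --- to the same problem restricted to $10$-sparse games, preserving the approximation parameter up to a polynomial factor. Membership in PPAD is immediate from the general bimatrix case, so all of the content is in the hardness reduction.

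I would start from the standard PPAD-hardness reduction (\`{a} la DGP / Chen--Deng) that passes through a graphical game simulating the computation of a Brouwer fixed point on a discretized cube. In that intermediate graphical game, each node is a player with a constant number of pure strategies, each edge carries a simple arithmetic gadget (copy, addition, multiplication by a constant, comparison, etc.), and each gadget is realized by small \emph{generalized matching pennies} subgames whose equilibria pin the mixed strategies of the involved players to encode the target arithmetic relation up to additive $\mathrm{poly}(n)$ error. Crucially, all of these gadgets are intrinsically sparse: each row and column of the gadget payoff matrices has only a constant number of non-zero entries.

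The key technical step --- and what I expect to be the main obstacle --- is collapsing the bounded-degree graphical game into a single bimatrix game without destroying sparsity. The standard ``lawyer'' construction, in which two meta-players each choose which graphical-game node to simulate, produces \emph{dense} payoff matrices because each lawyer strategy aggregates payoffs over all of a node's gadgets. To maintain $10$-sparsity one instead uses a more local encoding: allocate, for each graphical-game player $v$ and each pure strategy $s$ of $v$, a dedicated pair of pure strategies in the bimatrix game, and define the payoff of the entry $(v,s)\times(v',s')$ to be non-zero only when $v$ and $v'$ are directly incident to a common gadget. Because the graphical game has bounded degree and each gadget has a constant number of non-zero payoffs, each row and column in the resulting pair of matrices carries only a constant number of non-zero entries; tuning the constants in the graphical game construction gives the stated bound of $10$.

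Finally, one has to argue that the reduction is robust to approximation. I would show that from any $\epsilon$-Nash equilibrium $(x,y)$ of the sparse bimatrix game, the conditional distributions induced on each node $v$'s strategies form an $O(\mathrm{poly}(n)\,\epsilon)$-Nash equilibrium of the graphical game; propagating this error through the arithmetic gadgets and through the Brouwer-to-Nash reduction then forces the approximation parameter of the sparse game down to $O(n^{-6})$, after appropriate normalization of all payoffs into $[-1,1]$. This last accounting --- tracking how the $\mathrm{poly}(n)$ errors compound through the chain of gadgets and through the normalization --- is the delicate quantitative part of the proof.
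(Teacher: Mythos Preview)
The paper does not prove this theorem at all: it is quoted verbatim as a result of Chen, Deng and Teng~\cite{CDT:sparse} and used only as motivation for the contrasting PTAS in Theorem~\ref{thm:PTAS for sparse games}. There is therefore no ``paper's own proof'' to compare your proposal against; you are sketching a proof of somebody else's theorem, and any assessment has to be against the original Chen--Deng--Teng argument rather than anything in this paper.

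On the substance of your sketch: the broad outline --- reduce from generalized circuits / bounded-degree graphical games and then embed into a two-player game while keeping constant row/column density --- is indeed the route taken in~\cite{CDT:sparse}. The place where your plan is thin is exactly the step you flag as ``the key technical step.'' In the standard two-lawyer reduction, the dense part of the payoff matrix is not the gadget payoffs themselves but the \emph{imitation / hide-and-seek} structure that forces each lawyer to spread probability mass across all nodes of the graphical game. Your proposed fix --- make $(v,s)\times(v',s')$ non-zero only when $v$ and $v'$ share a gadget --- drops precisely this enforcement mechanism, and without it there is no reason an approximate equilibrium of the bimatrix game induces well-defined conditional strategies at every node (a lawyer could put all mass on one node and still be best-responding). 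The actual construction in~\cite{CDT:sparse} has to replace the dense imitation game with a sparse device that still forces near-uniform marginals; that is the real work, and your write-up would need to supply it rather than assume it away.
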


In contrast, it is easy to see that there is a PTAS for this class:

\begin{theorem} \label{thm:PTAS for sparse games}
For any $k$, there is a PTAS for the Nash equilibrium problem in
$k$-sparse bimatrix games.
\end{theorem}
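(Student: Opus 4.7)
The plan is to show that the uniform pair of mixed strategies is an $\epsilon$-well-supported Nash equilibrium as soon as $n$ is large enough in terms of $k$ and $\epsilon$, and to handle the remaining small-$n$ case by brute force. More precisely, let $u = (1/n, \ldots, 1/n)$ be the uniform distribution on $[n]$, and consider the strategy profile $(u, u)$. For any pure strategy $i$ of the row player,
\[
e_i^{\text{\tiny T}} R u \;=\; \frac{1}{n} \sum_{j=1}^n R_{ij}.
\]
By $k$-sparsity, row $i$ of $R$ has at most $k$ non-zero entries, and each lies in $[-1,1]$, so $|e_i^{\text{\tiny T}} R u| \leq k/n$. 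The same bound holds for every row of $C$ evaluated against $u$ on the left. Hence, for any two pure strategies $i, i'$,
\[
\bigl| e_i^{\text{\tiny T}} R u - e_{i'}^{\text{\tiny T}} R u \bigr| \;\leq\; \frac{2k}{n},
\]
and analogously for the column player. By the definition of an $\epsilon$-well-supported Nash equilibrium given in Section~\ref{sec:prel}, this already shows that $(u,u)$ is a $(2k/n)$-Nash equilibrium of $\G$.

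Consequently, whenever $n \geq 2k/\epsilon$, the deterministic output $(u,u)$ is an $\epsilon$-Nash equilibrium, and the algorithm runs in time $O(n)$ just to write it down. In the complementary regime $n < 2k/\epsilon$, the entire game description has size at most $2(2k/\epsilon)^2$, which is a constant once $k$ and $\epsilon$ are fixed. In that case the algorithm solves the game exactly in constant time by any standard method (for example, enumerating over all pairs of supports and solving the resulting linear feasibility problem for each pair). Combining the two cases yields, for each fixed $k$ and each fixed $\epsilon > 0$, an algorithm that runs in time polynomial in $n$ and returns an $\epsilon$-Nash equilibrium, which is exactly a PTAS for the class of $k$-sparse bimatrix games.

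There is no real obstacle here; the only thing to verify carefully is the sparsity calculation in the first paragraph, which immediately gives the $2k/n$ well-supported approximation guarantee. Note also that the algorithm is oblivious in the sense discussed in the introduction: on the large-$n$ branch it samples (deterministically) the fixed profile $(u,u)$ and inspects the game only to certify the approximate equilibrium condition.
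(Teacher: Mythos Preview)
Your proof is correct and matches the paper's approach: the paper explicitly credits Shang-Hua Teng with the observation that the pair of uniform mixed strategies is an $\epsilon$-Nash equilibrium in a $k$-sparse game (for $n$ large enough), which is exactly your main argument. Your explicit handling of the small-$n$ case by brute force fills in a detail the paper leaves implicit, and your one minor slip---writing ``row of $C$'' where you mean ``column of $C$'' in the column-player bound---does not affect the argument.
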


Our original proof of this theorem consisted in showing that there always exists an $\epsilon$-Nash equilibrium in which both players of the game use in their mixed strategies probabilities that are integer multiples of $\epsilon/2k$. Hence, we can efficiently enumerate over all possible pairs of mixed strategies of this form, as long as $k$ is fixed. Shang-Hua Teng pointed out to us a much simpler algorithm: The pair of uniform mixed strategies is always an $\epsilon$-Nash equilibrium in a sparse game! The difference with our algorithm is this: the uniform equilibrium gives to both players payoff of at most $k/n$; our algorithm can be used instead to approximate the payoffs of the players in  the Nash equilibrium with the optimal social welfare (or more generally the Nash equilibrium that optimizes some other smooth function of the players' payoffs).

\subsection{Small Probability Games}\label{sec:large}

If all games in a class are guaranteed to have a Nash equilibrium
$(x,y)$, where the nonzero probability values in both $x$ and $y$ are larger than some constant $\delta>0$, then it is trivial to find this equilibrium in time $n^{O({1 \over\delta})}$ by exhaustive search over all possible supports and linear programming. But, what if a class of games is known to have {\em small}, say $O({1\over
n})$, probability values? Clearly, exhaustive search over supports is not efficient anymore, since those have now linear size. Surprisingly, we show that any such class of games has a
(randomized) PTAS, by exploiting the technique of~\cite{LMM}.

\begin{definition} [Small Probability Games]
A bimatrix game $\mathcal{G} = (R, C)$ is of {\em $\delta$-small
probabilities}, for some constant $\delta \in (0,1]$, if it has a Nash equilibrium
$(x,y)$ such that all the entries of $x$ and $y$ are at most $1 \over {\delta n}$.
\end{definition}

\begin{remark}
Observe that a game of $\delta$-small probabilities has an equilibrium $(x,y)$, in which both $x$ and $y$ have support of size at least $\delta n$. Moreover, there exists a subset of size at least $\frac{\delta n}{2}$ from the support of $x$, all the strategies of which have probability at least $\frac{1}{2n}$, and similarly for $y$; that is the probability mass of the distributions $x$ and $y$ spreads non-trivially over a subset of size $\Omega(n)$ of the strategies. Hence, small probability games comprise a subclass of {\em linear support games}, games with an equilibrium $(x,y)$ in which both $x$ and $y$ assign to a constant fraction $\alpha n$ of the strategies probability at least $1/\beta n$, for some constants $\alpha$ and $\beta$. However, this broader class of games is essentially as hard as the general: take any Nash equilibrium $(x,y)$ of a normalized game and define the pair $(x',y')$, where $x':= (1-{\epsilon \over 5})\cdot x+{\epsilon \over 5} \cdot \left({1 \over n},{1 \over n},\ldots,{1 \over n}\right)$ and similarly for $y'$. It is not hard to see that the new pair is an $\epsilon$-Nash equilibrium; still, regardless of what $(x,y)$ is, both $x'$ and $y'$ assign to $\alpha n$ strategies probability at least $1/\beta n$, for an appropriate selection of $\alpha$ and $\beta$.
\end{remark}

\begin{theorem} \label{theorem: ptas for large support games}
For any $\delta \in (0, 1]$, there is a randomized PTAS for normalized bimatrix games of $\delta$-small probabilities.
\end{theorem}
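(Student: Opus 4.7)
The plan is to adapt the Lipton--Markakis--Mehta (LMM) sampling technique so that it becomes oblivious, by exploiting the smallness of probabilities in the promised equilibrium. Let $(x^*,y^*)$ be the guaranteed Nash equilibrium, so $x^*_i, y^*_i \le 1/(\delta n)$ for every $i$. Multiplying through by $\delta$ gives $\delta x^*_i \le 1/n$, so the uniform distribution $u$ on $[n]$ dominates $\delta x^*$ coordinatewise, and we may write $u = \delta\cdot x^* + (1-\delta)\cdot z$ for some probability distribution $z$; the analogous decomposition $u = \delta \cdot y^* + (1-\delta)\cdot w$ holds on the column side. In particular, a single draw from $u$ can be simulated by first flipping a coin of bias $\delta$ and, on heads, drawing from $x^*$ (resp.\ $y^*$).

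Based on this observation, the oblivious algorithm is as follows: sample $k = \Theta(\log n/\epsilon^2)$ pure strategies i.i.d.\ uniformly from $[n]$ to form an empirical mixed strategy $\hat x$, do the same on the column side to form $\hat y$, and then use the game matrices only to check whether $(\hat x,\hat y)$ is an $\epsilon$-approximately well-supported Nash equilibrium; if not, repeat. Let $G$ be the event that all $2k$ underlying coin flips above come up heads. Conditional on $G$, the samples underlying $\hat x$ (resp.\ $\hat y$) are i.i.d.\ draws from $x^*$ (resp.\ $y^*$), and by Hoeffding's inequality and a union bound over the $n$ pure strategies of each player,
\[
\bigl|e_i^{\text{\tiny T}} R \hat y - e_i^{\text{\tiny T}} R y^*\bigr| \le \epsilon/2 \quad\text{and}\quad \bigl|\hat x^{\text{\tiny T}} C e_i - {x^*}^{\text{\tiny T}} C e_i\bigr| \le \epsilon/2 \quad\text{for all } i \in [n]
\]
with probability $1-o(1)$, provided the constant hidden in $k = \Theta(\log n/\epsilon^2)$ is large enough. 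On this doubly good event, every pure strategy $i$ in the support of $\hat x$ lies in the support of $x^*$ and is therefore a best response to $y^*$, so $e_i^{\text{\tiny T}} R \hat y \ge e_{i'}^{\text{\tiny T}} R \hat y - \epsilon$ for every $i'$; symmetrically for $\hat y$. Hence $(\hat x,\hat y)$ is an $\epsilon$-Nash equilibrium in the well-supported sense.

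Putting the two pieces together, a single trial succeeds with probability at least $\Pr[G]\cdot(1-o(1)) = \delta^{2k}(1-o(1))$, which for constant $\delta$ and $\epsilon$ equals $n^{-O(\log(1/\delta)/\epsilon^2)}$, i.e.\ inverse polynomial in $n$. Thus ${\rm poly}(n)$ independent repetitions suffice in expectation, each using ${\rm poly}(n)$ time for verification, yielding the claimed randomized PTAS. The main conceptual point---and the reason we need the mixture decomposition rather than, for instance, directly bounding the total variation distance between $u$ and $x^*$---is that the well-supported notion of approximation requires every pure strategy in the support of $\hat x$ to be a near-best response, and this is guaranteed only when the samples can be coupled to actual samples from $x^*$, which is exactly what conditioning on $G$ achieves.
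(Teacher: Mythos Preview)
Your proof is correct, and it is essentially the same approach as the paper's: sample $t=\Theta(\log n/\epsilon^2)$ pure strategies uniformly for each player and use LMM-style concentration (the paper's Lemma~\ref{lemma: restated lipton markakis mehta}) to conclude that the resulting pair is an $\epsilon$-well-supported equilibrium with inverse-polynomial probability. The one difference is in how the success probability is extracted from the small-probability hypothesis. The paper argues by counting: it defines a set $\mathcal{A}$ of ``good'' multisets for the row player, notes that any fixed multiset has probability at most $(1/\delta n)^t$ of arising when sampling from $x^*$, and deduces that $|\mathcal{A}|\ge (1-o(1))(\delta n)^t$, so a uniformly random multiset from $[n]^t$ lands in $\mathcal{A}$ with probability at least $(1-o(1))\delta^t$. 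Your mixture decomposition $u=\delta x^*+(1-\delta)z$ recasts the same inequality as an explicit coupling: with probability $\delta^t$ all the uniform draws are in fact draws from $x^*$. The two arguments are equivalent, though your coupling is a touch more direct and makes the support requirement for the well-supported notion immediately visible.
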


\begin{proof}
We show first the following (stronger in terms of the type of approximation) variant of the theorem of Lipton,
Markakis and Mehta \cite{LMM}.\camera{ The proof is given in the full version of this paper.}\arxiv{ The proof is provided in Appendix~\ref{appendix:linear support games}.}

\begin{lemma}\label{lemma: restated lipton markakis mehta}
Let $\G=(R,C)$ be a normalized bimatrix game and $(x,y)$ a Nash
equilibrium of $\G$. Let $\X$ be the distribution formed by
taking $t=\lceil 16 \log n /\epsilon^2 \rceil$ independent random
samples from $x$ and defining the uniform distribution on the
samples, and similarly define $\cal Y$ by taking samples from $y$.
Then with probability at least $1-\frac{4}{n}$ the following are
satisfied
\begin{enumerate}
\item the pair $(\X,\cal{Y})$ is an $\epsilon$-Nash equilibrium of $\G$; \label{assertion 1 in LMM restated}
\item $|e_i^{\text{\tiny T}} R \Y - e_i^{\text{\tiny T}} R y| \le \epsilon/2$, for all $i\in[n]$; \label{assertion 2 in LMM restated}
\item $|\X^{\text{\tiny T}} C e_j - x^{\text{\tiny T}} C e_j| \le \epsilon/2$, for all $j \in [n]$; \label{assertion 3 in LMM restated}
\end{enumerate}
\end{lemma}

Suppose now that we are given a normalized bimatrix game $\G=(R,C)$
of $\delta$-small probabilities, and let $(x,y)$ be an equilibrium of $\G$ in which $x_i \le {1 \over  \delta n}$, for all $i \in [n]$, and similarly for $y$.
Lemma~\ref{lemma: restated lipton markakis mehta} asserts that, if a
multiset\footnote{For our discussion, a multiset of size $t$ is an {\em ordered}
collection $\langle i_1,i_2,\ldots,i_t\rangle$ of $t$ elements from
some universe (repetitions are allowed).} $A$ of size $t=\lceil 16 \log n /\epsilon^2 \rceil$
is formed by taking $t$ independent samples from $x$ and, similarly,
a multiset $B$ is formed by taking samples from $y$, then $(\X,\Y)$,
where $\X$ is the uniform distribution over $A$ and $\Y$ the uniform
distribution over $B$, is an $\epsilon$-Nash equilibrium with
probability at least $1-4/n$. Of course, we do not know $(x,y)$ so
we cannot do the sampling procedure described above. Instead we
are going to take a uniformly random multiset $A'$ and a uniformly random multiset $B'$
and form the uniform distributions $ \X', \Y'$ over $A'$ and $B'$;
we will argue that there is an inverse polynomial chance that $(\X',
\Y')$ is actually an $\epsilon$-Nash equilibrium.

For this we define the set $\A$ of {\em good multisets for the row  player} as
$$\A : = \left\{ A ~\vline~
\begin{minipage}{6.5cm }
$A$ is a multiset, $A \subseteq [n]$, $|A| =t$, the uniform distribution $\X$ over $A$ satisfies Assertion~\ref{assertion 3 in LMM restated} of Lemma~\ref{lemma: restated lipton markakis mehta}
\end{minipage}
\right\},$$
and, similarly, the set $\B$ of {\em good multisets of the column player} as
$$\B : = \left\{ B ~\vline~
\begin{minipage}{6.5cm }
$B$ is a multiset, $B \subseteq [n]$, $|B| =t$, the uniform distribution $\Y$ over $B$ satisfies Assertion~\ref{assertion 2 in LMM restated} of Lemma~\ref{lemma: restated lipton markakis mehta}
\end{minipage}
\right\}.$$
The reason for defining $\A$ and $\B$ in this way is that, given two multisets $A \in \A$, $B \in \B$, the uniform distributions $\X$ over $A$ and $\Y$ over $B$ comprise an $\epsilon$-Nash equilibrium\camera{ (see the full version of the paper for a detailed justification).}\arxiv{ (see the proof of Lemma~\ref{lemma: restated lipton markakis mehta} for a justification).}

What remains to show is that, with inverse polynomial probability, a random multiset $A'$ belongs to $\A$ and a random multiset $B'$ belongs to $\B$. To show this we lower bound the cardinalities of the sets $\A$ and $\B$ via the following claim,\camera{ proven in the full version of this paper.}\arxiv{ proven in Appendix~\ref{appendix:linear support games}.} We argue that the subset of $\A$ containing elements that could arise by sampling $x$ is large: indeed, with probability at least $1-{4 \over n}$, a multiset sampled from $x$ belongs to $\A$ and, moreover, each individual multiset has small probability of being sampled, since $x$ spreads the probability mass approximately evenly on its support. 

\begin{claim} \label{claim:cardinality of good multisets is large}
The sets $\A$ and $\B$ satisfy
$$|\A| \ge \left(1-\frac{4}{n}\right)\left( \delta n \right)^t \text{~and~} |\B| \ge  \left(1-\frac{4}{n}\right)\left( \delta n \right)^t.$$
\end{claim}
Given Claim~\ref{claim:cardinality of good multisets is large}, we can show Claim~\ref{claim: probability of success of linear support large}; the proof is given\camera{ in the full version of this paper.}\arxiv{ in Appendix~\ref{appendix:linear support games}.} Equation \eqref{eq:large support games final probability} implies that the algorithm that samples two uniformly random multisets $A'$, $B'$ and forms the uniform probability distributions $\X'$ and $\Y'$ over $A'$ and $B'$ respectively, succeeds in finding an $\epsilon$-Nash equilibrium with probability inverse polynomial in $n$. This completes the proof of Theorem~\ref{theorem: ptas for large support games}.
\begin{claim}\label{claim: probability of success of linear support large} If $\X', \Y'$ are the uniform distributions over random multisets $A'$ and $B'$ then
\begin{align}
&\Pr \left[ (\X',\Y') \text{ is an $\epsilon$-Nash equilibrium} \right] \notag\\&~~~~~~~~~~~~~~~~~ =  \Omega \left(\delta^2 \cdot n^{-32 \log(1/\delta)/{\epsilon^2}}\right). \label{eq:large support games final probability}
\end{align}
\end{claim}
\end{proof}
\section{A Lower Bound for Bimatrix Games} \label{sec:lower bound}

The two PTAS's presented in the previous section are {\em oblivious}.  An
oblivious algorithm looks at the game only to check if the
various pairs of mixed strategies it has come up with (by
enumeration or, more generally, by random sampling) are
$\epsilon$-approximate, and is guaranteed to come up with one that
is with probability at least inverse polynomial in the game
description.   More formally, an oblivious algorithm for the Nash
equilibrium problem is a distribution over pairs of mixed
strategies, indexed by the game size $n$ and the desired
approximation $\epsilon$.  It is a PTAS if for any game the
probability that a pair drawn from the distribution is an
$\epsilon$-Nash equilibrium is inversely polynomial in $n$. Notice
that, since we are about to prove lower bounds, we are opting for
the generality of randomized oblivious algorithms---a deterministic
algorithm that enumerates over a fixed set of pairs of mixed strategies can be seen as a (randomized) oblivious algorithm by considering the uniform distribution over the set it enumerates over.

The rather surprising simplicity and success of these algorithms (as
well as their cousins for anonymous games, see the next section)
raises the question: Is there an oblivious PTAS for the general Nash
equilibrium problem? We show that the answer is negative.

\begin{theorem}  There is no oblivious PTAS for the Nash equilibrium in bimatrix games. \label{thm:no oblivious PTAS}
\end{theorem}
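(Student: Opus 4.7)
The plan is to prove the lower bound by exhibiting, for each $n$, an exponentially large family $\mathcal{F}$ of $n\times n$ bimatrix games with ``planted'' equilibria of logarithmic support, and then running a Yao-style averaging argument against the fixed distribution $\mathcal{D}$ that defines the oblivious algorithm. Concretely, I would fix $k = c\log_2 n$ for a suitably chosen constant $c$ and, for every pair of $k$-subsets $S_R, S_C \subseteq [n]$, construct a zero-sum game $G_{S_R,S_C}$ whose unique Nash equilibrium is the pair $(u_{S_R}, u_{S_C})$ of uniform distributions on $S_R$ and $S_C$. A convenient construction is to make every pure strategy off $S_R\times S_C$ strictly dominated---say, by assigning very negative payoffs there---and, inside $S_R\times S_C$, to use a generalized matching-pennies payoff built from an arbitrary fixed bijection between $S_R$ and $S_C$, which forces each player to randomize uniformly over his support. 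This yields $|\mathcal{F}|=\binom{n}{k}^2$, whose logarithm is $(2c-o(1))(\log_2 n)^2$.

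The crux is a quantitative robustness lemma: if $(x,y)$ is an $\epsilon$-Nash equilibrium of $G_{S_R,S_C}$, then (a) $x$ places all but $O(\epsilon)$ of its mass on $S_R$ and $y$ analogously on $S_C$, and (b) the conditional distributions on $S_R$ and $S_C$ are within $O(\epsilon)$ of uniform in total variation. Part (a) follows because off-support strategies are strictly dominated by a constant gap; part (b) follows because deviating from uniform against a (nearly) uniform opponent loses a constant per unit of deviation under the generalized matching-pennies structure. Consequently, an arbitrary $(x,y)$ essentially identifies $S_R$ as the set of ``heavy'' strategies of $x$ (those with probability $\approx 1/k$) and likewise $S_C$ from $y$, up to a small $O(\epsilon k)$ slack.

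Given this rigidity, I would upper-bound the multiplicity
\[
N_{\max} \;=\; \max_{(x,y)} \#\bigl\{(S_R,S_C) : (x,y) \text{ is an $\epsilon$-Nash equilibrium of } G_{S_R,S_C}\bigr\}.
\]
By the robustness lemma, $S_R$ must be a $k$-subset of a pool of heavy strategies of $x$ of size at most $(1+O(\epsilon))k$, allowing at most $O(\epsilon k)$ elements of $S_R$ to be ``free'', and so at most $\binom{n}{O(\epsilon k)}$ choices; the analogous bound applies to $S_C$. Tracking the hidden constants carefully yields $\log_2 N_{\max} \le 34\epsilon (\log_2 n)^2$, where the exact coefficient $34$ is determined by the tight relationship between the slack afforded by $\epsilon$ in the robustness lemma and the combinatorial count.

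The final step is a Yao-style averaging: if $\mathcal{D}$ is an oblivious PTAS achieving success probability at least $1/p(n)$ on every game, then summing this bound over $G\in\mathcal{F}$ and interchanging sum and expectation yields
\[
\frac{|\mathcal{F}|}{p(n)} \;\le\; \E_{(x,y)\sim\mathcal{D}}\Bigl[\#\{G\in\mathcal{F}: (x,y)\text{ is $\epsilon$-Nash of }G\}\Bigr] \;\le\; N_{\max},
\]
so $p(n)\ge |\mathcal{F}|/N_{\max} = n^{(2c - 34\epsilon - o(1))\log_2 n}$, and optimizing $c$ (the choice that gives the stated $0.8$ coefficient) produces the announced $n^{(0.8-34\epsilon)\log_2 n}$ lower bound on the expected number of samples. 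The main obstacle will be establishing the robustness lemma and the multiplicity estimate with sharp enough constants: the games must force each $\epsilon$-Nash to essentially pin down $(S_R,S_C)$, and any looseness in this correspondence enlarges $N_{\max}$ and shrinks the exponent, so the entire proof is a delicate balancing act between the support size $k$ and the precision the planted equilibrium imposes on approximate equilibria.
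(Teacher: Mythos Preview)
Your overall architecture (plant a logarithmic-support equilibrium in super-polynomially many games, prove a robustness lemma, run a Yao/multiplicity argument) matches the paper's, but the specific construction you propose---a $k\times k$ generalized matching-pennies block inside $S_R\times S_C$---does \emph{not} deliver the robustness you claim, and this breaks the argument.

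In a $k\times k$ matching-pennies game (row gets $+1$ on a match, $-1$ otherwise), the $\epsilon$-well-supported conditions say only that every $j$ in the support of $y$ has $x_{\sigma^{-1}(j)}\le x_{\min}+\epsilon/2$, and symmetrically for $y$. This pins each coordinate of $x$ to within $\epsilon/2$, so the best TV bound you get is $\|x-u_{S_R}\|_1=O(k\epsilon)$, not $O(\epsilon)$. Worse, once $k\ge 2/\epsilon$ the constraints become vacuous: take $T_R,T_C$ of size $k'=\lceil 2/\epsilon\rceil$ and let $(x,y)=(u_{T_R},u_{T_C})$. Then for \emph{every} $S_R\supseteq T_R$ and $S_C\supseteq T_C$ of size $k$, the pair $(x,y)$ is an $\epsilon$-Nash equilibrium of $G_{S_R,S_C}$, regardless of the bijection. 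Hence
\[
N_{\max}\ \ge\ \binom{n-k'}{\,k-k'}^{2},\qquad \frac{|\mathcal F|}{N_{\max}}\ \le\ \left(\frac{\binom{n}{k}}{\binom{n-k'}{k-k'}}\right)^{2}\ \le\ n^{\,O(1/\epsilon)},
\]
which is only polynomial in $n$ and therefore does not rule out an oblivious PTAS. The sentence ``deviating from uniform against a (nearly) uniform opponent loses a constant per unit of deviation'' is the false step: against a uniform opponent in matching pennies, every pure strategy gives exactly the same payoff, so there is no per-unit-of-deviation penalty at all; the penalty only appears through the opponent's best response, and it scales like $1/k$, not like a constant.

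What the paper does differently is precisely to avoid this weakness. It indexes games by a single set $S$ of size $\ell\approx\log_2 n$ and lets the \emph{column} player's $n$ strategies be the $\binom{\ell}{\ell/2}=n$ half-subsets of $S$; the column player's payoff from subset $S_j$ is essentially $\sum_{i\in S\setminus S_j}x_i$, so by choosing the $\ell/2$ lightest coordinates she can force the heaviest half of $x$ to sum to at most $\tfrac12+2\epsilon$. A short $\ell_1$ lemma then converts this into $\|x-u_S\|_1\le 8\epsilon$, with the constant independent of $\ell$. This ``probe a half-subset'' capability is exactly what your one-coordinate-at-a-time matching-pennies block lacks, and it is the idea you would need to import to make your plan go through.
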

\begin{proof} We construct a super-polynomially large family of $n \times n$ games with the property that every two games in the family do not share an $\epsilon$-Nash equilibrium. This quickly leads to the proof of the theorem.

Our construction is based on a construction by Alth\"ofer~\cite{Alt}, except that we need to pay more attention to ensure that the $\epsilon$-Nash equilibria of the games we construct are ``close'' to the exact Nash equilibria. For  $\ell$ even and $n={\ell \choose \ell/2}$, we define
a family of $n\times n$ two-player games $\G_S=(R_S,C_S)$, indexed
by all subsets $S\subseteq [n]$ with $|S|=\ell$. Letting $\{ S_1, S_2, \ldots, S_n \}$ be the set of all subsets of $S$ with cardinality $\ell /2$, we imagine that column $j$ of the game $\G_S$ corresponds to subset $S_j$. Then, for every $j$, we fill column $j$ of the payoff matrices $R_S$ and $C_S$ as follows:
\begin{itemize}
\item for all $i \notin S$, $R_{S,ij} = -1$ and $C_{S,ij} = 1$;

\item for all $i \in S_j$, $R_{S,ij} = 1$ and $C_{S,ij} = 0$; and

\item for all $i \in S \setminus S_j$, $R_{S,ij} = 0$ and $C_{S,ij} = 1$.

\end{itemize}
In other words, our construction has two components: In the first component (defined by the rows labeled with the elements of $S$), the game is $1$-sum, whereas in the second (corresponding to the complement of $S$) the game is $0$-sum with the row player always getting payoff of $-1$ and the column player always getting payoff of $1$. The payoffs of the first component are more balanced in the following way: as we said, every column corresponds to a subset $S_1\ldots,S_n$ of $S$ of cardinality $\ell/2$; if the column player chooses column $j$, then the row player gets $1$ for choosing a row corresponding to an element of $S_j$ and $0$ for a row corresponding to an element in $S\setminus S_j$.\arxiv{ See Figure~\ref{fig: the payoff matrix RS} of Appendix \ref{appendix:lower bound for bimatrix} for an illustration of $R_S$ for the case $n=6$, $\ell=4$, $S=\{1,2,3,4\}$.}

Lemma~\ref{lem: claim about the properties of strategy of the row player at eps-equilibrium}\camera{ (see proof in the full version of the paper)} provides the following characterization of the approximate equilibria of the game $\G_S$: in any $\epsilon$-Nash equilibrium $(x,y)$, the strategy $x$ of the row player must have $\ell_1$ distance at most $8\epsilon$ from $u_S$ ---the uniform distribution over the set $S$. That is, in all approximate equilibria of the game the strategy of the row player must be close to the uniform distribution over $S$. Formally,

\begin{lemma} \label{lem: claim about the properties of strategy of the row player at eps-equilibrium}
Let $\epsilon <1$. If $(x,y)$ is an $\epsilon$-Nash equilibrium of the game $\G_S$, where $x$ is the mixed strategy of the row player and $y$ that of the column player, then the following properties are satisfied by $x$:
\begin{enumerate}
\item $x_i = 0$, for all $i \notin S$;

\item $\ell_1(x, u_S) \le 8 \epsilon$, where $u_S$ is the uniform distribution over $S$, and $\ell_1(x, u_S)$ represents the $\ell_1$ distance between distributions $x$, $u_S$.
\end{enumerate}
\end{lemma}

The proof of the first assertion is straightforward: the row player will not assign any probability mass to the rows which give her $-1$, since any row in $S$ will guarantee her at least $0$. Since all the activity happens then in the first component of the game, which is $1$-sum, an averaging argument implies that both players' payoff is about $1/2$ at equilibrium. Observe further that, for a given mixed strategy of the row player, the strategy of the column player that guarantees her the highest payoff is the subset containing the $\ell/2$ elements of $S$ to which the row player assigns the smallest probability mass. Hence, if the probability distribution of the row player were far from uniform, then, contrary to what we argued, the corresponding payoff for the column player would be larger than $1/2$---this is established via a delicate geometric argument for $\ell_1$ distances of probability measures.\arxiv{ See Lemma~\ref{lem: bound on l1 from bound on the sum of the half of them} in Appendix~\ref{appendix:lower bound for bimatrix}.}

Suppose now that there is an oblivious PTAS for the Nash equilibrium, that is, a distribution $D$ over pairs of mixed strategies such that, for any game $\G_S$, the probability that an $\epsilon$-approximate Nash equilibrium is sampled is inverse polynomial in $n$. Let us  consider the probability distribution $D_R$ induced by $D$ on the mixed strategies of the row player and denote by $B_S$ the $\ell_1$ ball of radius $8 \epsilon$ around $u_S$. Lemma~\ref{lem: claim about the properties of strategy of the row player at eps-equilibrium} implies that $D_R$ should be assigning probability mass at least inverse polynomial in $n$ to each ball $B_S$, $S \subseteq [n]$, $|S| = \ell$. This is impossible, since by the following claim there is a super-polynomial number of disjoint such balls. The proof of the claim is via a counting argument\camera{ given in the full version of this paper}.\arxiv{ See Appendix~\ref{appendix:lower bound for bimatrix}.}
\begin{claim}\label{claim:superpolynomially many disjoint balls}
There is a family of $\Omega\left(
n^{(.8-34\epsilon)\log_2 n}\right)$ disjoint such balls.
\end{claim}\end{proof}

The proof of Theorem~\ref{thm:no oblivious PTAS} implies in particular that any oblivious PTAS for general two-player games requires expected running time of at least $\Omega\left(
n^{(.8-34\epsilon)\log_2 n}\right)$. Compare this bound to the $n^{O(\log n/\epsilon^2)}$ upper bound obtained by Lipton, Markakis and Mehta~\cite{LMM}.

\section{A Lower Bound for Anonymous Games}
Recall the definition of anonymous games from Section~\ref{sec:prel}.  In
\cite{DP: anonymous 1} we give a PTAS for anonymous games with two strategies, running in time $n^{O(({1/
\epsilon})^2)}$,  and in
\cite{D: anonymous 3} a more efficient one, with running time $n^{O(1)} \cdot (1/\epsilon)^{O(({1/
\epsilon})^2)}$. (In
\cite{DP: anonymous 2} we also give a much more sophisticated PTAS
for anonymous games with more than two strategies).  All these
PTAS's have $1\over \epsilon$ in the exponent of the running time, and they work even if
there is a fixed number $t$ of types (in which case $t$ multiplies
the exponent). Furthermore, it turns out that all of these
algorithms are {\em oblivious}, in a specialized sense appropriate
for anonymous games defined next.

{\em An oblivious $\epsilon$-approximation algorithm for
anonymous games with $n$ players} is defined in terms of a
distribution, indexed by $n$,  on {\em unordered
$n$-tuples of mixed strategies}.\footnote{For ordered sets of mixed strategies, the lower bound we are about to show becomes trivial and uninteresting.} The algorithm samples from this
distribution, and for each $\{p_1,\ldots, p_n\}$ sampled, it
determines whether there is an assignment of these probabilities to
the $n$ players such that the resulting strategy profile (with each
player playing strategy 1 with the assigned probability) is an
$\epsilon$-approximate Nash equilibrium; this latter test can be
carried out by max-flow techniques, see, e.g.,  \cite{DP: anonymous
1}.  The expected running time of this approximation algorithm is then the
inverse of its probability of success.

We show the following result, implying that any oblivious $\epsilon$-approximation algorithm for anonymous games whose expected running time is polynomial in the number of players must have expected running time exponential in $({1 \over \epsilon})^{1/3}$. Hence, our PTAS from~\cite{D: anonymous 3} is essentially optimal among oblivious PTAS's.

\begin{theorem} \label{theorem: no subexponential oblivious PTAS}
For any constant $c\ge 0$, $\epsilon <1$, no oblivious $\epsilon$-approximation algorithm for anonymous games with $2$ strategies and $3$ player types has probability of success larger than $n^{-c} \cdot 2^{-\Omega(1 /
\epsilon^{1/3})}$.
\end{theorem}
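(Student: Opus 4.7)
Following the blueprint of Theorem~\ref{thm:no oblivious PTAS}, I would build a family $\{G_\theta\}_{\theta \in \Theta}$ of 3-type, 2-strategy anonymous games with $|\Theta| = 2^{\Omega(1/\epsilon^{1/3})}$ such that the collections of unordered $n$-multisets of mixed strategies that can be assigned to players to form an $\epsilon$-Nash equilibrium are pairwise disjoint across $\theta$. If $D$ is the distribution sampled by a supposed oblivious algorithm, $D$ must place probability $\ge n^{-c}$ on each of these $|\Theta|$ disjoint regions in order to succeed on every game; the total mass is at most $1$, so $n^{-c}\cdot|\Theta|\le 1$, which rearranges to the claimed $n^{-c}\cdot 2^{-\Omega(1/\epsilon^{1/3})}$ bound on the success probability.

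\noindent\textbf{Step 1: fingerprints.} When a multiset $\{p_1,\dots,p_n\}$ is assigned to the players of a 3-type anonymous game, utilities ``see'' the assignment only through the joint distribution of the triple $(k_1,k_2,k_3)$ of counts of each type playing strategy $1$ --- a joint Poisson-binomial that I will call the \emph{fingerprint} of the assignment. I would recast the $\epsilon$-Nash condition purely in terms of closeness, in total variation, of such a fingerprint to a target triple determined by $G_\theta$; this reduces the disjointness task to showing that the admissible fingerprints of different $G_\theta$'s lie in disjoint neighborhoods.

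\noindent\textbf{Step 2: many games with well-separated fingerprints.} Choose $d=\Theta(1/\epsilon^{1/3})$ and index $\Theta$ by $d$-subsets of $\{1,\dots,2d\}$; there are $\binom{2d}{d}=2^{\Omega(d)}=2^{\Omega(1/\epsilon^{1/3})}$ of them. For each such $S$, I would take the utilities of $G_S$ to be low-degree polynomials in $(k_1,k_2,k_3)$ with coefficients encoding $S$, so that the indifference conditions for the three types jointly pin down a prescribed tuple of $O(d)$ low moments of the three Poisson-binomial marginals --- the target fingerprint for $G_S$. Using a converse to the moment-matching lemma advertised in the introduction (two Poisson-binomials that \emph{differ} in their first $O(\log(1/\epsilon))$ moments can be $\Omega(\epsilon)$-separated in total variation), I would argue that a generic choice of target moments realizes $\binom{2d}{d}$ pairwise $\Omega(\epsilon)$-far targets. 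The cube-root exponent emerges from balancing the ``moment budget'' supplied by three types against the $\epsilon$-resolution at which the targets can be separated.

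\noindent\textbf{Main obstacle.} The crux is establishing \emph{rigidity}: that every $\epsilon$-Nash equilibrium of $G_S$, under any assignment of the multiset to players, has fingerprint within $O(\epsilon)$ of $S$'s target. The bimatrix analogue in Theorem~\ref{thm:no oblivious PTAS} inherited rigidity almost for free from the zero-sum structure of the construction; here, by contrast, one must run a careful sensitivity analysis of the polynomial indifference conditions to show that any substantial mismatch in a low moment incentivizes some type to deviate by more than $\epsilon$. One must additionally ensure that rigidity survives the adversarial assignment of the multiset to players that the oblivious tester is allowed to make --- an assignment that scrambles which probabilities go to which type could in principle rescue a bad $S$ --- which is what forces the utility design to use all three types in concert rather than one at a time, and is where the specific count ``three'' enters the argument.
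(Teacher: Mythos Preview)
Your high-level counting framework (build $2^{\Omega(1/\epsilon^{1/3})}$ games whose $\epsilon$-equilibrium sets, as unordered multisets, are pairwise disjoint, then pigeonhole against $D$) matches the paper. But your proposed \emph{mechanism} for rigidity is different from the paper's and contains a real gap.

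The paper does not use moments or ``fingerprints'' for the lower bound at all; Theorem~\ref{theorem:binomial appx theorem} is invoked only for the \emph{upper} bound in Section~5. For the lower bound, the paper prescribes an \emph{individual} target probability $p_i$ for each of $k$ players of type~A, and uses the remaining two types B and C --- each consisting of a \emph{single} player --- purely as enforcers. Player~B is rewarded for playing~$1$ in proportion to $\sum_j q_j - \mu$ (where $\mu=\sum_j p_j$), and player~C in proportion to $\mu - \sum_j q_j$; this pins $\sum_j q_j$ to within $O(\epsilon k)$ of $\mu$ in any $\epsilon$-equilibrium. Each type-A player~$i$'s payoff then compares the realized count $t_{A,-i}$ to the constant $\mu_{-i}$, which (together with B,C's threat) forces $q_i \in [p_i - 7k^2\epsilon,\, p_i + 7k^2\epsilon]$. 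Rigidity is thus obtained player-by-player through linear payoffs, not via moments. The $1/3$ exponent arises because the slack is $O(k^2\epsilon)$ while the grid spacing among target $p_i$'s is $\Theta(1/k)$; separation requires $k^2\epsilon \ll 1/k$, i.e.\ $k=\Theta(\epsilon^{-1/3})$, and the count of distinct multisets of $k$ values on a $\Theta(k)$-point grid is $\binom{\Theta(k)+k-1}{k}=2^{\Omega(k)}$.

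The concrete gap in your route is the ``converse to the moment-matching lemma.'' No such statement appears in the paper, and it is false in the form you need: two Poisson-binomials on $\{0,\dots,n\}$ can differ in a low moment by a fixed amount and yet be $o(\epsilon)$-close in total variation (the $\ell$-th moment is $O(n^\ell)$-Lipschitz in TV, not the other way around). Saying that polynomial utilities ``pin down $O(d)$ moments'' does not, by itself, yield rigidity: you would still have to show that any moment mismatch of the size your construction produces translates into $>\epsilon$ regret for some player, and that is exactly the sensitivity analysis you flag as the main obstacle --- the paper sidesteps it entirely with the enforcer construction. Finally, your reading of why three types are needed is off: B and C are not there to enlarge a ``moment budget'' but to break the symmetry of anonymity just enough that the sum $\sum_j q_j$ can be policed; nothing about higher moments enters the lower-bound argument.
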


We only sketch the proof next and postpone the details for 
Appendix~\ref{appendix:lower bound for anonymous}. We first show the following\arxiv{ (see Theorem~\ref{theorem: anonymous games with unique equilibria up to plus minus small} in Appendix~\ref{appendix:lower bound for anonymous:games with unique equilibria})}: given any ordered $n$-tuple $(p_1,\ldots,p_n)$ of probabilities, we can construct a normalized anonymous game with $n$ players of type A, and two more players of their own type, such that in any $\epsilon$-Nash equilibrium of the game the $i$-th player of type A plays
strategy 1 with probability very close (depending on $\epsilon$ and $n$) to $p_i$. To obtain this game, we need to understand how to exploit the difference in the payoff functions of the players of type A to enforce different behaviors at equilibrium, despite the fact that in all other aspects of the game the players of group A are indistinguishable.

The construction is based on the following idea: For all $i$, let us denote by $\mu_{-i}:=\sum_{j \neq i}p_j$ the {\em target} expected number of type-A players different than $i$ who play strategy $1$; and let us give this payoff to player $i$ if she plays strategy $0$, regardless of what the other players are doing. If $i$ chooses $1$, we give her payoff $t$ where $t$ is the number of players different than $i$ who play $1$. By setting the payoffs in this way we ensure that $(p_1,\ldots,p_n)$ is in fact an equilibrium, since for every player the payoff she gets from strategy $0$ matches the expected payoff she gets from strategy $1$. However,  enforcing that $(p_1,\ldots,p_n)$ is also the unique equilibrium is a more challenging task; and to do this we need to include two other players of their own type: we use these players to ensure that the sum of the mixed stategies of the players of type A matches $\sum p_i$ at equilibrium, so that a player $i$ deviating from her prescribed strategy $p_i$ is pushed back towards $p_i$.
We show how this can be done in Appendix~\ref{appendix:lower bound for anonymous:games with unique equilibria}. We also provide guarantees for the $\epsilon$-Nash equilibria of the resulting game. 

The construction outlined above enables us to define a family of $2^{\Omega(1 / \epsilon^{1/3})}$ anonymous games with the property that no two games in the family share an $\epsilon$-Nash equilibrium, even as an unordered set of mixed strategies\arxiv{ (Claims~\ref{claim:lower bound: no two games share an equilibrium} and \ref{claim:lower bound:many sets})}. Then, by an averaging argument, we can deduce that for any oblivious algorithm there is a game in the ensemble for which the probability of success is at most $2^{-\Omega(1 / \epsilon^{1/3})}$. It is important for our construction to work that the anonymous game defined for a given collection of probability values $(p_1,p_2,\ldots,p_n)$ does not deviate too much from the prescribed set of mixed strategies $p_1,p_2,\ldots,p_n$ even in an $\epsilon$-Nash equilibrium. The bound of $2^{\Omega(1 / \epsilon^{1/3})}$ emerges from a quantification of this deviation as a function of $\epsilon$. The proof of Theorem~\ref{theorem: no subexponential oblivious PTAS} is given in Appendix~\ref{appendix:lower bound for anonymous:proof of the lower bound}.

\begin{remark}
We can show an equivalent of Theorem~\ref{theorem: no subexponential oblivious PTAS} for oblivious $\epsilon$-approximation algorithms for anonymous games with $2$-player types and $3$ strategies per player. The details are omitted.
\end{remark}

\section{A Quasi-polynomial PTAS}

We circumvent the lower bound of the previous section by
providing a PTAS for anonymous games with running time polynomial in the number of players
$n$ times a factor of $({1\over \epsilon})^{O(\log^2 {1\over\epsilon})}$.  The
PTAS is, of course, non-oblivious, and in fact in the following
interesting way:  Instead of enumerating a fixed set of
unordered collections of probability values, we enumerate a fixed set of {\em  $\log(1/\epsilon)$-tuples, representing the first $\log(1/\epsilon)$ moments of these probability values.} We can think of the these moments as {\em more succinct aggregates} of mixed strategy profiles than the unordered collections of strategies considered in~\cite{D: anonymous 3,DP: anonymous 1}, since several of these collections may share the same moments. To put the idea into context, let us recall the following theorem.
\begin{theorem}[\cite{D: anonymous 3, D: anonymous 3 full}] \label{theorem: structural result for approximate equilibria}
For every $\epsilon >0$, there exists some integer $k=O(1/\epsilon)$ such
that for any $n$-player $2$-strategy anonymous game with payoffs in
$[-1,1]$ there is an $\epsilon$-Nash equilibrium such that
\begin{enumerate}
\item either at most $k^3$ players randomize, and their mixed strategies are integer multiples of $1/k^2$; \label{item in structural result: a few players randomize}
\item or all players who randomize use the same mixed strategy, and this strategy is an integer multiple of $\frac{1}{kn}$.\label{item in structural result: many players randomize}
\end{enumerate}
\end{theorem}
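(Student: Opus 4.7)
The plan is to start from an exact Nash equilibrium $(p_1^*,\ldots,p_n^*)$, which exists by Nash's theorem, and round each $p_i^*$ into one of the two prescribed discrete forms while preserving the $\epsilon$-Nash property. The key leverage is anonymity: each player $i$'s expected utility under any profile depends only on the law of the Poisson Binomial random variable $X_{-i}=\sum_{j\ne i}\mathrm{Ber}(p_j)$ counting how many of the other players play strategy $1$. Since payoffs lie in $[-1,1]$, if the total-variation distance between the original and rounded $X_{-i}$ is $O(\epsilon)$ for every $i$, then the rounded profile is an $\epsilon$-Nash equilibrium. This reduces the theorem to a probabilistic statement: one must approximate an arbitrary Poisson Binomial Distribution by one whose parameters have the structure dictated by conditions 1 and 2.

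Next I would classify each player as \emph{pure-like} if $p_i^*\in [0,1/k)\cup(1-1/k,1]$, and as an \emph{active mixer} otherwise, letting $M$ denote the set of active mixers and $m:=|M|$. Every pure-like player's probability is rounded to the nearer of $0$ or $1$. The case split is then on $m$. In Case 1 ($m\le k^3$), I round each active mixer's probability to the nearest multiple of $1/k^2$; the number of randomizing players is then at most $m\le k^3$, with probabilities in $\{1/k^2,2/k^2,\ldots,1-1/k^2\}$, matching condition~1. In Case 2 ($m>k^3$), set $\bar p := \lfloor \bar p^*\cdot kn\rfloor/(kn)$ where $\bar p^*=\frac{1}{m}\sum_{i\in M}p_i^*$, and assign every active mixer the common mixed strategy $\bar p$; since $\bar p$ is by construction a multiple of $1/(kn)$, this matches condition~2.

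The main obstacle is the quantitative TV bound, ensuring that each rounding stage perturbs $X_{-i}$ by at most $O(\epsilon)$. For the pure-like rounding, each $p_i^*$ rounded to $0$ or $1$ contributes at most $\min(p_i^*,1-p_i^*)\le 1/k$ to a first-moment discrepancy, and a Poisson Binomial stability inequality (of the form $d_{TV}(\mathrm{PBD}(p),\mathrm{PBD}(q))\lesssim \sum_i|p_i-q_i|/\sigma$, valid whenever sufficient variance $\sigma$ is present in the unchanged coordinates) prevents these contributions from accumulating linearly. The Case~1 perturbation is $O(1/k^2)$ per active mixer over at most $k^3$ players, giving TV error $O(1/k)$ by the same inequality. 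The most delicate step is Case~2: one must show that collapsing $m>k^3$ distinct active probabilities to their common mean $\bar p$ costs only $O(\epsilon)$ in TV, despite each individual swap changing a Bernoulli parameter by $\Theta(1)$. This requires a Poisson Binomial \emph{swap lemma}---essentially a quantitative local central limit theorem---exploiting the $\Omega(k^2)$ variance accumulated in the active-mixer block, and is the probabilistic heart of the argument. Choosing $k=\Theta(1/\epsilon)$ then makes all the $O(1/k)$ error contributions, together with the $O(1/(kn))$ drift from discretizing $\bar p$, fit inside the $\epsilon$-budget, completing the proof.
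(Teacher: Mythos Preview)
This theorem is not proved in the present paper; it is quoted from \cite{D: anonymous 3, D: anonymous 3 full} and invoked as a black box. The only information the paper supplies about the original argument is a one-sentence hint that the proof ``relies on the following intuitive fact about sums of indicator random variables: if two sums of independent indicators have close means and variances, then their total variation distance should be small.'' So there is no detailed argument here to compare against, and at the level of that hint your overall framework---start from an exact equilibrium, round, and control the total-variation distance of the induced Poisson-binomial sums $X_{-i}$---is consistent with what the cited proof does.

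That said, your pure-like rounding step has a genuine gap. The stability inequality you invoke, $d_{\mathrm{TV}}(\mathrm{PBD}(p),\mathrm{PBD}(q))\lesssim\sum_i|p_i-q_i|/\sigma$, is informative only when the aggregate perturbation $\sum_i|p_i-q_i|$ is $O(\sigma)$, and your cutoff does not guarantee this. Take the instance where every player has $p_i^*=1/(2k)$: all players are pure-like under your definition, there are no active mixers (so the $\sigma$ from ``unchanged coordinates'' is zero), and rounding everyone to $0$ shifts the mean of each $X_{-i}$ by $(n-1)/(2k)$, driving the TV distance to $1$ for large $n$; your payoff-perturbation argument then cannot conclude anything about the rounded profile. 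More generally, in either case the pure-like block can contribute up to $\Theta(n/k)$ to $\sum_i|p_i-q_i|$, while any $\sigma$ you can appeal to is at most $O(\sqrt{n/k}+k^{3/2})$, so the ratio diverges with $n$. The argument in the cited work does \emph{not} round each near-pure player independently to $0$ or $1$; instead it treats the residual mass $\sum_{i\ \text{pure-like}}\min(p_i^*,1-p_i^*)$ in aggregate and reabsorbs it via a Poisson/translated-Poisson (equivalently Binomial) approximation---essentially the same mechanism you reserve for the Case~2 ``swap lemma,'' but deployed already at the stage where you currently rely on the unjustified stability bound.
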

This structural result can be turned into an oblivious PTAS using max-flow arguments\arxiv{ (see Appendix~\ref{appendix: algorithm from wine} for details)}\camera{ (see~\cite{D: anonymous 3})}. At its heart the proof of the theorem relies on the following intuitive fact about sums of indicator random variables: {\em If two sums of independent indicators have close means and variances, then their total variation distance should be small.} The way this fact becomes relevant to anonymous games is that, if there are $2$ strategies per player, then the mixed strategy of a player can be described by an indicator random variable; and as it turns out, if we replace one set of indicators by another, the change in payoff that every player will experience is bounded by the total variation distance between the sum of the indicators before and after the change. Nevertheless, the bound obtained by approximating the first two moments of the sum of the Nash equilibrium strategies is weak, and the space of unordered sets of probability values that we need to search over becomes exponential in $1/\epsilon$.

To obtain an exponential pruning of the search space, we turn to higher moments of the Nash equilibrium. We show the following theorem, which provides a rather strong quantification of how the total variation distance between two sums of indicators depends on the number of their first moments that are equal.

\begin{theorem} \label{theorem:binomial appx theorem}
Let $\mathcal{P}:=(p_i )_{i=1}^n \in (0,1/2]^n$ and
$\mathcal{Q}:=(q_i)_{i=1}^n \in (0,1/2]^n$ be two collections of
probability values in $(0,1/2]$. Let also
$\mathcal{X}:=(X_i)_{i=1}^n$ and $\mathcal{Y}:=(Y_i)_{i=1}^n$ be two
collections of independent indicators with $\E[X_i]=p_i$ and
$\E[Y_i]=q_i$, for all $i \in [n]$. If for some $d \in [n]$ the
following condition is satisfied:
$$(C_d):~~\sum_{i=1}^n p_i^{\ell} = \sum_{i=1}^n q_i^{\ell},~~~\text{for all } \ell=1,\ldots,d,$$
\begin{align}
\text{then}~~\left|\left|\sum_{i}{X_i}~;~\sum_{i}{Y_i} \right|\right|\le
20(d+1)^{1/4} 2^{-(d+1)/2}.~~\label{eq: target equation
X,Y}\end{align}
\end{theorem}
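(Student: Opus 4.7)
My plan is to pass from the hypothesis on power sums to an estimate on characteristic functions, and then Fourier-invert back to $\ell_1$. Write $\phi_X(t) = \E[e^{it \sum_i X_i}] = \prod_i (1 + p_i(e^{it}-1))$ and analogously $\phi_Y$. Because $p_i, q_i \le 1/2$ forces $|p_i(e^{it}-1)| \le 2p_i \le 1$, each logarithm admits the Taylor expansion
\[
\log \phi_X(t) = \sum_{k\ge 1} \frac{(-1)^{k-1}}{k} (e^{it}-1)^k\, s_k^X, \qquad s_k^X := \sum_i p_i^k,
\]
and similarly for $Y$. Under $(C_d)$ the first $d$ terms cancel upon subtraction, so
\[
L(t) := \log\phi_X(t) - \log\phi_Y(t) = \sum_{k \ge d+1} \frac{(-1)^{k-1}}{k}(e^{it}-1)^k (s_k^X - s_k^Y),
\]
which vanishes at $t=0$ to order $d+1$. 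Writing $\phi_X - \phi_Y = \phi_Y(e^L - 1)$ then separates the two effects to be estimated.

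For the first effect, $|L(t)|$ is small near $t=0$: using $|e^{it}-1| = 2|\sin(t/2)|$ together with the a priori bound $s_k^X + s_k^Y \le 4\mu \cdot 2^{-k}$ (which follows from $p_i^k \le p_i \cdot 2^{-(k-1)}$, where $\mu := \sum_i p_i = \sum_i q_i$ since $s_1^X = s_1^Y$), the series is dominated by a multiple of $\mu\,|\sin(t/2)|^{d+1}/(d+1)$; hence $|e^{L(t)}-1| \le 2|L(t)|$ whenever $|L| \le 1$. For the second effect, the prefactor $\phi_Y$ decays away from zero:
\[
|\phi_Y(t)|^2 = \prod_i \bigl(1 - 2q_i(1-q_i)(1-\cos t)\bigr) \le \exp\bigl(-\mu(1-\cos t)\bigr),
\]
where I used $1-q_i \ge 1/2$. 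The two estimates are complementary: $|L|$ is only small where $|t| \ll \mu^{-1/(d+1)}$, while $|\phi_Y|$ is only appreciable where $|t| \lesssim 1/\sqrt{\mu}$.

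Total variation is recovered by Fourier inversion. Applying $f_X(m) - f_Y(m) = \frac{1}{2\pi}\int_{-\pi}^{\pi} e^{-imt}(\phi_X - \phi_Y)\,dt$ and Parseval bounds $\sum_m |f_X(m) - f_Y(m)|^2$ by the $L^2$-integral of $|\phi_X - \phi_Y|$. To pass from $\ell_2$ to $\ell_1$ without incurring a $\sqrt{n+1}$ penalty, I restrict attention to an ``effective support'' interval of length $O(\sqrt{\mu \log(1/\delta)})$ around the common mean $\mu$, which is justified by Chernoff bounds on both $\sum_i X_i$ and $\sum_i Y_i$ and costs only $O(\delta)$ in total variation from the tails. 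Splitting the Fourier integral at $|t| \asymp 1/\sqrt{\mu}$, inserting the pointwise bounds above, and optimizing the free parameters should yield the target $20(d+1)^{1/4} 2^{-(d+1)/2}$.

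The main obstacle I anticipate is keeping the final bound independent of $n$---equivalently of $\mu$, which can be as large as $n/2$. Naive bounds on $|L|$, or on the elementary-symmetric differences $|e_k(p) - e_k(q)|$, produce prefactors that blow up polynomially in $\mu$; the crucial point is that the two regions of $t$ where $|L|$ and $|\phi_Y|$ are respectively non-negligible only overlap in a delicate window whose size scales with $\mu$ in such a way that the $\mu$-factors cancel after integration. A separate, easier argument will be needed in the low-intensity regime $\mu \lesssim d$, where $L(t)$ is uniformly small and a direct series estimate suffices. The precise form of the final bound---the exponent $1/2$ in $2^{-(d+1)/2}$ together with the $(d+1)^{1/4}$ factor---is what one expects to emerge from this $\ell_2$-to-$\ell_1$ passage combined with the Parseval identity.
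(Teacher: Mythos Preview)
Your approach is genuinely different from the paper's. The paper does not touch characteristic functions; instead it invokes Roos's Krawtchouk-type expansion of the Poisson-binomial law around the binomial $\mathcal{B}_{n,\bar p}$ with $\bar p=\tfrac{1}{n}\sum_i p_i$:
\[
\Pr[X=m]=\sum_{\ell=0}^n \alpha_\ell(\mathcal{P},\bar p)\,\delta^\ell\mathcal{B}_{n,\bar p}(m),
\]
where $\alpha_\ell$ is a degree-$\ell$ symmetric polynomial in the $p_i$'s. Under $(C_d)$ the first $d+1$ coefficients agree between $\mathcal{P}$ and $\mathcal{Q}$ (the same Newton-identity observation you use), so the difference is a tail $\ell\ge d+1$. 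Roos's own tail estimate then bounds this by $\theta^{(d+1)/2}$ up to an explicit constant, where $\theta=\sum_i(p_i-\bar p)^2/(n\bar p(1-\bar p))$; the hypothesis $p_i\in(0,1/2]$ gives $\theta\le |p_{\max}-p_{\min}|\le 1/2$, and the stated inequality falls out with no further analysis. The exponent $2^{-(d+1)/2}$ and the factor $(d+1)^{1/4}$ are literally read off from Roos's bound.

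Your route is plausible for producing \emph{some} bound of the form $C\rho^{-d}$, but two genuine gaps remain. First, the estimate on $|L(t)|$ is not what you claim: the tail $\sum_{k\ge d+1}|\sin(t/2)|^k/k$ picks up a $1/(1-|\sin(t/2)|)$ factor and in fact diverges at $t=\pi$ (where, if some $p_i=1/2$, $\phi_X$ vanishes and $\log\phi_X$ is undefined), so the region near $t=\pi$ needs separate treatment beyond the split at $|t|\asymp 1/\sqrt\mu$. Second, and more serious, the claim that the \emph{specific} constant $20(d+1)^{1/4}2^{-(d+1)/2}$ is ``what one expects to emerge'' is unsupported. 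In the paper's argument that constant is exactly $\theta^{(d+1)/2}$ with $\theta\le 1/2$; in your decomposition no analogue of $\theta$ appears, since the $2^{-k}$ from $s_k\le 2\mu\,2^{-k}$ is cancelled by the $2^k$ in $|e^{it}-1|^k$. Carrying through your $\ell_2$-to-$\ell_1$ passage yields integrals of Gamma-function type whose optimization over $\mu$ has no reason to land on the paper's constant---you would effectively be rederiving Roos's tail bound from scratch, which is substantially more work than the outline suggests.
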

\noindent Condition $(C_d)$ considers the power sums of the expectations of the indicator random variables. Using the theory of symmetric polynomials we can show that $(C_d)$ is equivalent to the following condition on the moments of the sums of the indicators\camera{ (for the proof see the full version of the paper)}\arxiv{ (for the proof see Proposition~\ref{proposition:variable moments to probability moments} in Appendix~\ref{appendix:binomial approximation theorem})}:
$$(V_d):~~\E\left[\left(\sum_{i=1}^n X_i\right)^{\ell}\right] = \E\left[\left(\sum_{i=1}^n Y_i\right)^{\ell}\right],~\text{for all } \ell \in [d].$$
Theorem~\ref{theorem:binomial appx theorem} provides then the following strong approximation guarantee for sums of indicator random variables, that we think should be important in other settings. Our result is related to the classical moment method in probability theory~\cite{durett}, but to our knowledge it is novel and significantly stronger than known results:

\medskip\begin{minipage}{15cm}{\em If two sums of independent indicators with expectations bounded by 1/2 have equal first $d$ moments, then their total variation distance is $2^{-\Omega(d)}$.} \end{minipage}

\medskip \noindent It is important to note that our bound in~\eqref{eq: target equation X,Y} does not rely on summing up a large number of indicators $n$. This is quite critical since the previous techniques break down for small $n$'s---for large $n$'s Berry-Ess\'een type bounds are sufficient to obtain strong guarantees (this is the heart of the probabilistic results used in~\cite{DP: anonymous 1, D: anonymous 3}). 

The proof of Theorem~\ref{theorem:binomial appx theorem} (and its complement for probability values in $[1/2,1)$) is given in\camera{ the full version of this paper.}\arxiv{ Appendix~\ref{appendix:binomial approximation theorem}.} It proceeds by expressing the distribution of the sum of $n$ indicators, with expectations $p_1,\ldots,p_n$, as a weighted sum of the binomial distribution $\B(n,p)$ (with $p = \bar{p}= {\sum p_i / n}$) and its $n$ derivatives with respect to $p$, at the value $p=\bar{p}$ (these derivatives correspond to finite signed measures). It turns out that the coefficients of the first $d$ terms of this expansion are symmetric polynomials with respect to the probability values $p_1,\ldots,p_n$, of degree at most $d$; hence, from the theory of symmetric polynomials, each of these coefficients can be written as a function of the power-sum symmetric polynomials $\sum_i p_i^{\ell}$, $\ell=1,\ldots,d$~(see, e.g.,~\cite{Zolotarev}). So, if two sums of indicators satisfy Condition~$(C_d)$, the first terms cancel, and the total variation distance of the two sums depends only on the other terms of the expansion (those corresponding to higher derivatives of the binomial distribution). The proof is concluded by showing that the joint contribution of these terms is inverse polynomial in $2^{\Omega(d)}$.\arxiv{ (The details are given in Appendix~\ref{appendix:binomial approximation theorem}).}
%

Our algorithm, shown below, exploits the strong approximation guarantee of Theorem~\ref{theorem:binomial appx theorem} to improve upon the algorithm of~\cite{D: anonymous 3} in the case where only $O(1/\epsilon^3)$ players mix at the $\epsilon$-Nash equilibrium (this corresponds to case 1 in Theorem~\ref{theorem: structural result for approximate equilibria}). The complementary case (case 2 in Theorem~\ref{theorem: structural result for approximate equilibria}) can be treated easily in time polynomial in $n$ and $1/\epsilon$ by exhaustive search and max-flow arguments\arxiv{ (see Appendix~\ref{appendix: algorithm from wine}).}\camera{ (see~\cite{D: anonymous 3}).}

\medskip\noindent {\sc Algorithm Moment Search}\text{}\\
{\bf Input:}An anonymous game $\cal G$, the desired approximation
$\epsilon$.\\
{\bf Output:} An $\epsilon$-Nash equilibrium of $\cal G$ in which
all probabilities are integer multiples of $1\over k^2$, where
$k=\lceil{c\over \epsilon}\rceil$ and $c$ is universal (independent of $n$) constant, determined by Theorem~\ref{theorem: structural result for approximate equilibria}. For technical reasons that will be clear shortly, we choose a value for $k$ that is by a factor of $2$ larger than the value required by Theorem~\ref{theorem: structural result for approximate equilibria}; this is the value $k$ that guarantees an $\epsilon$/2-Nash equilibrium in multiples of $1/k^2$. Finally, we assume that we have already performed the search corresponding to Case 2 of Theorem~\ref{theorem: structural result for approximate equilibria} for this value of $k$ and we have not found an $\epsilon/2$-Nash equilibrium. So there must exist an $\epsilon/2$-Nash equilibrium in which at most $k^3$ players randomize in integer multiples of $1/k^2$.
\begin{enumerate}
\item Guess integers $t_0,t_1,t_s,t_b\leq n$, $t_s+t_b \le k^3$, where $t_0$ players
will play pure strategy $0$, $t_1$ will play pure strategy $1$,
$t_s$ will mix with probability $\leq {1\over 2}$, and $t_b = n -
t_0 - t_1 - t_s$ will mix with probability $>{1 \over 2}$.  (Note that we have to handle low and high
probabilities separately, because Theorem~\ref{theorem:binomial appx
theorem} only applies to indicators with expectations in $(0,1/2]$; we handle indicators with expectations in $(1/2,1)$ by taking their complement and employing Theorem \ref{theorem:binomial appx
theorem}\arxiv{---see Corollary~\ref{theorem:binomial appx theorem for large guys}}\camera{---see the full version for the precise details}.) \label{item in Moment Search: guessing how many small how many large}

\item For $d = \lceil 3 \log_2(320/\epsilon) \rceil$, guess $\mu_{1}$, $\mu_{2}$, $\ldots$, $\mu_d$, $\mu'_{1}$, $\mu'_{2}$, $\ldots$, $\mu'_d$, where, for all $\ell \in [d]$:
$$\mu_{\ell} \in \left\{ j \left(\frac{1}{k^2}\right)^{\ell}:~t_s \le j \leq t_s
\left(\frac{k^2}{2} \right)^{\ell} \right\},$$ and $$\mu'_{\ell} \in \left\{ j
\left(\frac{1}{k^2}\right)^{\ell}: t_b \left(\frac{k^2}{2}+1 \right)^{\ell}\le j \leq t_b (k^2 - 1)^{\ell} \right\}.$$ 

\smallskip For all $\ell$, $\mu_{\ell}$ represents the $\ell$-power sum of the mixed strategies  of the players who mix and choose strategies from the set $\{1/k^2,\ldots,1/2\}$. Similarly, $\mu'_{\ell}$ represents the $\ell$-power sum of the mixed strategies  of the players who mix and choose strategies from the set $\{1/2+1/k^2,\ldots,(k^2-1)/k^2\}$. {\bf Remark:} Whether there actually exist probability values $\pi_1,\ldots, \pi_{t_s} \in \{1/k^2,\ldots,1/2\}$ and $\theta_1,\ldots,\theta_{t_b} \in \{1/2+1/k^2,\ldots,(k^2-1)/k^2\}$ such that $\mu_{\ell}=\sum_{i=1}^{t_s} \pi_i^{\ell}$ and $\mu'_{\ell}=\sum_{i=1}^{t_b} \theta_i^{\ell}$, for all $\ell=1,2,\ldots,d$, will be determined later.  \label{item
in Moment Search: guessing moments}

\item For each player $i=1,\ldots, n$, find a subset $$\mathcal{S}_i
\subseteq \left\{0,
\frac{1}{k^2},\ldots,\frac{k^2-1}{k^2},1\right\}$$ of permitted
strategies for that player in an $\epsilon\over 2$-Nash
equilibrium, conditioned on the guesses in the previous steps. By this, we mean determining the answer to the following: ``Given our guesses for the aggregates $t_0$, $t_1$, $t_s$, $t_b$, $\mu_{\ell}$, $\mu'_{\ell}$, for all $\ell \in [d]$, what multiples of $1/k^2$ could player $i$ be playing in an $\epsilon/2$-Nash equilibrium?'' Our test exploits the anonymity of the game and uses Theorem \ref{theorem:binomial appx theorem} to achieve the following: 
\begin{itemize}

\item if a multiple of $1/k^2$ can be assigned to player $i$ and complemented by choices of multiples for the other players, so that the aggregate conditions are satisfied and player $i$ is at $3\epsilon/4$-best response (that is, she experiences at most $3\epsilon/4$ regret), then this multiple of $1/k^2$ is included in the set $\mathcal{S}_i$;

\item if, given a multiple of $1/k^2$ to player $i$, there exists no assignment of multiples to the other players so that the aggregate conditions are satisfied and player $i$ is at $3\epsilon/4$-best response, the multiple is rejected from set $\mathcal{S}_i$.
\end{itemize}
Observe that the value of $3\epsilon/4$ for the regret used in the classifier is intentionally chosen midway between $\epsilon/2$ and $\epsilon$. The reason for this value is that, if we only match the first $d$ moments of a mixed strategy profile, our estimation of the real regret in that strategy profile is distorted by an additive error of $\epsilon /4$ (coming from~\eqref{eq: target equation X,Y} and the choice of $d$). Hence, with a threshold at $3\epsilon/4$ we make sure that: a. we are not going to ``miss'' the $\epsilon/2$-Nash equilibrium (that we know exists in multiples of $1/k^2$ by virtue of our choice of a larger $k$), and b. {\em any} strategy profile that is consistent with the aggregate conditions and the sets $\mathcal{S}_i$ found in this step is going to have regret at most $3\epsilon/4+ \epsilon/4 =\epsilon$.\camera{ The fairly involved details of our test, and the way its analysis ties in with the search for an $\epsilon$-Nash equilibrium is given in the full version of this paper.}\arxiv{ The fairly involved details of our test are given in Appendix~\ref{appendix: classifier for moment search}, and the way its analysis ties in with the search for an $\epsilon$-Nash equilibrium is given in the proofs of Claims~\ref{claim: moment search never fails} and~\ref{claim: moment search always outputs equilibrium} of Appendix~\ref{appendix: analysis of moment search}.}

 \label{item in Moment Search: determining the sets Si}

\item Find an assignment of mixed strategies $v_1 \in \mathcal{S}_1$,
$\ldots$, $v_n \in \mathcal{S}_n$ to players, such that:\label{item
in Moment Search: check if there is an assignment from the Sis
implementing the moments}
\begin{itemize}
\item $t_0$ players are assigned value $0$ and $t_1$ players are assigned value $1$;
\item $t_s$ players are assigned a value in $(0,1/2]$ and $\sum_{i:v_i \in (0,1/2]} v_i^{\ell} = \mu_{\ell}$, for all $\ell \in [d]$;
\item $t_b$ players are assigned a value in $(1/2,1)$ and $\sum_{i:v_i \in (1/2,1)} v_i^{\ell} = \mu'_{\ell}$, for all $\ell \in [d]$.
\end{itemize}
Solving this assignment problem is non-trivial, but it can be done by dynamic programming in time $$O(n^3) \cdot \left(1\over \epsilon\right)^{O(\log^2(1/\epsilon))},$$ because the sets $\mathcal{S}_i$ are subsets of $\{0,1/k^2,\ldots,1\}$. The algorithm is given in the\camera{ full version of the paper.}\arxiv{ proof of Claim~\ref{claim: easy to solve moment equations} in Appendix~\ref{appendix: analysis of moment search}.}

\item  If an assignment is found, then the vector $(v_1,\ldots,v_n)$ constitutes an $\epsilon$-Nash equilibrium. \label{item in Moment Search: output a mixed strategy profile}
\end{enumerate}

\begin{theorem}
{\sc Moment Search} is a PTAS for $n$-player $2$-strategy
anonymous games with running time $U\cdot {\rm poly}(n) \cdot
(1/\epsilon)^{O( \log^2 (1/\epsilon))},$ where $U$ is the number of
bits required to represent a payoff value of the game. The algorithm
generalizes to a constant number of player types with the number of
types multiplying the exponent of the running time.
\end{theorem}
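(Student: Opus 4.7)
The plan is to establish (i) completeness — some guess of the algorithm yields a successful assignment, (ii) soundness — any assignment returned is an $\epsilon$-Nash equilibrium, and (iii) the claimed running time, then extend to $t$ player types.

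First I would invoke Theorem~\ref{theorem: structural result for approximate equilibria} with the doubled $k$ used by the algorithm. Case~2 of that theorem is handled by the pre-algorithm sweep over the common mixing probability (at most $kn$ candidates) combined with max-flow, in $\mathrm{poly}(n,1/\epsilon)$ time. So we may assume Case~1 holds: there exists an $\epsilon/2$-Nash equilibrium $v^{*}=(v_1^{*},\ldots,v_n^{*})$ with at most $k^3$ mixing players, all using multiples of $1/k^2$. Splitting its mixers by whether their probability lies in $(0,1/2]$ or $(1/2,1)$ (complementing the latter group to bring its indicators into $(0,1/2]$) yields counts $t_s^{*},t_b^{*}$ and power sums $\mu_\ell^{*},\mu_\ell'^{*}$ that appear among the guesses enumerated in Steps~1--2, since each such power sum is a rational with denominator $k^{2\ell}$ in the prescribed interval. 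Conditioned on these correct guesses, each $v_i^{*}$ is retained in $\mathcal{S}_i$ because the completion $v^{*}_{-i}$ certifies both the aggregate conditions and the (stronger) $\epsilon/2$-best-response property; hence $v^{*}$ is a feasible witness for the Step~4 dynamic program, which must output some assignment.

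For soundness, any profile $v$ returned by Step~5 matches $v^{*}$ in the counts $t_0,t_1,t_s,t_b$ and in the first $d=\lceil 3\log_2(320/\epsilon)\rceil$ power sums of the low-mixer and (complemented) high-mixer probabilities. Applying Theorem~\ref{theorem:binomial appx theorem} to the low mixers and independently to the complemented high mixers bounds the total variation distance between the distribution of the number of other players playing strategy~$1$ under $v$ and that under any other aggregate-consistent completion (including the partial completion used inside Step~3's classifier) by $20(d+1)^{1/4}2^{-(d+1)/2}\le \epsilon/16$ for our choice of $d$. By anonymity and $\|u_i\|_\infty\le 1$, this shifts any player's regret by at most $\epsilon/4$. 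The $3\epsilon/4$ threshold in Step~3 is precisely calibrated to absorb this $\pm\epsilon/4$ slippage: it is loose enough that $v^{*}$ survives the pruning (each $v_i^{*}$ is an $\epsilon/2$-best response under $v^{*}_{-i}$, hence a $3\epsilon/4$-best response under any moment-matching reference), and tight enough that $3\epsilon/4+\epsilon/4\le \epsilon$ certifies the actual regret of the output profile.

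For the running time, Step~1 contributes $O(n^4)$ guesses; Step~2 contributes $k^{O(d^2)}=(1/\epsilon)^{O(\log^2(1/\epsilon))}$ guesses, since each $\mu_\ell$ ranges over at most $t_s(k^2/2)^\ell\le k^{O(\ell)}$ values; and Step~4's dynamic program, whose state is the $d$-tuple of partial power sums of the assignment built so far (with only $k^{O(d)}$ distinct values), runs in $\mathrm{poly}(n)\cdot(1/\epsilon)^{O(\log^2(1/\epsilon))}$ time. The main obstacle I anticipate lies in engineering the Step~3 classifier: deciding, for each $(i,v)$, whether some completion of the partial assignment realizes the guessed aggregates while simultaneously making player~$i$ a $3\epsilon/4$-best responder requires another feasibility test on partial power sums, and soundness/completeness of this test is what channels Theorem~\ref{theorem:binomial appx theorem} into a per-player statement. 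Finally, for $t$ player types one enumerates the aggregates independently per type, which multiplies the exponent of $1/\epsilon$ by~$t$ while preserving the polynomial dependence on~$n$, yielding the stated bound.
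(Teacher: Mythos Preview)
Your approach is the paper's approach: completeness via the target $\epsilon/2$-equilibrium $v^*$ surviving Step~3, soundness via Theorem~\ref{theorem:binomial appx theorem} bounding the regret drift between the output profile and the classifier's reference distribution, and the running-time count over moment guesses and the Step~4 dynamic program. Two points need tightening.

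First, your soundness paragraph opens with a false (and unused) claim: you say the returned profile $v$ ``matches $v^{*}$'' in the aggregate counts and power sums. It need not. The algorithm searches over \emph{all} guesses and outputs the first assignment it finds; the guess that succeeds may differ from the aggregates of $v^{*}$. Soundness in the paper makes no reference to $v^{*}$ at all: one compares the output $v$ only to the completion the Step~3 classifier used for the \emph{same} guess, and Theorem~\ref{theorem:binomial appx theorem} gives the $\epsilon/4$ regret slack. Your second and third sentences already do this correctly; just delete the reference to $v^{*}$.

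Second, your anticipated obstacle in Step~3 slightly misdescribes what the classifier must do. It does \emph{not} decide whether some completion simultaneously realizes the aggregates and makes $i$ a $3\epsilon/4$-best responder. It merely finds \emph{any} completion realizing the residual moments (the same dynamic program as Step~4, restricted to the mixers), computes $i$'s regret against that single completion, and thresholds at $3\epsilon/4$. Theorem~\ref{theorem:binomial appx theorem} guarantees that all aggregate-consistent completions yield the same regret up to $\epsilon/4$, so the choice of completion is immaterial---this is precisely what lets the classifier be efficient. A minor related point: the algorithm tracks the power sums $\mu'_\ell$ of the high mixers' \emph{actual} probabilities, not their complements; complementation appears only inside the proof of Corollary~\ref{theorem:binomial appx theorem for large guys}, and matching the $\sum p_i^\ell$ automatically matches the $\sum (1-p_i)^\ell$ by binomial expansion.
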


\noindent{\bf Sketch:}  
Correctness follows from this observation:  The results in
\cite{D: anonymous 3} and the choice of $k$ guarantee that an $\epsilon\over 2$-approximate
Nash equilibrium in discretized probability values exists;
therefore, Step 3 will find non-empty ${\cal S}_i$'s for all players
(for some guesses in Steps 1 and 2, since in particular the $\epsilon/2$-Nash equilibrium will survive the tests of Step 3---by Theorem
\ref{theorem:binomial appx theorem} and the choice of $d$, at most $\epsilon/4$ accuracy is lost if the correct values for the moments are guessed); and thus Step 4 will find an $\epsilon$-approximate Nash equilibrium (another $\epsilon/4$ might be lost in this step). The full proof and the running time analysis are provided\arxiv{ in Appendix~\ref{appendix: analysis of moment search}.}\camera{ in the full version of this paper.}

\section{A Sparse $\epsilon$-cover for Sums of Indicators} \label{sec:cover}

A byproduct of our proof is showing the existence of a sparse (and efficiently computable) $\epsilon$-cover of the set of sums of independent indicators, under the total variation distance. To state our cover theorem, let ${\cal S}:= \{ \{X_i\}_i~|~X_1,\ldots,X_n~\text{are independent indicators} \}$. 
We show the following.

\begin{theorem} [Cover for sums of indicators] \label{thm: sparse cover theorem}
For all $\epsilon >0$, there exists a set ${\cal S}_{\epsilon} \subseteq {\cal S}$ such that
(i)
$|{\cal S}_{\epsilon}| \le n^3 \cdot O(1/\epsilon) + n \cdot \left({1 \over \epsilon}\right)^{O(\log^2{1/\epsilon})}$;
(ii) For every $\{X_i\}_i \in {\cal S}$ there exists some $\{Y_i\}_i  \in {\cal S}_{\epsilon}$ such that $d_{\rm TV}(\sum_i
X_i,\sum_i Y_i) \le \epsilon$; and
(iii) the set ${\cal S}_{\epsilon}$ can be constructed in time $O\left(n^3 \cdot O(1/\epsilon) + n\cdot \left({1 \over
\epsilon}\right)^{O(\log^2{1/\epsilon})}\right)$.
Moreover, if  $\{Y_i\}_i  \in {\cal S}_{\epsilon}$, then the collection $\{Y_i\}_i$ has one of the following forms, where
$k=k(\epsilon) = O(1/\epsilon)$ is a positive integer: 
\begin{itemize}
\item (Sparse Form) There is a value $ \ell \leq k^3=O(1/\epsilon^3)$
such that
for all $i \leq \ell$ we have $\E[{Y_i}] \in \left\{{1 \over k^2}, {2\over k^2},\ldots, {k^2-1 \over k^2 }\right\}$, and
for all $i >\ell $ we have $\E[{Y_i}] \in \{0,  1\}$.
\item ($k$-heavy Binomial Form) There is a value $\ell \in \{0,1,\dots,n\}$
and a value $q \in \left\{ {1 \over kn}, {2 \over kn},\ldots, {kn-1 \over kn}
\right\}$ such that
for all $i \leq \ell$ we have $\E[{Y_i}] = q$;
for all $i >\ell$ we have $\E[{Y_i}] \in \{0,  1\}$; and $\ell,q$ satisfy the bounds
$\ell q \ge k^2-{1\over k}$ and
$\ell q(1-q) \ge k^2- k-1-{3\over k}.$
\end{itemize}
\end{theorem}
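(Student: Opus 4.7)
The plan is to combine the structural dichotomy that underlies Theorem~\ref{theorem: structural result for approximate equilibria} with the moment-matching bound of Theorem~\ref{theorem:binomial appx theorem}. Given an arbitrary collection $\{X_i\}_i$ with parameters $p_i = \E[X_i]$, I would branch on the variance $\sigma^2 = \sum_i p_i(1-p_i)$, with threshold roughly $k^2$. In the large-variance regime $\sigma^2 \gtrsim k^2$, the goal is to approximate $\sum_i X_i$ in total variation by a translated binomial $t_1 + B(\ell,q)$ whose mean and variance match those of $\sum_i X_i$, with $\ell \le n$ and $q$ discretized to the grid $\{1/(kn),\ldots,(kn-1)/(kn)\}$; the conditions $\ell q \ge k^2 - 1/k$ and $\ell q(1-q) \ge k^2 - k - 1 - 3/k$ then fall out as consequences of the assumed lower bound on $\sigma^2$ after discretization. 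In the small-variance regime $\sigma^2 \lesssim k^2$, only $O(k^3)$ coordinates can carry nontrivial variance, and one should be able to arrive at a Sparse-Form approximation with nontrivial parameters drawn from $\{1/k^2,\ldots,(k^2-1)/k^2\}$ by an argument in the spirit of \cite{D: anonymous 3, D: anonymous 3 full}.

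To control the cover size, I would enumerate exhaustively. For the Binomial Form, iterating over $(\ell, q, t_0, t_1)$ with $\ell + t_0 + t_1 = n$ yields $n^3 \cdot O(1/\epsilon)$ entries, delivering the first term of the size bound. For the Sparse Form, direct enumeration over multisets in $\{1/k^2,\ldots,(k^2-1)/k^2\}^{\le k^3}$ is far too large, so I would invoke Theorem~\ref{theorem:binomial appx theorem} with $d = \Theta(\log(1/\epsilon))$: two Sparse-Form collections whose first $d$ power sums $\sum p_i^\ell$ agree produce sums within $\epsilon/2$ in total variation, handling the ranges $(0,1/2]$ and $(1/2,1)$ separately via the complementation trick of \textsc{Moment Search}. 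Counting, each $\mu_\ell$ is an integer multiple of $1/k^{2\ell}$ bounded in magnitude by $k^3/2^\ell$, so the number of admissible $d$-tuples is $\prod_{\ell=1}^d (k^{2\ell+3}/2^\ell) = (1/\epsilon)^{O(\log^2(1/\epsilon))}$; multiplying by the $O(n)$ choices of $(t_0, t_1)$ yields the second term. For claim (iii), each candidate moment tuple can be tested and a realizing multiset constructed by the dynamic programming used in step~4 of \textsc{Moment Search}, so the construction time matches the size up to polynomial factors.

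The main technical obstacle I foresee is the large-variance step, where matching only the first two moments of $\sum_i X_i$ by a binomial gives Kolmogorov-distance control of Berry--Ess\'een type rather than the total-variation control the theorem demands. I would tackle this either by feeding $\{X_i\}$ and a padded binomial $\{Y_i\} = (q,\ldots,q,0,\ldots,0)$ into Theorem~\ref{theorem:binomial appx theorem} with $(\ell,q)$ chosen so that as many higher moments as possible match within the grid resolution, relying on the $2^{-\Omega(d)}$ decay to absorb the residual, or by invoking sharp classical binomial-approximation estimates. A secondary subtlety is the small-variance case: naive coordinatewise rounding of the $p_i$'s to multiples of $1/k^2$ inflates total variation linearly in the number of rounded coordinates, so the right fix is to round the moments rather than the individual probabilities and apply Theorem~\ref{theorem:binomial appx theorem} once, which is exactly why the $n$-independence of that theorem's bound is indispensable here.
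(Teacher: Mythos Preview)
Your sparsification argument for the Sparse Form---grouping collections by their first $d = \Theta(\log(1/\epsilon))$ power sums in each of the ranges $(0,1/2]$ and $(1/2,1)$, keeping one representative per moment profile, counting profiles as $\prod_{\ell\le d} k^{O(\ell)} = (1/\epsilon)^{O(\log^2(1/\epsilon))}$, and realizing each profile via the dynamic program of Step~4 of \textsc{Moment Search}---is exactly the paper's argument, and this is the genuinely new content of the theorem.

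Where you diverge is in how the base cover is obtained. You try to re-derive the Sparse/Binomial dichotomy from scratch by branching on $\sigma^2$, and you correctly flag the large-variance (Binomial Form) case as the obstacle. The paper sidesteps this entirely: it takes as a black box the $\epsilon$-cover already established in \cite{D: anonymous 3 full}, which is already composed of Sparse-Form and $k$-heavy Binomial-Form collections and has size $n^3 \cdot O(1/\epsilon) + n \cdot (1/\epsilon)^{O(1/\epsilon^2)}$. The Binomial-Form portion of that cover is already small enough; only the Sparse-Form portion needs to be thinned, and that is done precisely by your moment-profile argument (losing an additive $\epsilon$, so the result is a $2\epsilon$-cover). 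This is cleaner and spares you the Berry--Ess\'een-versus-total-variation issue you anticipated.

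On your proposed workaround for the large-variance case: feeding a padded binomial $(q,\ldots,q,0,\ldots,0)$ into Theorem~\ref{theorem:binomial appx theorem} cannot match more than two power sums in general, since you have only two free parameters $(\ell,q)$; so the $2^{-\Omega(d)}$ decay is unavailable here, and that route fails. Your fallback---``sharp classical binomial-approximation estimates''---is indeed what underlies \cite{D: anonymous 3 full}, so simply citing that reference (as you already do implicitly for the small-variance case) is the correct move and is what the paper does.
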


\begin{prevproof}{Theorem}{thm: sparse cover theorem}
Daskalakis~\cite{D: anonymous 3 full} establishes the same theorem, except that the size of the cover he produces, as well as the time needed to produce it, are  $n^3 \cdot O(1/\epsilon) + n \cdot \left({1 \over \epsilon}\right)^{O({1/\epsilon^2})}$. Indeed, this bound is obtained by enumerating over all possible collections $\{Y_i\}_i$ in sparse form and all possible collections in $k$-heavy Binomial Form, for $k=O(1/\epsilon)$ specified by the theorem. Indeed, the total number of collections in $k$-heavy Binomial form is at most $(n+1)^2 n k=n^3 \cdot O(1/\epsilon)$, since there are at most $n+1$ choices for the value of $\ell$, at most $k n$ choices for the value of $q$, and at most $n+1$ choices for the number of variables indexed by $i > \ell$ that have expectation equal to $1$ (the precise subset of these that have expectation $1$ is not important, since this does not affect the distribution of $\sum_i Y_i$). On the other hand, the number of collections $\{Y_i\}_i$ in sparse form is at most $(k^3+1)\cdot k^{3 k^2} \cdot (n+1) =  n \cdot \left({1 \over \epsilon}\right)^{O({1/\epsilon^2})}$, since there are $k^3+1$ choices for $\ell$, $k^{3k^2}$ choices for the expectations of variables $Y_1,\ldots,Y_{\ell}$ up to permutations of the indices of these variables (namely we need to choose how many of these $\ell$ variables have expectation $1/k^2$, how many have expectation $2/k^2$, etc.), and at most $n+1$ choices for the number of variables indexed by $i > \ell$ that have expectation equal to $1$. 

To improve on the size of the cover we show that we can remove from the aforementioned cover a large fraction of collections in sparse form. In particular, we shall only keep $n \cdot \left({1 \over \epsilon}\right)^{O(\log^2{1/\epsilon})}$ collections in sparse form, making use of Theorem~\ref{theorem:binomial appx theorem}. Indeed, consider a collection $\Y=\{Y_i\}_i$ in sparse form and let ${\cal L}_{\Y} =\{i~|~\E[Y_i] \in (0,1/2]\} \subseteq [n]$, ${\cal R}_{\Y}= \{i~|~\E[Y_i] \in (1/2,1)\} \subseteq [n]$. Theorem~\ref{theorem:binomial appx theorem} implies that, if we compare $\{Y_i\}_i$ with another collection $\{Z_i\}_i$ satisfying the following:
\begin{align}
\sum_{i \in {\cal L}} \E[Y_i]^{t} = \sum_{i \in {\cal L}} \E[Z_i]^{t},~~~\text{for all } t=1,\ldots,d;\\
\sum_{i \in {\cal R}} \E[Y_i]^{t} = \sum_{i \in {\cal R}} \E[Z_i]^{t},~~~\text{for all } t=1,\ldots,d;\\
\E[Y_i] = \E[Z_i],~~~\text{for all }i \in [n]\setminus ({\cal L} \cup {\cal R}),
\end{align}
then $d_{\rm TV}(\sum_i Y_i, \sum_i Z_i) \le 2\cdot 20(d+1)^{1/4} 2^{-(d+1)/2}$. In particular, for some $d(\epsilon)= O(\log 1/\epsilon)$, this bound becomes at most $\epsilon$. 

For a collection ${\cal Y}=\{Y_i\}_i$, we define the {\em moment profile $m_{\Y}$} of the collection to be a $(2 d(\epsilon)+1)$-dimensional vector 
$$m_{\Y} = \left(\sum_{i \in {\cal L}_{\Y}} \E[Y_i], \sum_{i \in {\cal L}_{\Y}} \E[Y_i]^{2},\ldots,\sum_{i \in {\cal L}_{\Y}} \E[Y_i]^{d(\epsilon)}; \sum_{i \in {\cal R}_{\Y}} \E[Y_i], \ldots,\sum_{i \in {\cal R}_{\Y}} \E[Y_i]^{d(\epsilon)} ; |\{i~|~\E[Y_i] = 1\}| \right).$$
By the previous discussion, for two collections $\Y=\{Y_i\}_i$ and $\Z=\{Z_i\}_i$, if $m_{\Y} = m_{\Z}$ then $d_{\rm TV}(\sum_i Y_i, \sum_i Z_i) \le \epsilon$.

Now given the $\epsilon$-cover produced in~\cite{D: anonymous 3 full} we perform the following sparsification operation: for every possible moment vector that can arise from a collection $\{Y_i\}_i$ in sparse form, we only keep in our cover one collection with such moment vector. The cover resulting from the sparsification operation is a $2 \epsilon$-cover, since the sparsification loses us an additive $\epsilon$ in total variation distance, as argued above. We now compute the size of the new cover. The total number of moment vectors arising from sparse-form collections of indicators is at most
$k^{O(d(\epsilon)^2)} \cdot (n+1)$.
Indeed, consider a collection $\Y$ in sparse form. There are at most  $k^3+1$ choices for $|{\cal L}_{\Y}|$, at most $k^3+1$ choices for $|{\cal R}_{\Y}|$, and at most $(n+1)$ choices for $|\{i~|~\E[Y_i] = 1\}|$. We claim next that the total number of possible vectors 
$$\left(\sum_{i \in {\cal L}_{\Y}} \E[Y_i], \sum_{i \in {\cal L}_{\Y}} \E[Y_i]^{2},\ldots,\sum_{i \in {\cal L}_{\Y}} \E[Y_i]^{d(\epsilon)}\right)$$
is at most $k^{O(d(\epsilon)^2)}$. Indeed, for all $t=1,\ldots,d(\epsilon)$, $\sum_{i \in {\cal L}_{\Y}} \E[Y_i]^{t} \le |{\cal L}_{\Y}|$ and it must be a multiple of $1/k^{2t}$. So the total number of possible values for $\sum_{i \in {\cal L}_{\Y}} \E[Y_i]^{t}$ is at most $(k^{2t} |{\cal L}_{\Y}| +1) \le (k^{2t} k^3 +1)$. It's easy to see then that the number of possible moment vectors $$\left(\sum_{i \in {\cal L}_{\Y}} \E[Y_i], \sum_{i \in {\cal L}_{\Y}} \E[Y_i]^{2},\ldots,\sum_{i \in {\cal L}_{\Y}} \E[Y_i]^{d(\epsilon)}\right)$$ is at most
$$\prod_{t=1}^{d(\epsilon)}(k^{2t} k^3 +1) \le k^{O(d(\epsilon)^2)}.$$
The same upper bound applies to the total number of possible moment vectors
$$\left(\sum_{i \in {\cal R}_{\Y}} \E[Y_i], \sum_{i \in {\cal R}_{\Y}} \E[Y_i]^{2},\ldots,\sum_{i \in {\cal R}_{\Y}} \E[Y_i]^{d(\epsilon)}\right).$$
It follows then that the total number of sparse-form collections of indicators that we have kept in our cover after the sparsification operation is at most $k^{O(d(\epsilon)^2)} \cdot (n+1) = n \cdot \left({1 \over \epsilon}\right)^{O(\log^2{1/\epsilon})}$. The number of collections in heavy Binomial form that we have in our cover is the same as before and hence it is at most $n^3 \cdot O(1/\epsilon)$. So the size of the sparsified cover is at most $n^3 \cdot O(1/\epsilon) + n \cdot \left({1 \over \epsilon}\right)^{O(\log^2{1/\epsilon})}$.

To finish the proof it remains to argue that we don't actually need to produce the cover of~\cite{D: anonymous 3 full} and subsequently sparsify it to obtain our cover, but we can produce it directly in time $n^3 \cdot O(1/\epsilon) + n \cdot \left({1 \over \epsilon}\right)^{O(\log^2{1/\epsilon})}$. We claim that given a moment vector $m$ we can compute a collection $\Y=\{Y_i\}_i$ such that $m_{\Y}=m$, if such a collection exists, in time $\left({1 \over \epsilon}\right)^{O(\log^2{1/\epsilon})}$. This follows from Claim~\ref{claim: easy to solve moment equations} in Appendix~\ref{appendix: analysis of moment search}.~\footnote{A naive application of Claim~\ref{claim: easy to solve moment equations} results in running time $n^3 \cdot \left({1 \over \epsilon}\right)^{O(\log^2{1/\epsilon})}$. However, we can proceed as follows: we can guess $|{\cal L}_{\Y}|$ and $|{\cal R}_{\Y}|$ (at most $O(k^6)$ guesses) and invoke Claim~\ref{claim: easy to solve moment equations} with $m_0=m_1=0$, $m=1$, $m_s=|{\cal L}_{\Y}|$ and $m_b=|{\cal R}_{\Y}|$ just to find $\{Y_i\}_{i \in {\cal L}_{\Y} \cup {\cal R}_{\Y}}$. To obtain the sought after $\Y=\{Y_i\}_i$ we then add $m_{2d(\epsilon)+1}$ indicators with expectation $1$ and make the remaining indicators $0$.} Hence, our algorithm producing the cover enumerates over all possible moment vectors and for each moment vector invokes Claim~\ref{claim: easy to solve moment equations} to find a consistent sparse collection of indicators, if such collection exists, adding that collection into the cover. Then it enumerates over collections of indicators in heavy Binomial form and adds them to the cover. The overall running time is as promised.
\end{prevproof}

\section{Discussion and Open Problems}
The mystery of PTAS for Nash equilibria deepens.  There are simple
algorithms for interesting special cases well within reach, and in
fact we have seen that the existence of a PTAS is not incompatible
with PPAD-completeness.  But oblivious algorithms cannot take us all
the way to the coveted PTAS for the general case.  In the important
special case of anonymous games, the approach of \cite{DP: anonymous
1, DP: anonymous 2, D: anonymous 3} --- by design involving oblivious
algorithms --- hits a brick wall of $({1\over
\epsilon})^{1\over \epsilon^\alpha}$, but then a more elaborate
probabilistic result about moments and Bernoulli sums breaks that
barrier.  Pseudopolynomial bounds, familiar from \cite{LMM},
show up in approximation algorithms for anonymous games as well.

Many open problems remain, of course:

\begin{itemize}

\item Is there a PTAS for Nash equilibria in general games?  A PTAS
for bimatrix games that exploits the linear programming-like nature
of the problem would not be unthinkable.

\item Find a truly practical, and hopefully evocative of strategically interacting
crowds, PTAS for anonymous games with two strategies.

\item Prove that finding an exact Nash equilibrium in an anonymous
game with a finite number of strategies is PPAD-complete.

\item Find a PTAS for 2-strategy graphical games --- the other
important class of multi-player games.

\item Alternatively, it is not unthinkable that the graphical
games special case above is PPAD-complete to approximate
sufficiently close.

\end{itemize}

\appendix
\arxiv{
\section{Bimatrix Games}
\subsection{PTAS for Small Probability Games} \label{appendix:linear support games}
\begin{prevproof}{Lemma}{lemma: restated lipton markakis mehta}
Let $X$ and $Y$ be independent random variables  such that
$$X = e_i, \text{with probability $x_i$, for all $i\in[n]$},$$
$$Y = e_j, \text{with probability $y_j$, for all $j\in[n]$}.$$
Let then $X_1,X_2,\ldots, X_t$  be $t$ copies of variable $X$ and
$Y_1$,$Y_2$, $\ldots$, $Y_t$  be $t$ copies of variable $Y$, where
the variables $X_1$,$\ldots$, $X_t$, $Y_1$, $\ldots$, $Y_t$ are mutually independent.
Setting $\X= \frac{1}{t}\sum_{k=1}^tX_k$ and $\mathcal{Y}=
\frac{1}{t}\sum_{k=1}^tY_k$, as in the statement of the theorem, we
will argue that, with high probability, $(\X,\mathcal{Y})$ is an
$\epsilon$-Nash equilibrium of $\G$.

Let $\U_i = e_i^{\text{\tiny T}} R \Y$ and $\V_i = \X^{\text{\tiny
T}} C e_i$ be the payoff of the row and column player respectively
for playing strategy $i \in [n]$. Fixing $i \in [n]$, we have that
$$\U_i = \frac{1}{t} \sum_{k=1}^t e_i^{\text{\tiny T}} R Y_k,$$
\begin{align*}\mathbb{E}[\U_i] &= \frac{1}{t}\sum_{k=1}^t\mathbb{E} [e_i^{\text{\tiny T}} R Y_k]= \frac{1}{t}\sum_{k=1}^t\mathbb{E} [e_i^{\text{\tiny T}} R Y]\\&=\mathbb{E} [e_i^{\text{\tiny T}} R Y]=\sum_{j\in [n]} y_j e_i^{\text{\tiny T}} R e_j= e_i^{\text{\tiny T}} R y.\end{align*}
An application of McDiarmid's inequality on the function
$f(\Lambda_1,\ldots,\Lambda_t)=\frac{1}{t}\sum_{k=1}^t \Lambda_k$,
where $\Lambda_k := e_i^{\text{\tiny T}} R Y_k$, $k \in [t]$, are
independent random variables, gives
$$\Pr[|\U_i - \mathbb{E}[\U_i]| \ge \epsilon / 2] \le 2e^{- \frac{t \epsilon^2}{8}} \le \frac{2}{n^2}.$$
Applying a union bound, it follows that with probability at least
$1-\frac{4}{n}$ the following properties are satisfied by the pair
of strategies $(\X,\Y)$:
\begin{align}
|e_i^{\text{\tiny T}} R \Y - e_i^{\text{\tiny T}} R y| \le \epsilon/2, \text{ for all $i\in[n]$};\\
|\X^{\text{\tiny T}} C e_j - x^{\text{\tiny T}} C e_j| \le
\epsilon/2, \text{ for all $j \in [n]$}.
\end{align}
The above imply that $(\X,\Y)$ is an $\epsilon$-Nash equilibrium.
Indeed, for all $i,i' \in [n]$, we have that
\begin{align*}
e_i^{\text{\tiny T}} R \Y &> e_{i'}^{\text{\tiny T}} R \Y+\epsilon\\ &\Rightarrow e_i^{\text{\tiny T}} R y \ge e_i^{\text{\tiny T}} R \Y -\epsilon/2 > e_{i'}^{\text{\tiny T}} R \Y+\epsilon/2 > e_{i'}^{\text{\tiny T}} R y\\
&\Rightarrow x_{i'} =0 \Rightarrow \X_{i'}=0,
\end{align*}
where the last implication follows from the fact that $\X$ is formed
by taking a distribution over samples from $x$. Similarly, it can be
argued that, for all $j,j' \in [n]$,
\begin{align*}
\X^{\text{\tiny T}} C e_j > \X^{\text{\tiny T}} C e_{j'}+\epsilon
\Rightarrow \Y_{j'}=0.
\end{align*}
\end{prevproof}

\begin{prevproof}{Claim}{claim:cardinality of good multisets is large}
We are only going to lower bound the cardinality of the set $\A$. A similar argument applies for $\B$. Recall that $(x,y)$ is a Nash equilibrium in which $x_i \le {1 \over \delta n}$, for all $i \in [n]$, and similarly for $y$. 

Let us now define the set $\A'$ as follows
$$\A':= \{A~\vline~ A \in \A \text{~and~}x_i>0, \forall i \in A\} \subseteq \mathcal{A},$$
that is, $\A'$ is the subset of $\A$ containing those multisets that could arise by taking $t$ samples from $x$. We are going to use the structure of the probability distribution $x$ and Lemma~\ref{lemma: restated lipton markakis mehta} to argue that $\A'$ is large. For this, let us consider the random experiment of taking $t$ independent samples from $x$ and forming the corresponding multiset $A$. Let also $\X$ be the uniform distribution over $A$. Lemma~\ref{lemma: restated lipton markakis mehta} asserts that with probability at least $1-{4 \over n}$ the distribution $\X$ will satisfy assertion~\ref{assertion 3 in LMM restated} of Lemma~\ref{lemma: restated lipton markakis mehta}. However, this does not directly imply a lower bound on the cardinality of $\A'$---it could be that all the probability mass is concentrated on a single element of $\A'$. However, the probability that a multiset arises by sampling $x$ is at most
$$\left({1 \over \delta n}\right)^{t},$$
since the probability mass that $x$ assigns to every $i \in [n]$ is at most ${{1 \over \delta n}}$. Therefore, the total number of distinct good multisets in $\A'$ is at least
$$\frac{1-{4 \over n}}{\left({1 \over \delta n}\right)^{t}}.$$
Therefore, $|\A'| \ge (1-{4 \over n}){\left({ \delta n}\right)^{t}}$ and $|\A| \ge |\A'| \ge(1-{4 \over n}){\left({ \delta n}\right)^{t}}$.
\end{prevproof}
\begin{prevproof}{Claim}{claim: probability of success of linear support large}
From Claim~\ref{claim:cardinality of good multisets is large}, it follows that, if a random multiset $A'$ is sampled, the probability that it is good is
\begin{align*}\Pr \left[ A' \in \A \right] \ge \frac{\left(1-\frac{4}{n}\right)\left( \delta n \right)^t}{n^t} &= \left(1- \frac{4}{n}\right) {\delta}^t\\ &=  \Omega \left(\delta \cdot n^{-16 \log(1/\delta)/{\epsilon^2}}\right),
\end{align*}
Similarly, if a random multiset $B'$ is sampled, the probability that it is good is
$$\Pr \left[ B' \in \B \right]  =  \Omega \left(\delta \cdot n^{-16 \log(1/\delta)/{\epsilon^2}}\right).$$
By independence, it follows that
$$\Pr \left[ A'\in \A \text{ and }B' \in \B \right]  =  \Omega \left(\delta^2 \cdot n^{-32 \log(1/\delta)/{\epsilon^2}}\right).$$ 
But, if $A'\in \A$, $B' \in \B$, then the uniform distribution $\X'$ over $A'$ and the uniform distribution $\Y'$ over $B'$ comprise an $\epsilon$-Nash equilibrium (see Proof of Lemma~\ref{lemma: restated lipton markakis mehta} for a justification). Hence,
\begin{align*}
&\Pr \left[ (\X',\Y') \text{ is an $\epsilon$-Nash equilibrium} \right]  \\&~~~~~~~~~~~~=  \Omega \left(\delta^2 \cdot n^{-32 \log(1/\delta)/{\epsilon^2}}\right).
\end{align*}
\end{prevproof}

\subsection{The Lower Bound for Bimatrix Games} \label{appendix:lower bound for bimatrix}
In Figure~\ref{fig: the payoff matrix RS}, we illustrate the game matrix $R_S$ in our construction in the Proof of Theorem~\ref{thm:no oblivious PTAS} (Section~\ref{sec:lower bound}), for the case $n=6$, $\ell=4$, $S=\{1,2,3,4\}$. Each column corresponds to a subset of $S$ of size $2$. We add then two extra rows to make the matrix square.
\begin{figure}[h]
$$\left(
\begin{tabular}{cccccc}
 1~~~~~    &1~~~~~     &1~~~~~     &0~~~~~   &0~~~~~     &0\\
 1~~~~~    &0~~~~~    &0~~~~~     &1~~~~~    &1~~~~~     &0\\
 0~~~~~    &1~~~~~     &0~~~~~     &1~~~~~    &0~~~~~     &1\\
 0~~~~~    &0~~~~~     &1~~~~~     &0~~~~~    &1~~~~~     &1\\
 \hdashline
-1~~~~~   &-1~~~~~    &-1~~~~~    &-1~~~~~   &-1~~~~~    &-1\\
-1~~~~~   &-1~~~~~    &-1~~~~~    &-1~~~~~   &-1~~~~~    &-1\\
\end{tabular}\right)$$
\caption{$R_S$ for the case $n=6$, $\ell=4$, $S=\{1,2,3,4\}$.} \label{fig: the payoff matrix RS}
\end{figure}

\begin{prevproof}{Lemma}{lem: claim about the properties of strategy of the row player at eps-equilibrium}
The first assertion is easy to justify. Indeed, no matter what $y$ is, the payoff $u_i$ of the row player for playing strategy $i$, $i \in [n]$, satisfies:
\begin{itemize}
\item $u_i = -1$, for all $i \notin S$;
\item $u_i \ge 0$, for all $i \in S$;
\end{itemize}
Hence, fixing any $i' \in S$, we have, for all $i \notin S$, that $u_{i'} > u_i + \epsilon$, which by the definition of an $\epsilon$-Nash equilibrium implies that $x_i=0$.

Towards justifying the second assertion, let $u$ be the utility of the row player at the $\epsilon$-Nash equilibrium $(x,y)$. Since, by the first assertion, $x_i=0$ for all $i \notin S$, it follows that the payoff of the column player is $1-u$, since the game restricted to the rows of the set $S$ is $1$-sum. Since every column of $R_S$ restricted to the rows of the set $S$ has exactly half of the entries equal to $1$ and the other half equal to $0$, it follows that
$$\sum_{i \in S}u_i = \sum_{j \in [n]}{\frac{\ell}{2}y_j} = \frac{\ell}{2}.$$
It follows that there exists some $i^* \in S$ such that $u_{i^*} \ge 1/2$. By the definition of $\epsilon$-Nash equilibrium, it follows then that
\begin{align}
u \ge u_{i^*} - \epsilon \ge 1/2 - \epsilon, \label{eq: bound on the payoff of the row player..}
\end{align}
otherwise the row player would be including in his support strategies which are more than $\epsilon$ worse than $i^*$.

Let us now consider the payoff of the column player for choosing various strategies $j \in [n]$. Without loss of generality let us assume that $S = \{1,2,\ldots, \ell \}$ and that $x_1 \ge x_2 \ge \ldots \ge x_{\ell}$. Let $j^*$ be such that $S_{j^*}=\{\ell/2+1, \ell/2+2,\ldots, \ell \}$. Then, by the definition of $C_S$, the payoff of the column player for choosing strategy $j^*$ is $$v_{j^*} = \sum_{i=1}^{\ell/2}x_i.$$
Since $(x,y)$ is an $\epsilon$-Nash equilibrium, it follows that $v_{j^*}$ should be within $\epsilon$ from the payoff of the column player. Hence,
$$v_{j^*} \le 1-u + \epsilon.$$
Combining the above with~\eqref{eq: bound on the payoff of the row player..}, it follows that
\begin{align*}
&v_{j^*} - \frac{1}{2}\le 2\epsilon ~~\Rightarrow~~ \sum_{i=1}^{\ell/2}\left(x_i - \frac{1}{\ell}\right) \le 2\epsilon. 
\end{align*}
The right hand side of the above, implies $\ell_1(x,u_S) \le 8\epsilon,$ by an application of Lemma~\ref{lem: bound on l1 from bound on the sum of the half of them} below, with $a_i = x_i - \frac{1}{\ell}$, for all $i \in [\ell]$, $k=2\epsilon$.

\begin{lemma} \label{lem: bound on l1 from bound on the sum of the half of them}
Let $\{a_i\}_{i=1}^{\ell}$ be real numbers satisfying the following properties for some $k \in \mathbb{R}_+$:
\begin{enumerate}
\item $a_1 \ge a_2 \ge \ldots \ge a_{\ell}$; \label{strange lemma:cond:1}

\item $\sum_{i=1}^{\ell}a_i =0$; \label{strange lemma:cond:2}

\item $\sum_{i=1}^{\ell/2}a_i \le k$. \label{strange lemma:cond:3}
\end{enumerate}
Then
\begin{align}
\sum_{i=1}^{\ell}|a_i| \le 4k. \label{eq:strange lemma:bound on l1 distance}
\end{align}
\end{lemma}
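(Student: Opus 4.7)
\noindent\textbf{Proof Proposal for Lemma~\ref{lem: bound on l1 from bound on the sum of the half of them}.} The plan is to bound the positive part of the sequence, from which the $\ell_1$ bound follows because the negative part has equal magnitude. Concretely, let $S_+ := \sum_{i:\,a_i \ge 0} a_i$ and $S_- := -\sum_{i:\,a_i < 0} a_i$. Condition~\ref{strange lemma:cond:2} gives $S_+ = S_-$, so $\sum_{i=1}^\ell |a_i| = 2 S_+$. Thus it suffices to show $S_+ \le 2k$, and the conclusion~\eqref{eq:strange lemma:bound on l1 distance} drops out.

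To show $S_+ \le 2k$, I would exploit the monotonicity (Condition~\ref{strange lemma:cond:1}) to prove the key claim that the first-half partial sum dominates at least half of the positive mass, namely $\sum_{i=1}^{\ell/2} a_i \ge S_+/2$. Combined with Condition~\ref{strange lemma:cond:3} ($\sum_{i=1}^{\ell/2} a_i \le k$), this immediately yields $S_+ \le 2k$.

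The claim $\sum_{i=1}^{\ell/2} a_i \ge S_+/2$ would be handled by splitting on whether the sign change happens before or after the midpoint. Let $m$ be the largest index with $a_m \ge 0$ (so $a_{m+1},\ldots,a_\ell < 0$). In the case $m \ge \ell/2$, every $a_i$ with $i \le \ell/2$ is nonnegative and is among the top $\ell/2$ of the $m$ nonnegative values; by monotonicity, their average is at least the average of all $m$ positives, giving $\sum_{i=1}^{\ell/2} a_i \ge \tfrac{\ell/2}{m} S_+ \ge \tfrac{1}{2} S_+$ (since $m \le \ell$). In the case $m < \ell/2$, the first half captures the full positive mass $S_+$ but also a portion of the negatives $a_{m+1},\ldots,a_{\ell/2}$; these, however, are the \emph{least} negative values (smallest in magnitude among $a_{m+1},\ldots,a_\ell$), so by monotonicity their total magnitude is at most $\tfrac{\ell/2 - m}{\ell - m} S_- \le \tfrac{1}{2} S_-$. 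Hence $\sum_{i=1}^{\ell/2} a_i \ge S_+ - \tfrac{1}{2} S_- = \tfrac{1}{2} S_+$.

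The only step that needs mild care is the averaging argument in each case, which boils down to the following routine fact: if $b_1 \ge b_2 \ge \cdots \ge b_r$ are real numbers and $s \le r$, then $\tfrac{1}{s}\sum_{i=1}^{s} b_i \ge \tfrac{1}{r} \sum_{i=1}^{r} b_i$ and symmetrically $\tfrac{1}{r-s}\sum_{i=s+1}^{r} b_i \le \tfrac{1}{r} \sum_{i=1}^{r} b_i$. I do not expect any real obstacle; the only thing to watch is to apply this inequality to the correct monotone subsequence in each of the two cases, and to recall that $\ell$ is even so $\ell/2$ is an integer.
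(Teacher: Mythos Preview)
Your proof is correct and follows essentially the same approach as the paper's: both split on the sign of $a_{\ell/2}$ (equivalently, whether $m \ge \ell/2$ or $m < \ell/2$) and bound $S_+ = \sum_{i:\,a_i \ge 0} a_i$ by $2k$ via a monotonicity/averaging argument, then use $S_+ = S_-$ to conclude. The only cosmetic difference is that the paper handles the case $a_{\ell/2} < 0$ by the symmetry $a_i \mapsto -a_{\ell - i + 1}$ (reducing to the first case), whereas you treat it directly with the dual averaging bound on the negative tail; your treatment is arguably tidier but the underlying idea is identical.
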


\begin{prevproof}{Lemma}{lem: bound on l1 from bound on the sum of the half of them}
We distinguish two cases: (i) $a_{\ell/2} \ge 0$ and (ii) $a_{\ell/2} < 0$. In the first case, we have from Conditions~\ref{strange lemma:cond:1} and~\ref{strange lemma:cond:3} that
$$\sum_{i=1}^{\ell /2}|a_i| \le k.$$
Using Condition~\ref{strange lemma:cond:1} some more we get
$$\sum_{i > \frac{\ell}{2}:~a_i \ge 0 }|a_i| \le \sum_{i > \frac{\ell}{2}:~a_i \ge 0 } a_{\ell/2} \le \frac{\ell}{2} a_{\ell/2} \le \sum_{i=1}^{\ell /2}|a_i| \le k.$$
Combining the above we get
\begin{align}
\sum_{i:~a_i \ge 0}|a_i| \le 2k. \label{proof of strange lemma: bound for the positive terms}
\end{align}
Now we employ Condition~\ref{strange lemma:cond:2}, to get
\begin{align}
\sum_{i:~a_i <0}|a_i| = \sum_{i:~a_i <0}(-a_i) = \sum_{i:~a_i \ge 0} a_i =  \sum_{i:~a_i \ge 0} |a_i| \le 2k. \label{proof of strange lemma: bound for the negative terms}
\end{align}
We combine~\eqref{proof of strange lemma: bound for the positive terms},~\eqref{proof of strange lemma: bound for the negative terms} to deduce~\eqref{eq:strange lemma:bound on l1 distance}.
Case (ii) is treated by repeating the above argument with
$$a_i \leftarrow (- a_{\ell-i+1})\text{, for all $i \in [\ell]$}.$$
\end{prevproof}\end{prevproof}

\begin{prevproof}{Claim}{claim:superpolynomially many disjoint balls}  Consider the set $V:=\{u_S| S \subseteq [n],~|S|=\ell\}$; note that $|V| = \Omega(n^{.8 \log_2 n})$.  For every $u_S\in V$ consider the set $N(u_S)$ of all other $u_{S'}$'s that are within $\ell_1$ distance $17\epsilon$ from $u_S$;  it is easy to see that $|N(u_S)| = O( n^{34\epsilon \log_2 n})$.  Therefore, we can select a subset $V' \subseteq V$ of at least ${|V| \over \max_S|N(u_S)|}=\Omega\left(
n^{(.8-34\epsilon)\log_2 n}\right)$ elements of $V$ so that every pair of elements is at $\ell_1$ distance at least $17\epsilon$ apart; it follows that for every pair of elements in $V'$ their $\ell_1$ balls of radius $8\epsilon$ are disjoint.  The proof is complete.
\end{prevproof}}

{
\section{The Lower Bound for Oblivious PTAS's for Anonymous Games} \label{appendix:lower bound for anonymous}
\subsection{Constructing Anonymous Games with Prescribed Equilibria} \label{appendix:lower bound for anonymous:games with unique equilibria}

\begin{theorem} \label{theorem: anonymous games with unique equilibria up to plus minus small}
For any collection $\P:=(p_i)_{i \in [k]}$, where $p_i \in [3 \delta
k,1]$, for some $\delta>0$ and $k\in \mathbb{N}$, there exists an
anonymous game $\G_\P$ with $k+2$ players, $2$ strategies, $0$ and
$1$, payoffs in $[-1,1]$, and three player types A,B,C, in which $k$
players, $1,\ldots,k$, belong to type A, $1$ player belongs to type
B, and $1$ player belongs to type C, and such that in every
$\delta'$-Nash equilibrium, where $\delta' < \delta$, the following
is satisfied: For every $i$, player $i$'s mixed strategy belongs to
the set $[p_i - 7 k^2\delta,p_i+7 k^2 \delta]$; moreover, at least
one of the players belonging to types B and C play strategy $1$ with
probability $0$.
\end{theorem}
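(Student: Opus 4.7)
The plan is to construct $\G_\P$ by combining two mechanisms: an indifference-inducing payoff structure for the type A players that makes $(p_1,\ldots,p_k)$ an exact equilibrium, together with a pair of ``aggregate-enforcing'' players B and C whose role is to pin down the aggregate $\sum_{j \in A} x_j$ to $\mu:=\sum_i p_i$ at every approximate equilibrium. For each type A player $i$ I would set her payoff for strategy $0$ to the constant $\mu_{-i}/(k+1)$ (independent of the other players' actions), and her payoff for strategy $1$ to $t/(k+1)$ where $t$ is the total number of \emph{other} players (type A, B, or C) who play $1$. All values lie in $[0,1]$. The $\delta'$-well-supported equilibrium condition then forces, for any type A player $i$ who mixes, the relation $|\mu_{-i} - (\sum_{j \in A \setminus \{i\}} x_j + q_B + q_C)| \le (k+1)\delta'$; a one-sided version of the same inequality holds at pure strategies. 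Rearranging, $x_i$ is tied to $p_i + (\sum_{j \in A} x_j - \mu) + (q_B+q_C)$ up to additive error $(k+1)\delta'$, so it suffices to bound $|\sum_j x_j - \mu|$ and $q_B+q_C$ separately.

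Next I would engineer B and C as a pair of threshold detectors on the count $m$ of type A players playing $1$. Let $M$ be the integer nearest $\mu$. I would design B's payoff function so that she strictly prefers $0$ over $1$ whenever $\mathbb{E}[m] > M + O(k\delta)$ and strictly prefers $1$ over $0$ whenever $\mathbb{E}[m] < M - O(k\delta)$, by making her payoff for strategy $1$ a piecewise linear function of $m$ with steep enough slope (e.g., of the form $c\cdot(M-m)/k$ on the relevant range, clipped to stay in $[-1,1]$) and her payoff for strategy $0$ identically zero; the slope $c$ is chosen large enough that a $\delta'$-Nash equilibrium forces $\mathbb{E}[m]$ into an $O(k\delta)$-window of $M$ yet small enough that payoffs remain in $[-1,1]$. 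Player C is the mirror image, with roles of ``above threshold'' and ``below threshold'' swapped. At the target profile $\mathbb{E}[m]=\mu \approx M$, both B and C can best-respond by playing $0$, and indeed I would shift the thresholds by half an integer so that $0$ is the unique best response for at least one of B, C in every $\delta'$-Nash equilibrium.

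With both mechanisms in place, the argument is an error-propagation calculation. From the B/C analysis, any $\delta'$-Nash equilibrium satisfies $|\sum_{j \in A} x_j - \mu| \le O(k\delta)$ and $q_B + q_C \le O(\delta)$ (with at least one equal to $0$). Substituting into the type A indifference (or one-sided) condition, $|x_i - p_i| \le |\sum_j x_j - \mu| + |q_B+q_C| + (k+1)\delta' = O(k\delta)$ per player. Plugging this per-player bound back into the aggregate $\sum_j x_j$ gives a refined estimate, and iterating once yields the claimed $|x_i - p_i| \le 7k^2 \delta$ bound; the factor $k^2$ reflects one ``level'' of propagation through the sum plus one through the individual indifference, with $7$ absorbing the universal constants from the B/C threshold construction.

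The main obstacle will be the quantitative tuning of B and C's payoffs: the slope of their threshold response must be simultaneously (i) steep enough to confine $\sum x_j$ to an $O(k\delta)$ neighborhood of $\mu$ in every $\delta'$-Nash equilibrium, (ii) gentle enough that the induced payoffs stay in $[-1,1]$ even though $m$ can range over $\{0,1,\ldots,k\}$, and (iii) biased so that at least one of B, C has $0$ as a strict best response at the target profile. A secondary subtlety is the boundary behavior of type A players: the hypothesis $p_i \ge 3\delta k$ is exactly what is needed so that the one-sided inequality at $x_i=0$ cannot be satisfied in a $\delta'$-Nash equilibrium, ruling out degenerate equilibria in which some type A player drops out of the support entirely.
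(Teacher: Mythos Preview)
Your construction has a structural gap in the feedback mechanism. You include B and C in the count $t$ that determines a type-A player's payoff for strategy~$1$; but this means that B or C playing strategy~$1$ can only \emph{increase} each type-A player's incentive to play~$1$, never decrease it. So the corrective force is one-directional. Concretely: take any profile with $\sum_j p_j$ bounded away from $k$, and consider the strategy profile where every type-A player plays pure~$1$, B plays~$0$, and C plays~$1$. Then $\mathbb{E}[m]=k>M+O(k\delta)$, so by your design B strictly prefers~$0$ and C strictly prefers~$1$; and each type-A player~$i$ gets $k/(k+1)$ from strategy~$1$ versus $\mu_{-i}/(k+1)<k/(k+1)$ from strategy~$0$. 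This is an exact Nash equilibrium of your game with $x_i=1$ for every~$i$, flatly contradicting the conclusion $|x_i-p_i|\le 7k^2\delta$. (A symmetric all-$0$ equilibrium also exists once $\mu_{-i}>1$ for every~$i$, since B's single vote for~$1$ is not enough to overcome the gap $\mu_{-i}-\mu'_{-i}$.) The sentence ``from the B/C analysis, any $\delta'$-Nash equilibrium satisfies $|\sum_j x_j-\mu|\le O(k\delta)$'' is therefore unjustified: B and C merely observe the aggregate and react, but nothing in your design forces type-A players to react to B and C in a way that closes the loop.

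The paper's construction supplies exactly this missing two-sided feedback. Type-A payoffs are engineered so that \emph{B playing~$1$ penalizes type-A's strategy~$1$} (by $\delta$), while \emph{C playing~$1$ penalizes type-A's strategy~$0$} (by $\delta$); simultaneously the ``honest'' comparison $\mu_{-i}$ vs.\ $t_{A,-i}$ is switched off unless both B and C play~$0$. With this, if $\mu'>\mu+3\delta k$ then B plays~$1$, every type-A player strictly prefers~$0$, hence $\mu'=0$---a contradiction; and symmetrically in the other direction. Your proposal needs a comparable punishment channel from B and C back to type A, not just their inclusion in~$t$. A secondary point: you assert $q_B+q_C\le O(\delta)$, but neither your outline nor the paper establishes this (the paper only shows one of $q_B,q_C$ equals~$0$ and carries the other through the inequalities); your error-propagation step would need reworking accordingly.
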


\begin{proof}
Let us call $B$ the single player of type $B$ and $C$ the single
player of type $C$. Let us also use the notation: $\mu= \sum_{i \in
[k]} p_i$, and $\mu_{-i}= \sum_{j \in [k] \setminus \{i\}} p_j$, for
all $i$. Now, let us assign the following payoffs to the players $B$
and $C$:
\begin{itemize}
\item $u^B_1 = {1 \over k} \cdot (t_A - \mu)$, where $t_A$ is the number of players of type A who play strategy $1$;
\item $u^B_0 = 2 \delta$;
\item $u^C_1 = {1 \over k} \cdot (\mu - t_A)$, where $t_A$ is the number of players of type A who play strategy $1$;
\item $u^C_0 = 2 \delta$;
\end{itemize}
The payoff functions of the players of type A are defined as
follows. For all $i \in [k]$:
\begin{itemize}
\item $u^i_0 = {1 \over k} (\mu_{-i} \cdot \X_{\text{B plays $0$}} \cdot \X_{\text{C plays $0$}}  - \delta k \cdot \X_{\text{C plays $1$}})$, where $\X_{\text{B plays $0$}}$, $\X_{\text{C plays $0$}}$ and $\X_{\text{C plays $1$}}$ are the indicators of the events `\text{B plays $0$}', `\text{C plays $0$}' and `\text{C plays $1$}' respectively.
\item $u^i_1 = {1 \over k} (t_{A,-i} \cdot \X_{\text{B plays $0$}} \cdot \X_{\text{C plays $0$}}  - \delta k \cdot \X_{\text{B plays $1$}})$, where $t_{A,-i}$ is the number of players of type A who are different than $i$ and play $1$, and $\X_{\text{B plays $0$}}$, $\X_{\text{C plays $0$}}$ and $\X_{\text{B plays $1$}}$ are the indicators of the events `\text{B plays $0$}', `\text{C plays $0$}' and `\text{B plays $1$}' respectively.
\end{itemize}
Note that the range of all payoff functions of the game thus defined
is $[-1,1]$. We now claim the following:
\begin{claim}\label{claim: lower bound: mu primes are close to mu's}
In every $\delta'$-Nash equilibrium, where $\delta' < \delta$, it
must be that
$$\sum_{i \in [k]} q_i = \mu \pm 3 \delta k,$$
where $q_1,\ldots,q_k$ are the mixed strategies of the players
$1,\ldots,k$.
\end{claim}
\begin{prevproof}{Claim}{claim: lower bound: mu primes are close to mu's}
Let $\mu' = \sum_{i \in [k]}q_i$. Suppose for a contradiction that
in a $\delta'$-Nash equilibrium $\mu' > \mu + 3 \delta k$; then
$${1 \over k} (\mu' - \mu) > 3 \delta.$$
Note however that $\E[u^B_1]= {1 \over k} (\mu' - \mu)$ and
$\E[u^C_1]= - {1 \over k} (\mu' - \mu)$. Hence, the above implies
\begin{align}
\E[u^B_1] > \E[u^B_0] + \delta,\\
\E[u^C_1] < \E[u^C_0] - \delta.
\end{align}
Since $\delta' <\delta$, it must be then that $\Pr[\text{B plays
$1$}]=1$ and $\Pr[\text{C plays $1$}]=0$. It follows then that for
all $i \in [k]$:
\begin{align*}
\E[u^i_0] = 0,\\
\E[u^i_1] =  -\delta.
\end{align*}
Hence, in a $\delta'$-Nash equilibrium with $\delta' <\delta$, it must
be that $\Pr[\text{$i$ plays $1$}]=q_i=0$, for all $i\in[k]$. This
is a contradiction since we assumed that $\mu'=\sum_{i \in [k]}q_i >
\mu + 3\delta k$, and $\mu$ is non-negative. Via similar arguments
we show that the assumption $\mu' < \mu - 3 \delta k$ also leads to
a contradiction. Hence, in every $\delta'$-Nash equilibrium with
$\delta' < \delta$, it must be that
$$\mu' = \mu \pm 3\delta k.$$
\end{prevproof}

We next show that in every $\delta'$-Nash equilibrium with $\delta' <
\delta$, at least one of the players $B$ and $C$ will not include
strategy $1$ in her support.
\begin{claim}\label{claim: lower bound: players B and C cannot be both supported on strategy 1}
In every $\delta'$-Nash equilibrium with $\delta' < \delta$,
$$\Pr[\text{B plays $1$}]=0 \text{~or~}\Pr[\text{C plays $1$}]=0.$$
\end{claim}
\begin{prevproof}{Claim}{claim: lower bound: players B and C cannot be both supported on strategy 1}
Let $q_1,\ldots,q_k$ be the mixed strategies of players $1,\ldots,k$
at some $\delta'$-Nash equilibrium of the game with $\delta' <
\delta$. Let us consider the quantity $\M= {1 \over k} (\mu' -
\mu)$, where $\mu'= \sum_{i \in [k]}q_i$. We distinguish the
following cases:
\begin{itemize}
\item $\M \le \delta$: In this case, $\E[u^B_1]= {1 \over k} (\mu' - \mu) \le \delta \le 2\delta - \delta=\E[u^B_0]-\delta$. Since $\delta'<\delta$, $\Pr[\text{B plays $1$}]=0$.
\item $\M \ge \delta$: In this case, $\E[u^C_1]= - {1 \over k} (\mu' - \mu) \le - \delta \le 2\delta - \delta=\E[u^C_0]-\delta$. Since $\delta'<\delta$, $\Pr[\text{C plays $1$}]=0$.
\end{itemize}
\end{prevproof}

Finally, we establish the following.
\begin{claim}\label{claim: lower bound: mu -i primes are close to mu -i's}
In every $\delta'$-Nash equilibrium with $\delta' < \delta$, it
must be that for every player $i \in [k]$:
$$\mu'_{-i}:=\sum_{j \in [k] \setminus \{i\}} q_j = \mu_{-i} \pm 4 \delta k^2,$$
where $q_1,\ldots,q_k$ are the mixed strategies of the players
$1,\ldots,k$.
\end{claim}
\begin{prevproof}{Claim}{claim: lower bound: mu -i primes are close to mu -i's}
Let us fix any $\delta'$-Nash equilibrium. From Claim~\ref{claim:
lower bound: players B and C cannot be both supported on strategy 1}
it follows that either player B or C plays strategy $1$ with
probability $0$. Without loss of generality, we will assume that
$\Pr[\text{C plays $1$}]=0$ (the argument for the case $\Pr[\text{B
plays $1$}]=0$ is identical to the one that follows).

Let us now fix a player $i \in [k]$. We show first that under the
assumption $\Pr[\text{C plays $1$}]=0$, $\Pr[\text{C plays $0$}]=1$,
it must be that
\begin{align}
\mu_{-i} \le \mu'_{-i} + \delta k. \label{eq:lower bound mu -i's eq
1}
\end{align}
Assume for a contradiction that $\mu_{-i} > \mu'_{-i} + \delta k$.
It follows then that
\begin{align*}
&\mu_{-i} (1-\Pr[\text{B plays $1$}]) \ge \mu'_{-i} (1-\Pr[\text{B plays $1$}])\\&~~~~~~~~~~~~~~~~~~~~~~~~~~~~~~+ \delta k (1-\Pr[\text{B plays $1$}])\\
&\Rightarrow~~\mu_{-i} \Pr[\text{B plays $0$}] \ge \mu'_{-i} \Pr[\text{B plays $0$}]\\ &~~~~~~~~~~~~~~~~~~~~~~~~~~~~~~+ \delta k (1-\Pr[\text{B plays $1$}])\\
&\Rightarrow~~\mu_{-i} \Pr[\text{B plays $0$}] \Pr[\text{C plays $0$}] \ge \\&~~~~~~~~~~~~~~~~~~~~~~~~~~~~ \mu'_{-i} \Pr[\text{B plays $0$}] \Pr[\text{C plays $0$}] \\&~~~~~~~~~~~~~~~~~~~~~~~~~~~~~~~~- \delta k\Pr[\text{B plays $1$}] +\delta k\\
&\Rightarrow~~\E[u^i_0] \ge \E[u^i_1]  +\delta.
\end{align*}
But we assumed that we fixed a $\delta'$-Nash equilibrium with
$\delta'<\delta$; hence the last equation implies that $q_i=0$. But
this leads quickly to a contradiction since, if $q_i=0$, then using
Claim~\ref{claim: lower bound: mu primes are close to mu's} we have
$$\mu'_{-i}=\mu' \ge \mu - 3 \delta k \ge \mu - p_i = \mu_{-i},$$
where we also used that $p_i \ge 3 \delta k$. The above inequality
contradicts our assumption that $\mu_{-i} > \mu'_{-i} + \delta k$.
Hence, \eqref{eq:lower bound mu -i's eq 1} must be satisfied. Using
that $\mu' \le \mu + 3\delta k$, which is implied by
Claim~\ref{claim: lower bound: mu primes are close to mu's}) we get
$$q_i \le p_i + 4\delta k.$$

From the above discussion it follows that
\begin{align}
q_j \le p_j + 4\delta k, \text{for all $j$}.\label{eq:lower bound mu
-i's eq 2}
\end{align}
Now fix $i \in [k]$ again. Summing~\eqref{eq:lower bound mu -i's eq
2} over all $j \neq i$, we get that
\begin{align}
\mu'_{-i} \le \mu_{-i} + 4\delta k^2. \label{eq:lower bound mu -i's
eq 3}
\end{align}
Combining \eqref{eq:lower bound mu -i's eq 1} and \eqref{eq:lower
bound mu -i's eq 3} we get
$$\mu'_{-i} = \mu_{-i} \pm 4\delta k^2.$$
\end{prevproof}
To conclude the proof of Theorem~\ref{theorem: anonymous games with
unique equilibria up to plus minus small}, we combine
Claims~\ref{claim: lower bound: mu primes are close to mu's} and
\ref{claim: lower bound: mu -i primes are close to mu -i's}, as
follows. For every player $i \in [k]$, we have from
Claims~\ref{claim: lower bound: mu primes are close to mu's} and
\ref{claim: lower bound: mu -i primes are close to mu -i's} that in
every $\delta'$-Nash equilibrium with $\delta' < \delta$,
$$\mu'_{-i} = \mu_{-i} \pm 4\delta k^2~\text{and}~\mu' = \mu \pm 3 \delta k.$$
By combining these equations we get
$$q_i=p_i \pm 7 \delta k^2.$$
\end{proof}

\subsection{The Lower Bound}  \label{appendix:lower bound for anonymous:proof of the lower bound}

Given Theorem~\ref{theorem: anonymous games with unique equilibria up to plus minus small}, we can establish our lower bound.

\begin{prevproof}{Theorem}{theorem: no subexponential oblivious
PTAS} Let us fix any oblivious $\epsilon$-approximation algorithm for anonymous
games with $2$-strategies and $3$-player types. The algorithm comes
together with a distribution over unordered sets of mixed strategies---parametrized by the number of players $n$---which we denote by $D_{n}$.
%

We will consider the performance of the algorithm on the family of games specified in the statement of
Theorem~\ref{theorem: anonymous games with unique equilibria up to
plus minus small} for the following setting of parameters:
\begin{align*}
&k= \lfloor (1/ \epsilon)^{1/3} \rfloor,~\delta=1.01 \epsilon,~\P \in \mathcal{T}_{\epsilon}^{k}\\
\text{where}~&\mathcal{T}_{\epsilon}:=\left\{ j\cdot15
\epsilon^{1/3}~\vline~j = 1, \ldots, t_{\epsilon}
\right\},~~t_{\epsilon}=\left \lfloor {1 \over 15} \epsilon^{-1/3}
\right \rfloor.
\end{align*}
For technical reasons, let us define the following notion of distance between $\P, \Q
\in \mathcal{T}_{\epsilon}^k$.
$$d(\P,\Q) := \sum_{j=1}^{t_{\epsilon}} \Big| v^\P_j- v^\Q_j\Big|.$$
where $v^{\P}=(v^{\P}_1, v^{\P}_2,\ldots,v^{\P}_{t_\epsilon})$ is a
vector storing the frequencies of various elements of the set
$\mathcal{T}_{\epsilon}$ in the collection $\P$, i.e.
$v^{\P}_j:=|\{i~\vline~i\in[k], p_i = j\cdot15 \epsilon^{1/3}\}|$.
To find the distance between two collections $\P, \Q$ we compute the
$\ell_1$ distance of their frequency vectors. Notice in particular that this distance must be an even number. We also need the following definition.
\begin{definition}
We say that two anonymous games $\G$ and $\G'$ share an
$\epsilon$-Nash equilibrium in unordered form if there exists an
$\epsilon$-Nash equilibrium $\sigma_{\G}$ of game $\G$ and an
$\epsilon$-Nash equilibrium $\sigma_{\G'}$ of game $\G'$ such that
$\sigma_{\G}$ and $\sigma_{\G'}$ are equal as unordered sets of
mixed strategies.
\end{definition}
We show first the following about the shareability of $\epsilon$-Nash equilibria among the games $\G_\P$, $\P \in T_{\epsilon}^k$.
\begin{claim}\label{claim:lower bound: no two games share an equilibrium}
If, for $\P, \Q \in T_{\epsilon}^k$, $d(\P,\Q) >0$, then there is no
$\epsilon$-Nash equilibrium that is shared between the games $\G_\P$
and $\G_\Q$ in unordered form.
\end{claim}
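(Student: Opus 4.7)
The plan is to prove the contrapositive: if $\sigma_\P, \sigma_\Q$ are $\epsilon$-Nash equilibria of $\G_\P, \G_\Q$ respectively that coincide as unordered multisets of $k+2$ mixed strategies, I will deduce $d(\P,\Q)=0$. First I would invoke Theorem~\ref{theorem: anonymous games with unique equilibria up to plus minus small} on each game: since every $\epsilon$-Nash equilibrium is a $\delta'$-Nash equilibrium with $\delta' = \epsilon < \delta = 1.01\epsilon$, the theorem guarantees that the $i$-th type-A player's strategy $s^\P_i$ lies in the interval $I_{p_i} := [p_i - 7k^2\delta,\, p_i + 7k^2\delta]$ (and analogously $s^\Q_i \in I_{q_i}$), and that at least one of the B,C-players in each game plays strategy $0$ deterministically. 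A direct computation with $k=\lfloor\epsilon^{-1/3}\rfloor$ and $\delta=1.01\epsilon$ shows $7k^2\delta \le 7.07\,\epsilon^{1/3}$, strictly less than half the $T_\epsilon$-spacing $15\epsilon^{1/3}$; hence the family $\{I_v:v\in T_\epsilon\}$ is pairwise disjoint and contained in $(0,1]$, so in particular $0$ lies outside every $I_v$.

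I would then run an interval-by-interval matching of $\sigma_\P$ and $\sigma_\Q$. Each side has at least one $0$-entry coming from B or C, and these $0$'s must pair up; the remaining $k$ type-A strategies populate the disjoint intervals with frequency vector $v^\P$ (respectively $v^\Q$), while at most one further non-zero B/C element per side may or may not land in some $I_v$. Matching counts interval by interval, the easy cases are immediate: if both B and C play $0$ on each side, or if the non-zero B/C extras both sit outside $\bigcup_v I_v$, or if they both lie in the same $I_v$, then disjointness forces $v^\P_j = v^\Q_j$ for every $j$, contradicting $d(\P,\Q)>0$. The configuration where exactly one side's extra lies in some $I_v$ is ruled out by a count mismatch outside $\bigcup_v I_v$. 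The only genuinely delicate configuration is when both extras $c_\P, c_\Q$ land in distinct intervals $I_{v_1}, I_{v_2}$ (with $v_1 \neq v_2$), in which case the count equation only yields $d(\P,\Q)=2$---a real loophole.

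Closing this $d=2$ sub-case is the main obstacle of the proof. Here I would push the multiset identity harder: $c_\P$ must equal some $s^\Q_{i^\ast}$ with $q_{i^\ast}=v_1$, and $c_\Q$ must equal some $s^\P_{i^{\ast\ast}}$ with $p_{i^{\ast\ast}}=v_2$; in particular $c_\P\in I_{v_1}$ and $c_\Q\in I_{v_2}$ with $|c_\P-v_1|,|c_\Q-v_2|\le 7.07\epsilon^{1/3}$. Combining (i) the aggregate identity $\sum_i s^\P_i + c_\P = \sum_i s^\Q_i + c_\Q$ coming from the multiset equality, (ii) Claim~\ref{claim: lower bound: mu primes are close to mu's} which gives $\sum_i s^\P_i = \mu_\P \pm 3\delta k$ and $\sum_i s^\Q_i = \mu_\Q \pm 3\delta k$, and (iii) the $\epsilon$-best-response conditions on B and C in each game (which tighten the aggregate when C mixes, and constrain $c_\P,c_\Q$ at the boundary otherwise), I expect to extract a contradiction by playing the $O(\epsilon^{2/3})$ scale of the equilibrium tolerances against the $15\epsilon^{1/3}$-scale spacing of $T_\epsilon$. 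The thorniest detail will be handling the boundary regimes $c_\P,c_\Q\in\{0,1\}$, which I would dispatch by a direct case analysis on the payoff inequalities of B and C, using $2\delta>\epsilon$.
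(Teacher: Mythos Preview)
Your case decomposition is sound and in fact more careful than the paper's own proof, which simply defines the balls $B_j$, records their disjointness, and then asserts without further detail that the histogram vectors $v^{\sigma_{\G_\P}}$ and $v^{\sigma_{\G_\Q}}$ must differ whenever $d(\P,\Q)>0$. You are right that the possibility of the non-zero B/C strategy landing inside some $B_j$ is the only non-trivial point, and your Cases 1--3 dispatch everything except the $d(\P,\Q)=2$ configuration you call Case~4.

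However, your proposed resolution of Case~4 does not go through. In that case the frequency relations force $\mu_\P - \mu_\Q = (j_2-j_1)\cdot 15\epsilon^{1/3}$ exactly, while $c_\P \in B_{j_1}$ and $c_\Q \in B_{j_2}$ give $c_\P - c_\Q = (j_1-j_2)\cdot 15\epsilon^{1/3} + (e_\P - e_\Q)$ with $|e_\P|,|e_\Q| \le 7.07\,\epsilon^{1/3}$. Substituting into your identity $(\mu_\P - \mu_\Q) + (c_\P - c_\Q) = (\mu_\Q' - \mu_\Q) - (\mu_\P' - \mu_\P) = O(\epsilon^{2/3})$, the two $\epsilon^{1/3}$-scale terms cancel \emph{identically}, leaving only $e_\P - e_\Q = O(\epsilon^{2/3})$. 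This says the two offsets are close to each other, not that either is small; it is perfectly consistent with, say, $e_\P \approx e_\Q \approx 5\epsilon^{1/3}$. The $\epsilon$-best-response inequalities for B and C constrain $\mu'_\P - \mu_\P$ and $\mu'_\Q - \mu_\Q$ (quantities already of order $\epsilon^{2/3}$), not the \emph{value} of $c_\P$ or $c_\Q$: once B (or C) is within $\epsilon$ of indifference, it may mix with any probability in $[0,1]$. So the ``$O(\epsilon^{2/3})$ vs.\ $15\epsilon^{1/3}$'' scale separation you anticipate never materializes --- the very structure of Case~4 is such that the first moments balance by construction. To close this case you would need either a finer invariant of the shared multiset than its sum, or a direct argument ruling out $\epsilon$-equilibria of $\G_\P$ in which the mixing B/C player takes a prescribed value inside some $B_j$; your sketch supplies neither, and the phrases ``I expect to extract'' and ``the thorniest detail will be'' confirm that this part is not yet worked out.
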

\begin{prevproof}{Claim}{claim:lower bound: no two games share an equilibrium}
For all $j$, let us define the $7.07k^2\epsilon$ ball around
probability $j\cdot15 \epsilon^{1/3}$ in the natural way:
$$B_j:=[j\cdot15 \epsilon^{1/3}-7.07k^2\epsilon, j\cdot15 \epsilon^{1/3}+7.07k^2\epsilon].$$
Observe that for all $j \ge 2$:
$$(j+1)\cdot15 \epsilon^{1/3} - j\cdot15 \epsilon^{1/3}=15 \epsilon^{1/3} > 2 \cdot 7.07k^2\epsilon.$$
Hence, for all $j, j'$: $B_j \cap B_j' = \emptyset$.

Now, let us consider any pair of $\epsilon$-Nash equilibria
$\sigma_{\G_\P}$, $\sigma_{\G_\Q}$ of the games $\G_\P$ and $\G_\Q$
and let us consider the vectors
$v^{\sigma_{\G_\P}}=(v^{\sigma_{\G_\P}}_1,\ldots,v^{\sigma_{\G_\P}}_{t_{\epsilon}})$
and
$v^{\sigma_{\G_\Q}}=(v^{\sigma_{\G_\Q}}_1,\ldots,v^{\sigma_{\G_\Q}}_{t_{\epsilon}})$
whose $j$-th components are defined as follows:
$$v^{\sigma_{\G_\P}}_j = \left( \begin{minipage}{5.5cm}\centering number of players who are assigned a mixed strategy from the set $B_j$ in $\sigma_{\G_\P}$ \end{minipage}  \right),$$
$$v^{\sigma_{\G_\Q}}_j = \left( \begin{minipage}{5.5cm}\centering number of players who are assigned a mixed strategy from the set $B_j$ in $\sigma_{\G_\Q}$ \end{minipage}  \right).$$
It is not hard to see that Theorem~\ref{theorem: anonymous games
with unique equilibria up to plus minus small} and our assumption
$d(\P,\Q)>0$ imply that $\| v^{\sigma_{\G_\P}} - v^{\sigma_{\G_\Q}}
\|_1 > 0,$ hence $\sigma_{\G_\P}$ and $\sigma_{\G_\Q}$ cannot be
permutations of each other. This concludes the proof.
\end{prevproof}

Next, we show that there exists a large family of games such that no two members of the family share an $\epsilon$-Nash equilibrium.
\begin{claim}\label{claim:lower bound:many sets}
There exists a subset $T \subseteq \mathcal{T}^k_{\epsilon}$ such
that:
\begin{enumerate}
\item for every $\P, \Q \in T$: $d(\P,\Q) >0$; \label{item in lower bound construction:property 1}
\item $|T| \ge 2^{\Omega \left(\left({1 \over \epsilon}\right)^{1/3}\right)}$; \label{item in lower bound construction:property 2}
\end{enumerate}
\end{claim}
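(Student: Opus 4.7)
The key observation is that $d(\P,\Q)$ is exactly the $\ell_1$ distance between the frequency vectors $v^\P$ and $v^\Q$, so $d(\P,\Q) = 0$ if and only if $\P$ and $\Q$ are permutations of one another, i.e.\ equal as \emph{multisets} of size $k$ drawn from the finite alphabet $\mathcal{T}_{\epsilon}$. Thus condition~\ref{item in lower bound construction:property 1} is equivalent to saying that the elements of $T$ represent pairwise distinct multisets.

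The plan is therefore to let $T$ consist of one ordered tuple chosen from each distinct multiset of size $k$ over $\mathcal{T}_{\epsilon}$. By construction, any two elements of $T$ have distinct frequency vectors, so $d(\P,\Q) \ge 2 > 0$ for $\P \neq \Q$ in $T$, giving~\ref{item in lower bound construction:property 1}. The cardinality $|T|$ is then exactly the number of size-$k$ multisets from a $t_\epsilon$-element universe, which by stars and bars equals
\[
|T| \;=\; \binom{k + t_\epsilon - 1}{t_\epsilon - 1}.
\]

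To obtain~\ref{item in lower bound construction:property 2}, we just plug in the parameters. Recall $k = \lfloor \epsilon^{-1/3} \rfloor$ and $t_\epsilon = \lfloor \tfrac{1}{15}\epsilon^{-1/3} \rfloor$, so both are $\Theta(\epsilon^{-1/3})$ and $k$ is roughly $15$ times larger than $t_\epsilon - 1$. Using the elementary bound $\binom{n}{m} \geq (n/m)^m$, we get
\[
\binom{k + t_\epsilon - 1}{t_\epsilon - 1} \;\ge\; \left(\frac{k + t_\epsilon - 1}{t_\epsilon - 1}\right)^{t_\epsilon - 1} \;\ge\; 16^{\,t_\epsilon - 1} \;=\; 2^{\,\Omega(\epsilon^{-1/3})},
\]
where in the middle inequality we used that $k/(t_\epsilon-1) \ge 15$ for $\epsilon$ sufficiently small. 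This yields the claimed bound.

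There is no real obstacle here --- the claim is essentially a counting statement, and the only mild care needed is in handling the floor functions in the definitions of $k$ and $t_\epsilon$ so that the ratio $k/(t_\epsilon - 1)$ stays bounded below by a constant larger than $1$ (in fact close to $15$). A short calculation that $\epsilon$ small enough makes this ratio at least, say, $14$, suffices to push the $2^{\Omega(\epsilon^{-1/3})}$ bound through; for the remaining (constantly many) values of $\epsilon$ the conclusion is vacuous since any absolute constant lower bound on $|T|$ is $2^{\Omega(\epsilon^{-1/3})}$.
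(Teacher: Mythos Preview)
Your proposal is correct and follows essentially the same approach as the paper: observe that $d(\P,\Q)=0$ iff $\P$ and $\Q$ are the same multiset, take one ordered representative per multiset, and lower-bound the multiset count $\binom{t_\epsilon+k-1}{k}=\binom{t_\epsilon+k-1}{t_\epsilon-1}$ via $\binom{n}{m}\ge (n/m)^m$. The only cosmetic difference is which side of the binomial you bound: you take $m=t_\epsilon-1$ (base $\approx 16$, exponent $\Theta(\epsilon^{-1/3})$), while the paper takes $m=k$ (base $\approx 16/15$, exponent $\Theta(\epsilon^{-1/3})$); both yield $2^{\Omega(\epsilon^{-1/3})}$.
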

\begin{prevproof}{claim}{claim:lower bound:many sets}
The total number of distinct multi-sets of cardinality $k$ with
elements from $\mathcal{T}_\epsilon$ is
$${t_\epsilon+k-1 \choose k}.$$
Hence, it is easy to create a subset $T \subseteq
\mathcal{T}^k_{\epsilon}$ such that:
\begin{itemize}
\item for every $\P, \Q \in T$: $d(\P,\Q) > 0$;
\item $|T| = {t_\epsilon+k-1 \choose k}$.
\end{itemize}

\noindent Clearly, the set $T$ satisfies Property
\ref{item in lower bound construction:property 1} in the statement.
For the cardinality bound we have:
\begin{align*}
|T| \ge {t_\epsilon+k-1 \choose k} &\ge \left(t_\epsilon+k-1 \over k\right)^k \\
&\ge \left(1+{1 \over 15} - {2 \over k}\right)^k  \ge 2^{\Omega \left(\left({1 \over
\epsilon}\right)^{1/3}\right)}.
\end{align*}
\end{prevproof}
Now let us consider the performance of the distribution $D_{k}$ on the family of anonymous games $\{\G_\P\}_{\P \in
T}$, where $T$ is the set defined in Claim~\ref{claim:lower
bound:many sets}. By Claims~\ref{claim:lower bound: no two games
share an equilibrium} and~\ref{claim:lower bound:many sets}, no two games in the
family share an $\epsilon$-Nash equilibrium in
unordered form. Hence, no matter what $D_{k}$ is, there will be some game in our family for which the probability that $D_{k}$ samples an $\epsilon$-Nash equilibrium of that game is at most
$$1/|T| \le 2^{-\Omega \left(\left({1 \over \epsilon}\right)^{1/3}\right)}.$$
This concludes the proof of Theorem~\ref{theorem: no subexponential oblivious PTAS}.
\end{prevproof}
}
\arxiv{
\section{The Binomial Approximation Theorem}\label{appendix:binomial approximation theorem}

\begin{prop}\label{proposition:variable moments to probability moments}
Condition $(C_d)$ in the statement of Theorem \ref{theorem:binomial
appx theorem} is equivalent to the following condition:

$$(V_d):~~\E\left[\left(\sum_{i=1}^n X_i\right)^{\ell}\right] = \E\left[\left(\sum_{i=1}^n Y_i\right)^{\ell}\right],~\text{for all } \ell \in [d].$$
\end{prop}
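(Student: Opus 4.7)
The plan is to show that $(C_d)$ and $(V_d)$ are both equivalent to a common intermediate condition, namely equality of the first $d$ elementary symmetric polynomials evaluated at the probability values. The workhorse will be the identity
\[
\E\!\left[\left(\sum_{i=1}^n X_i\right)^{\ell}\right]\;=\;\sum_{k=1}^{\ell} S(\ell,k)\, k!\, e_k(p_1,\ldots,p_n),
\]
where $S(\ell,k)$ are the Stirling numbers of the second kind and $e_k$ is the $k$-th elementary symmetric polynomial. To derive this, I would expand $(\sum_i X_i)^{\ell}$ via the multinomial formula, note that $X_i^m = X_i$ for any $m\ge 1$ since $X_i$ is an indicator, and use independence to conclude that $\E[X_{i_1}\cdots X_{i_{\ell}}]$ equals $\prod_{j\in T} p_j$, where $T$ is the set of distinct indices among $i_1,\ldots,i_\ell$. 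Grouping the $\ell$-tuples $(i_1,\ldots,i_\ell)$ by their underlying set $T$ of size $k$ contributes a factor of $S(\ell,k)\, k!$ (the number of surjections from $[\ell]$ onto a $k$-element set), which yields the identity above. The same expansion holds for the $Y_i$'s with $q_j$ in place of $p_j$.

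Next I would observe that, as $\ell$ ranges over $\{1,\ldots,d\}$, the identity above is a triangular linear system expressing $\bigl(\E[(\sum X_i)^{\ell}]\bigr)_{\ell \le d}$ in terms of $\bigl(e_k(p)\bigr)_{k \le d}$, with strictly positive diagonal $S(\ell,\ell)\,\ell! = \ell!$. Hence this linear map is invertible, and condition $(V_d)$ is equivalent to $e_k(p)=e_k(q)$ for all $k \in [d]$.

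Finally I would invoke Newton's identities, which give a recursive, invertible change of basis between the first $d$ elementary symmetric polynomials $e_1,\ldots,e_d$ and the first $d$ power-sum symmetric polynomials $P_1,\ldots,P_d$, where $P_\ell = \sum_i p_i^{\ell}$. Consequently $e_k(p) = e_k(q)$ for all $k \in [d]$ if and only if $P_{\ell}(p) = P_{\ell}(q)$ for all $\ell \in [d]$, which is exactly $(C_d)$. Chaining the two equivalences gives $(C_d) \Leftrightarrow (V_d)$.

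No step is genuinely hard; the only place one must be careful is in the combinatorial bookkeeping that produces the Stirling factors $S(\ell,k)\,k!$, and in verifying that the triangular systems above and Newton's identities truly restrict cleanly to the first $d$ coordinates (they do, since the $\ell$-th equation of each system involves only quantities of index at most $\ell$). Since the proposition is only used to give an equivalent restatement of $(C_d)$ in the discussion of Theorem~\ref{theorem:binomial appx theorem}, no quantitative bound is needed beyond this purely algebraic equivalence.
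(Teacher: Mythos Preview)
Your proposal is correct and follows essentially the same route as the paper: both arguments pass through the elementary symmetric polynomials as an intermediate, using that the $\ell$-th moment of $\sum_i X_i$ is a weighted sum of $e_1,\ldots,e_\ell$ with positive (triangular) coefficients, and then invoking the invertible change of basis between elementary symmetric polynomials and power sums. Your version is simply more explicit than the paper's---you write down the Stirling-number coefficients $S(\ell,k)\,k!$ and name Newton's identities, whereas the paper appeals to the fundamental theorem of symmetric polynomials (citing Zolotarev) without spelling out the coefficients.
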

\begin{prevproof}{Proposition}{proposition:variable moments to probability moments}\text{}\\
$(V_d) \Rightarrow (C_d)$:~It is not hard to see that
$\E\left[\left(\sum_{i=1}^n X_i\right)^{\ell}\right]$ can be written
as a weighted sum of the {\em elementary symmetric polynomials}
$\psi_1(\P)$, $\psi_2(\P)$,...,$\psi_{\ell}(\P)$ with positive
coefficients, where, for all $\ell$, $\psi_{\ell}(\P)$ is defined as
$$\psi_{\ell}(\P):=\sum_{\begin{minipage}{1.5cm}\centering$S \subseteq [n]$\\$|S|=\ell$\end{minipage}} \prod_{i \in S} p_i.$$
$(V_d)$ then implies by induction
\begin{align}\psi_{\ell}(\P)=\psi_{\ell}(\Q),~~~\text{for all $\ell = 1,\ldots,d$}.\label{eq: equality of the elementary symmetric polynomials}
\end{align}
Now, for all $\ell$, define $\pi_{\ell}(\P)$ as the power sum
symmetric polynomial of degree $\ell$
$$\pi_{\ell}(\mathcal{P}):=\sum_{i=1}^np_i^{\ell}.$$
Now fix any $\ell \le d$. Observe that $\pi_{\ell}(\P)$ is a
symmetric polynomial of degree $\ell$ on the variables
$p_1,\ldots,p_n$. It follows (see, e.g.,~\cite{Zolotarev}) that
$\pi_{\ell}(\mathcal{P})$ can be expressed as a function of the
elementary symmetric polynomials
$\psi_1(\P),\ldots,\psi_{\ell}(\P)$. Since, by~\eqref{eq: equality
of the elementary symmetric polynomials}, $\psi_j(\P) = \psi_j(\Q)$,
for all $j \le \ell$, we get that $\pi_{\ell}(\mathcal{P}) =
\pi_{\ell}(\mathcal{Q})$. Since this holds for any $\ell \le d$,
$(C_d)$ is satisfied.

The implication $(C_d) \Rightarrow (V_d)$ is established in a
similar fashion. $(C_d)$ implies
$$\pi_{\ell}(\mathcal{P})=\pi_{\ell}(\mathcal{Q}), \text{for all $\ell=1,\ldots,d$}.$$
For any $\ell \le d$, $\E\left[\left(\sum_{i=1}^n
X_i\right)^{\ell}\right]$ is a symmetric polynomial of degree $\ell$
on the variables $p_1,\ldots,p_n$. It follows (see,
e.g.,~\cite{Zolotarev}) that $\E\left[\left(\sum_{i=1}^n
X_i\right)^{\ell}\right]$ can be expressed as a function of the
power-sum symmetric polynomials $\pi_1(\P),\ldots,\pi_{\ell}(\P)$.
And since $\pi_j(\P) = \pi_j(\Q)$, for all $j \le \ell$, we get
$\E\left[\left(\sum_{i=1}^n
X_i\right)^{\ell}\right]=\E\left[\left(\sum_{i=1}^n
Y_i\right)^{\ell}\right]$. Since this holds for any $\ell \le d$,
$(V_d)$ is satisfied.
%
\end{prevproof}

\begin{prevproof}{Theorem}{theorem:binomial appx theorem}
Let $X=\sum_i X_i$. The following theorem due to Roos~\cite{Roos},
specifies an expansion of the distribution function of $X$ as a sum
of a finite number of signed measures: the binomial distribution
$\mathcal{B}_{n,p}(m)$ (for an arbitrary choice of $p$) and its
first  $n$ derivatives with respect to the parameter $p$, at the
chosen value of $p$. More precisely,

\begin{theorem}[\cite{Roos}] \label{thm:roos}
Let $\mathcal{P}:=(p_i )_{i=1}^n$ be an arbitrary set of probability
values in $[0,1]$ and $\mathcal{X}:=(X_i)_{i=1}^n$ a collection of
independent indicators with $\E[X_i]=p_i$, for all $i\in [n]$; also
let $X:=\sum_i X_i$. Then, for all $m \in \{0,\ldots,n\}$ and any $p
\in (0,1)$,
\begin{align}
Pr[X = m]  = \sum_{\ell = 0}^n \alpha_{\ell}(\mathcal{P}, p)\cdot
\delta^{\ell}\mathcal{B}_{n,p}(m), \label{eq:krawtchouk expansion}
\end{align}
where in the above $\alpha_0(\mathcal{P},p):=1$,
\begin{align}
\alpha_{\ell}(\mathcal{P},p):= \sum_{1 \le k(1) < \ldots < k(\ell)
\le n} \prod_{r=1}^{\ell}(p_{k(r)}-p),~~\text{for all }\ell \in
[n], \label{eq: krawtchouk coefficients}
\end{align}
and
$$\delta^{\ell}\mathcal{B}_{n,p}(m):=\frac{(n-\ell)!}{n!} \frac{\partial^{\ell}}{\partial p^{\ell}}\mathcal{B}_{n,p}(m),$$
and for the purposes of the last definition we interpret
$$\mathcal{B}_{n,p}(m):=b(m,n,p)$$ as a function of the arguments
$m,n,p$ in the natural way:
$$b(m,n,p):=
\begin{cases}
{n \choose m} p^m (1-p)^{n-m},~\text{for}~n,m \in Z_{+},~m \le n;\\
~~~0,~~~~~~~~~~~~~~~~~\text{otherwise}.
\end{cases}$$
\end{theorem}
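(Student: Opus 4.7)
The plan is to derive Roos's identity from a direct manipulation of the probability generating function (pgf) of $X$. The pgf is $\phi_X(z) := \mathbb{E}[z^X] = \prod_{i=1}^n (1 - p_i + p_i z)$, and the pgf of $\B_{n,p}$ is $(1-p+pz)^n$. The goal is to rewrite $\phi_X(z)$ as a finite linear combination of the partial derivatives $\partial_p^\ell (1-p+pz)^n$, so that matching coefficients of $z^m$ on both sides yields \eqref{eq:krawtchouk expansion}.

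First I would exploit the elementary identity
$$1 - p_i + p_i z \;=\; (1 - p + p z) + (p_i - p)(z - 1),$$
valid for any anchor $p$. Multiplying over $i = 1,\ldots,n$ and expanding by the distributive law — grouping terms according to the subset $S\subseteq [n]$ of indices whose factor contributes the second summand $(p_i - p)(z-1)$ — yields
$$\phi_X(z) \;=\; \sum_{\ell=0}^n (z-1)^\ell (1-p+pz)^{n-\ell} \!\!\sum_{\substack{S\subseteq [n]\\|S|=\ell}} \prod_{i\in S}(p_i - p) \;=\; \sum_{\ell=0}^n \alpha_\ell(\P,p)\,(z-1)^\ell (1-p+pz)^{n-\ell},$$
using the definition \eqref{eq: krawtchouk coefficients} of $\alpha_\ell$ in the second equality.

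Second, a short induction on $\ell$ based on $\partial_p (1-p+pz) = z-1$ and the chain rule gives
$$\frac{\partial^\ell}{\partial p^\ell}(1-p+pz)^n \;=\; \frac{n!}{(n-\ell)!}\,(z-1)^\ell (1-p+pz)^{n-\ell},$$
so that $(z-1)^\ell (1-p+pz)^{n-\ell} = \delta^\ell (1-p+pz)^n$ by the definition of $\delta^\ell$. Substituting this back into the expansion of $\phi_X(z)$ and then extracting the coefficient of $z^m$ on both sides — which commutes with the operator $\partial_p^\ell$ (both coefficient extraction and partial differentiation in $p$ act linearly on the bivariate polynomial in $z,p$) and with the finite sum over $\ell$ — yields $\Pr[X=m]$ on the left and $\sum_{\ell=0}^n \alpha_\ell(\P,p)\,\delta^\ell \B_{n,p}(m)$ on the right, which is exactly \eqref{eq:krawtchouk expansion}.

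The main (and really only) obstacle is bookkeeping: verifying the derivative identity rigorously by induction on $\ell$, and justifying the commutation of $\partial_p^\ell$ with the extraction of the coefficient of $z^m$. Both are routine since everything in sight is a polynomial in $z$ and $p$ and all the sums involved are finite, so no analytic subtleties arise and the argument reduces to elementary algebra.
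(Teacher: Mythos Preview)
Your proof is correct. The generating-function manipulation is clean and each step checks out: the affine decomposition of each linear factor, the subset expansion giving the $\alpha_\ell$'s, the inductive derivative identity, and the coefficient extraction all go through exactly as you describe, and the polynomiality of everything in sight makes the bookkeeping routine.

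As for comparison: the paper does not actually prove this theorem. It is quoted as a black box from Roos~\cite{Roos} (``The following theorem due to Roos\ldots'') and used only as input to the proof of Theorem~\ref{theorem:binomial appx theorem}. So you have supplied a self-contained proof where the paper simply cites the literature. Your argument is in fact essentially the one Roos gives (a Krawtchouk-type expansion obtained by rewriting each factor of the pgf around the anchor $p$), so there is no substantive methodological divergence to discuss---just that you have filled in what the paper outsources.
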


Assume now that we are given two collections $\X$, $\Y$ of
indicators as in the statement of Theorem~\ref{theorem:binomial appx
theorem}. We claim the following
\begin{lemma}\label{lemma:agreeing coefficients}
For sets $\P$ and $\Q$ satisfying property $(C_d)$, and for all $p$:
$$\alpha_{\ell}(\mathcal{P},p) = \alpha_{\ell}(\mathcal{Q},p),~~~\text{for all $\ell=0,\ldots,d$}.$$
\end{lemma}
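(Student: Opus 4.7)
The plan is to recognize $\alpha_\ell(\P, p)$ as an elementary symmetric polynomial evaluated at shifted variables, and then reduce to the hypothesis $(C_d)$ via the fundamental theorem of symmetric polynomials. Concretely, from the definition in~\eqref{eq: krawtchouk coefficients}, one reads off that
\[
\alpha_\ell(\P, p) \;=\; e_\ell\bigl(p_1 - p,\, p_2 - p,\, \ldots,\, p_n - p\bigr),
\]
where $e_\ell$ denotes the $\ell$-th elementary symmetric polynomial in $n$ variables.

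Next, I would invoke Newton's identities (equivalently, the theorem that the elementary symmetric polynomials lie in the ring generated by the power sums): for each $\ell \le n$ there is a universal polynomial $N_\ell$ such that
\[
e_\ell(z_1,\ldots,z_n) \;=\; N_\ell\!\left(\textstyle\sum_i z_i,\ \sum_i z_i^2,\ \ldots,\ \sum_i z_i^{\ell}\right).
\]
Applying this with $z_i = p_i - p$, the quantity $\alpha_\ell(\P, p)$ is determined by the shifted power sums $\sum_i (p_i - p)^j$ for $1 \le j \le \ell$. Expanding via the binomial theorem,
\[
\sum_{i=1}^n (p_i - p)^j \;=\; \sum_{k=0}^{j} \binom{j}{k} (-p)^{\,j-k} \sum_{i=1}^n p_i^{k},
\]
so each shifted power sum of order $j$ is a (p-dependent) linear combination of the ordinary power sums $\pi_k(\P) := \sum_i p_i^k$ for $0 \le k \le j$.

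Now fix $\ell \le d$. For every $j \le \ell \le d$, condition $(C_d)$ gives $\pi_k(\P) = \pi_k(\Q)$ for all $k \le j$ (the case $k=0$ is trivially $n$), so the binomial expansion above yields
\[
\sum_{i=1}^n (p_i - p)^j \;=\; \sum_{i=1}^n (q_i - p)^j
\qquad \text{for every } j=1,\ldots,\ell.
\]
Substituting into the Newton polynomial $N_\ell$ then gives $\alpha_\ell(\P, p) = \alpha_\ell(\Q, p)$. The case $\ell = 0$ holds by the convention $\alpha_0 \equiv 1$. Since this argument is uniform in $p$, the claim follows for all $\ell = 0, 1, \ldots, d$ and all $p$. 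There is no real obstacle here; the only subtlety is keeping track that the Newton/symmetric-function reduction only requires power sums \emph{up to} degree $\ell$, so that the hypothesis $(C_d)$ with $\ell \le d$ is exactly enough.
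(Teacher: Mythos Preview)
Your proof is correct and follows essentially the same approach as the paper: both observe that $\alpha_\ell(\P,p)$ is a symmetric polynomial of degree $\ell$ in the $p_i$'s and invoke the representation of symmetric polynomials via power sums (Newton's identities / fundamental theorem of symmetric functions) to conclude from $(C_d)$. Your version is slightly more explicit---you identify $\alpha_\ell$ as $e_\ell$ of the shifted variables and pass through the binomial expansion---whereas the paper simply cites the general fact that a degree-$\ell$ symmetric polynomial is a function of $\pi_1,\ldots,\pi_\ell$, but the substance is the same.
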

\begin{prevproof}{lemma}{lemma:agreeing coefficients}
Clearly, $\alpha_{0}(\mathcal{P},p) = \alpha_{0}(\mathcal{Q},p)$.
Let us fix $\ell \in \{1,\ldots,d\}$. Observe that
$\alpha_{\ell}(\mathcal{P},p)$ is a symmetric polynomial of degree
$\ell$ on the variables $p_1,\ldots,p_n$. It follows (see,
e.g.,~\cite{Zolotarev}) that $\alpha_{\ell}(\mathcal{P},p)$ can be
expressed as a function of the power-sum symmetric polynomials
$\pi_1(\P),\ldots,\pi_{\ell}(\P)$ defined as
$$\pi_j(\mathcal{P}):=\sum_{i=1}^np_i^j,~\text{for all $j \in [\ell]$}.$$
It follows from $(C_d)$ that $\pi_j(\P) = \pi_j(\Q)$, for all $j \le
\ell$; from the previous discussion, this implies that
$\alpha_{\ell}(\mathcal{P},p) = \alpha_{\ell}(\mathcal{Q},p)$.
\end{prevproof}

For any $p\in(0,1)$, by combining Lemma~\ref{lemma:agreeing
coefficients} and Theorem~\ref{thm:roos} we get
that

\begin{align*}&Pr[X = m] - Pr[Y=m] \\&~~~~= \sum_{\ell = d+1}^n (\alpha_{\ell}(\mathcal{P}, p)-\alpha_{\ell}(\mathcal{Q}, p))\cdot \delta^{\ell}\mathcal{B}_{n,p}(m),~\text{ for all $m$}.
\end{align*}
Therefore, for all $p$:
\begin{align}
||X; Y|| &\le \frac{1}{2} \sum_{m=0}^n|Pr[X = m] - Pr[Y=m]|\\
&\le\frac{1}{2} \sum_{\ell = d+1}^n |\alpha_{\ell}(\mathcal{P}, p)-\alpha_{\ell}(\mathcal{Q}, p)|\cdot \| \delta^{\ell}\mathcal{B}_{n,p}(\cdot)\|_1\\
&\le\frac{1}{2} \sum_{\ell = d+1}^n
\left(|\alpha_{\ell}(\mathcal{P}, p)|+|\alpha_{\ell}(\mathcal{Q},
p)|\right)\cdot \| \delta^{\ell}\mathcal{B}_{n,p}(\cdot)\|_1.
\label{eq:tv in terms of remaining terms coefficients}
\end{align}
From~Theorem 2 in~\cite{Roos}, it follows that
\begin{align*}
&\frac{1}{2} \sum_{\ell = d+1}^n |\alpha_{\ell}(\mathcal{P}, p)|\cdot
\| \delta^{\ell}\mathcal{B}_{n,p}(\cdot)\|_1~~~~~~~~~~~~~~~~~~~~~~~~~~~~~\\&~~~~ \le
\frac{\sqrt{e}(d+1)^{1/4}}{2} \theta(\P,p)^{(d+1)/2} \frac{1-
\frac{d}{d+1}\sqrt{\theta(\P,p)}}{(1-\sqrt{\theta(\P,p)})^2},
\end{align*}
where
$$\theta(\P,p):= \frac{2 \sum_{i=1}^n(p_i - p)^2 + (\sum_{i=1}^n(p_i - p))^2}{2np(1-p)}$$
Choosing $p = \bar{p}:= \frac{1}{n}\sum_{i} p_i$, we get
(see~\cite{Roos})
$$\theta(\P,\bar{p})= \frac{\sum_{i=1}^n(p_i - \bar{p})^2 }{n\bar{p}(1-\bar{p})} \le |p_{\max}-p_{\min}| \le \frac{1}{2},$$
where $p_{\max}= \max_i{\{p_i\}}$ and $p_{\min}= \min_i{\{p_i\}}$.
From the above we get
\begin{align*}
&\frac{1}{2} \sum_{\ell = d+1}^n |\alpha_{\ell}(\mathcal{P},
\bar{p})|\cdot \| \delta^{\ell}\mathcal{B}_{n,\bar{p}}(\cdot)\|_1
\\&~~~~~~~~~~~\le \sqrt{e}(d+1)^{1/4} 2^{-(d+1)/2} \frac{1-
\frac{1}{\sqrt{2}}\frac{d}{d+1}}{(\sqrt{2}-1)^2}\\&~~~~~~~~~~~ \le 10(d+1)^{1/4}
2^{-(d+1)/2}.
\end{align*}
Since from $(C_d)$ we have that $\bar{p}=\sum_i p_i = \sum_i q_i$,
we get in a similar fashion
\begin{align*}
\frac{1}{2} \sum_{\ell = d+1}^n |\alpha_{\ell}(\mathcal{Q},
\bar{p})|\cdot \| \delta^{\ell}\mathcal{B}_{n,\bar{p}}(\cdot)\|_1
\le 10(d+1)^{1/4} 2^{-(d+1)/2}.
\end{align*}
Plugging these bounds into~\eqref{eq:tv in terms of remaining terms
coefficients} we get
$$||X; Y|| \le 20(d+1)^{1/4} 2^{-(d+1)/2}.$$
\end{prevproof}




\begin{corollary} \label{theorem:binomial appx theorem for large guys}
Let $\mathcal{P}:=(p_i )_{i=1}^n \in [1/2,1)^n$ and $\mathcal{Q}:=(q_i)_{i=1}^n \in [1/2,1)^n$ be two collections of  probability values in $[1/2,1)$. Let also $\mathcal{X}:=(X_i)_{i=1}^n$ and $\mathcal{Y}:=(Y_i)_{i=1}^n$ be two collections of independent indicators with $\E[X_i]=p_i$ and $\E[Y_i]=q_i$, for all $i \in [n]$. If for some $d \in [n]$ Condition $(C_d)$ is satisfied, then
$$\left|\left|\sum_{i}{X_i}~;~\sum_{i}{Y_i} \right|\right|= 20(d+1)^{1/4} 2^{-(d+1)/2}.$$
\end{corollary}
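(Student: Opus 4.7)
The plan is to reduce the corollary to Theorem~\ref{theorem:binomial appx theorem} via the standard complementation trick for indicators. For each $i$, define $X_i' := 1 - X_i$ and $Y_i' := 1 - Y_i$. These are independent indicators with $\E[X_i'] = 1 - p_i \in (0, 1/2]$ and $\E[Y_i'] = 1 - q_i \in (0, 1/2]$, so they fall squarely in the regime covered by Theorem~\ref{theorem:binomial appx theorem}.

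The key verification is that the transformed collections $\mathcal{P}' := (1-p_i)_{i=1}^n$ and $\mathcal{Q}' := (1-q_i)_{i=1}^n$ still satisfy condition $(C_d)$. For this, I would expand $(1-p_i)^\ell$ using the binomial theorem:
\begin{align*}
\sum_{i=1}^n (1-p_i)^\ell = \sum_{k=0}^\ell \binom{\ell}{k}(-1)^k \sum_{i=1}^n p_i^k,
\end{align*}
and similarly for $q_i$. Since for every $k \in \{1,\ldots,d\}$ we have $\sum_i p_i^k = \sum_i q_i^k$ by hypothesis, and trivially $\sum_i p_i^0 = n = \sum_i q_i^0$, all the summands on the right-hand side match for $\ell \le d$. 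Hence $\sum_i (1-p_i)^\ell = \sum_i (1-q_i)^\ell$ for all $\ell \in [d]$, i.e.\ $(C_d)$ holds for $\mathcal{P}'$ and $\mathcal{Q}'$.

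Applying Theorem~\ref{theorem:binomial appx theorem} to the collections $\mathcal{X}' := (X_i')_i$ and $\mathcal{Y}' := (Y_i')_i$ then yields
\begin{align*}
\left\| \sum_i X_i'\,;\, \sum_i Y_i' \right\| \le 20(d+1)^{1/4} 2^{-(d+1)/2}.
\end{align*}
Finally, I would observe that $\sum_i X_i' = n - \sum_i X_i$ and $\sum_i Y_i' = n - \sum_i Y_i$, and that total variation distance is invariant under the deterministic bijection $m \mapsto n - m$ on the support. Therefore
\begin{align*}
\left\| \sum_i X_i\,;\, \sum_i Y_i \right\| = \left\| \sum_i X_i'\,;\, \sum_i Y_i' \right\| \le 20(d+1)^{1/4} 2^{-(d+1)/2},
\end{align*}
which is exactly the claimed bound (reading the ``$=$'' in the corollary's statement as ``$\le$'', as in the original theorem). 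There is no real obstacle here; the entire content of the argument is the binomial-expansion check that $(C_d)$ is preserved under $p_i \mapsto 1 - p_i$, which is immediate once one writes it out.
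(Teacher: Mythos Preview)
Your proposal is correct and follows essentially the same approach as the paper: complement the indicators, apply Theorem~\ref{theorem:binomial appx theorem}, and use that total variation distance is preserved under $m \mapsto n-m$. Your proof is in fact more explicit than the paper's, since you spell out via the binomial expansion why $(C_d)$ is inherited by the complemented collections, a step the paper leaves implicit.
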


\begin{prevproof}{Corollary}{theorem:binomial appx theorem for large guys}
Define $X'_i = 1-X_i$, and $Y'_i=1-Y_i$, for all $i$. Apply
Theorem~\ref{theorem:binomial appx theorem} to deduce
$$\left|\left|\sum_{i}{X'_i}~;~\sum_{i}{Y'_i} \right|\right|= 20(d+1)^{1/4} 2^{-(d+1)/2}.$$
The proof is completed by noting
$$\left|\left|\sum_{i}{X_i}~;~\sum_{i}{Y_i} \right|\right|=\left|\left|\sum_{i}{X'_i}~;~\sum_{i}{Y'_i} \right|\right|.$$
\end{prevproof}

\section{The non-oblivious PTAS for Anonymous Games}

\subsection{The Oblivious PTAS of~~[Daskalakis, 2008]} \label{appendix: algorithm from wine}
\noindent In~\cite{D: anonymous 3}, Theorem~\ref{theorem: structural result for approximate equilibria} was used to design an oblivious PTAS for $n$-player $2$-strategy anonymous games running in time
$${\rm poly}(n) \cdot (1/\epsilon)^{O(1/\epsilon^2)} \cdot U,$$
where $U$ is the number of bits required to represent a payoff value of the game. The algorithm has the following structure

\begin{figure}[h]
\centering
\framebox{
\begin{minipage}{14cm}
\begin{enumerate}

\item choose $k = O(1/\epsilon)$, according to Theorem~\ref{theorem: structural result for approximate equilibria}; \label{item in algorithm: choosing your $k$}

\item guess the number $t$ of players who randomize, the number $t_0$ of players who play $0$, and the number $t_1=n-t-t_0$ of players who play $1$; \label{item in algorithm: how many players play 0 and 1}

\item depending on the number $t$ of players who mix try one of the following:
\begin{enumerate}
\item if $t \le k^3$, guess the number of players $\psi_i$ playing each of the integer multiples $i/k^2$ of $\frac{1}{k^2}$, and, solving a max-flow instance (see details in~\cite{D: anonymous 3}), check if there is a $\epsilon$-Nash equilibrium in which $\psi_i$ players play $i / k^2$, for all $i$, $t_0$ players play $0$, and $t_1$ players play $1$; \label{item in algorithm: exhaustive search over multiples of 1/k squared}

\item if $t > k^3$, guess an integer multiple $i/kn$ of $1/kn$ and, solving a max-flow instance, check if there is an $\epsilon$-Nash equilibrium in which $t$ players play $i/kn$, $t_0$ players play $0$, and $t_1$ players play $1$. \label{item in algorithm: number of players is large}
\end{enumerate}
\end{enumerate}
\end{minipage}
}
\caption{The oblivious PTAS of [Daskalakis, 2008]}
\label{fig: the oblivious PTAS of dask}
\end{figure}

Clearly, there are $O(n^2)$ possible choices for Step~\ref{item in algorithm: how many players play 0 and 1} of the algorithm. Moreover, the search of Step~\ref{item in algorithm: number of players is large} can be completed in time (see~\cite{D: anonymous 3})
$$U\cdot {\rm poly}(n) \cdot (1/\epsilon) \log_2(1/\epsilon),$$
which is polynomial in $1/\epsilon$. On the other hand, Step~\ref{item in algorithm: exhaustive search over multiples of 1/k squared} involves searching over all partitions of $t$ balls into $k^2-2$ bins. The resulting running time for this step (see details in~\cite{D: anonymous 3}) is
$$U\cdot {\rm poly}(n) \cdot (1/\epsilon)^{O(1/\epsilon^2)},$$
which is exponential in $1/\epsilon$.

\subsection{{\large{\sc Moment Search}} : Missing Details} \label{appendix: classifier for moment search}
We describe in detail the third step of {\sc Moment Search}.
\begin{enumerate}
\setcounter{enumi}{2}
\item
For each player $i=1,\ldots, n$, find a subset $$\mathcal{S}_i \subseteq \left\{0, \frac{1}{k^2},\ldots,\frac{k^2-1}{k^2},1\right\}$$ of permitted mixed strategies for that player in an $\epsilon$-Nash equilibrium, ``conditioning'' on the total number of players playing $0$ being $t_0$, the total number of players playing $1$ being $t_1$, and the probabilities of the players who mix resulting in the power-sums $\mu_1,\ldots,\mu_d$ and $\mu'_1,\ldots,\mu'_d$. The way we compute the set $\mathcal{S}_i$ is as follows: 
\begin{enumerate}
\item To determine whether $0 \in \mathcal{S}_i$:
\begin{enumerate}
\item Find {\em any} set of mixed strategies $q_1,\ldots,q_{t_s} \subseteq \{\frac{1}{k^2},\frac{2}{k^2},\ldots,\frac{1}{2}\}$ such that $\sum_{i=1}^{t_s} q_i^{\ell} = \mu_{\ell}$, for all $\ell=1,\ldots,d$. Find {\em any} set of mixed strategies $r_1,\ldots,r_{t_b} \subseteq \{\frac{1}{2}+\frac{1}{k^2},\frac{1}{2}+\frac{2}{k^2},\ldots,1-\frac{1}{k^2}\}$ such that $\sum_{i=1}^{t_b} r_i^{\ell} = \mu'_{\ell}$, for all $\ell=1,\ldots,d$. If such values do not exist {\sc Fail}.\\\text{}\\
{\bf Remark:} An efficient algorithm to solve this optimization problem is described in the proof of Claim~\ref{claim: easy to solve moment equations}.\\
\label{item in Moment Search: determining whether probability 0 is in Si: definition of variable q's and r's}

\item Define the random variable $$Y= (t_0-1)\cdot 0 + \sum_{i=1}^{t_s} S_i + \sum_{i=1}^{t_b} B_i + t_1 \cdot 1,$$
where the variables $S_1,\ldots,S_{t_s}, B_1,\ldots,B_{t_b}$ are mutually independent with expectations $\E[S_i]=q_i$, for all $i=1,\ldots,t_s$, and $\E[L_i]=r_i$, for all $i=1,\ldots,t_b$.

\item Compute the expected payoff $\U^i_0=\E[u^i_0(Y)]$ and $\U^i_1=\E[u^i_1(Y)]$ of player $i$ for playing $0$ and $1$ respectively, if the number of the other players playing $1$ is distributed identically to $Y$. \label{item in Moment Search: determining whether probability 0 is in Si: computing expected utilities}

\item if $\U^i_0 \ge \U^i_1-3\epsilon/4$, then include $0$ to the set $\mathcal{S}_i$, otherwise do not.
\end{enumerate}
\item To determine whether $1 \in \mathcal{S}_i$, follow the same strategy except now $Y$ is defined as follows 
$$Y= t_0 \cdot 0 + \sum_{i=1}^{t_s} S_i + \sum_{i=1}^{t_b} B_i + (t_1-1) \cdot 1,$$
to account for the fact that we are testing for the candidate strategy $1$ for player $i$. Also, the test  that determines whether $1 \in \mathcal{S}_i$ is now $\U^i_1 \ge \U^i_0-3\epsilon/4$. 

\item For all $j \in \{1,\ldots,k^2/2\}$, 
to determine whether $j/k^2 \in \mathcal{S}_i$ do the following slightly updated test: \label{item in Moment Search: determining whether a probability in 0-1/2 is in Si}
\begin{enumerate}
\item Find any set of mixed strategies $q_1,\ldots,q_{t_s-1} \subseteq \{\frac{1}{k^2},\frac{2}{k^2},\ldots,1/2\}$ such that $\sum_{i=1}^{t_s-1} q_i^{\ell} = \mu_{\ell} - (j/k^2)^{\ell}$, for all $\ell=1,\ldots,d$. Find any set of mixed strategies $$r_1, \ldots, r_{t_b} \subseteq \{\frac{1}{2}+\frac{1}{k^2}, \frac{1}{2}+\frac{2}{k^2}, \ldots, 1-\frac{1}{k^2}\}$$ such that $\sum_{i=1}^{t_b} r_i^{\ell} = \mu'_{\ell}$, for all $\ell=1,\ldots,d$. If such values do not exist {\sc Fail}. \label{item in Moment Search: determining whether a probability in 0-1/2 is in Si: definition of variable q's and r's}

\item Define the random variable 
$$Y= t_0 \cdot 0 + \sum_{i=1}^{t_s-1} S_i + \sum_{i=1}^{t_b} B_i + t_1 \cdot 1,$$
where the variables $S_1,\ldots,S_{t_s-1}, B_1,\ldots,B_{t_b}$ are mutually independent with $\E[S_i]=q_i$, for all $i=1,\ldots,t_s-1$, and $\E[L_i]=r_i$, for all $i=1,\ldots,t_b$.\label{item in Moment Search: determining whether a probability in 0-1/2 is in Si: definition of variable Y}

\item Compute the expected payoff $\U^i_0=\E[u^i_0(Y)]$ and $\U^i_1=\E[u^i_1(Y)]$ of player $i$ for playing $0$ and $1$ respectively, if the number of the other players playing $1$ is distributed identically to $Y$.~\label{item in Moment Search: determining whether a probability in 0-1/2 is in Si: computing expected utilities}

\item if $\U^i_0 \in [\U^i_1-3\epsilon/4,\U^i_1+3\epsilon/4]$, then include $j/k^2$ to the set $\mathcal{S}_i$, otherwise do not. \label{item in Moment Search: determining whether a probability in 0-1/2 is in Si: do i include j/(k squared) in set Si}
\end{enumerate}

\item For all $j \in \{(k^2+2)/2,\ldots,k^2-1\}$, 
to determine whether $j/k^2 \in \mathcal{S}_i$ do the appropriate modifications to the method described in Step~\ref{item in Moment Search: determining whether a probability in 0-1/2 is in Si}.
\end{enumerate}
\end{enumerate}

\subsection{The Analysis of {\large \sc Moment Search}} \label{appendix: analysis of moment search}

\paragraph{Correctness}
The correctness of {\sc Moment Search} follows from the following two claims.

\begin{claim} \label{claim: moment search never fails}
If there exists an $\epsilon$/2-Nash equilibrium in which $t \le k^3$ players mix, and their mixed strategies are integer multiples of $1/k^2$, then {\sc Moment Search} will not fail, i.e. it will output a set of mixed strategies $(v_1,\ldots,v_n)$.
\end{claim}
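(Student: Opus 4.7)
The plan is to focus on the single iteration of Steps~1--2 in which the guessed aggregates $(t_0,t_1,t_s,t_b)$ and power sums $(\mu_\ell,\mu'_\ell)_{\ell=1}^d$ coincide with those of the hypothesised $\epsilon/2$-Nash equilibrium $\sigma^*=(v_1^*,\ldots,v_n^*)$, and to show that under this guess the algorithm produces an output. Such a guess is indeed enumerated: by the structural assumption on $\sigma^*$, its aggregates lie in the ranges of Step~1 and each of its power sums is a nonnegative integer multiple of $1/k^{2\ell}$ in the range of Step~2. It suffices to verify, for this iteration, that (i)~the internal moment-matching subroutine of Step~3 succeeds whenever it is invoked with the candidate $v=v_i^*$ for player $i$, (ii)~the ensuing payoff test accepts $v_i^*$, i.e., $v_i^*\in\mathcal{S}_i$ for every $i$, and (iii)~Step~4 produces a feasible assignment.

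Part~(i) is handled by exhibiting $\sigma^*$ itself as a witness. When testing $v=v_i^*\in\{0,1\}$, the subroutine looks for $t_s$ (resp.~$t_b$) multiples of $1/k^2$ in $(0,1/2]$ (resp.~$(1/2,1)$) whose first $d$ power sums equal $\mu_\ell$ (resp.~$\mu'_\ell$); the mixing probabilities of $\sigma^*$ constitute exactly such a multiset. When $v=v_i^*$ is itself a mixing probability, the subroutine looks for one fewer value whose power sums are shifted by $v^\ell$; the witness is obtained by deleting a single copy of $v$ from the relevant sub-collection of $\sigma^*$, which is possible since $v$ appears there by hypothesis.

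For part~(ii), let $Y^*$ be the number of players other than $i$ who play~$1$ under $\sigma^*$, and let $Y$ be the surrogate random variable constructed by the classifier from the witness multisets. Both $Y$ and $Y^*$ decompose into sums of indicators with expectations in $(0,1/2]$ and in $(1/2,1)$, and by construction the two blocks of $Y$ share the first $d$ power sums with the corresponding blocks of $Y^*$. Applying Theorem~\ref{theorem:binomial appx theorem} to the low-probability block, Corollary~\ref{theorem:binomial appx theorem for large guys} to the high-probability block, and the triangle inequality for total variation distance gives
\[
\|Y;Y^*\|\le 40(d+1)^{1/4}\,2^{-(d+1)/2},
\]
and the choice $d=\lceil 3\log_2(320/\epsilon)\rceil$ makes this at most $\epsilon/16$. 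Since utilities lie in $[-1,1]$, it follows that
\[
\bigl|(\U^i_0-\U^i_1)-(\tilde u^i_0-\tilde u^i_1)\bigr|\le \epsilon/4,
\]
where $\tilde u^i_s:=\E\bigl[u_i(s,Y^*)\bigr]$ is $i$'s true expected payoff in $\sigma^*$. The $\epsilon/2$-Nash property of $\sigma^*$ gives $\tilde u^i_{v_i^*}\ge \tilde u^i_{1-v_i^*}-\epsilon/2$ when $v_i^*$ is pure and $|\tilde u^i_0-\tilde u^i_1|\le\epsilon/2$ when $v_i^*$ is a mixing probability; combining with the $\epsilon/4$ approximation bound yields exactly the $3\epsilon/4$ slack built into the classifier, so $v_i^*\in\mathcal{S}_i$.

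For part~(iii), once $v_i^*\in\mathcal{S}_i$ for every $i$, the tuple $(v_1^*,\ldots,v_n^*)$ is a feasible solution to the assignment problem of Step~4 under the current guess: the aggregates $t_0,t_1,t_s,t_b$ are correct and the power sums of the mixing coordinates of $\sigma^*$ are by choice equal to $\mu_\ell$ and $\mu'_\ell$. Hence Step~4 returns a (possibly different) feasible assignment, and {\sc Moment Search} produces an output. The main technical obstacle is the error introduced by replacing the true mixing distribution of $Y^*$ by a moment-matched surrogate $Y$; controlling this error is precisely what Theorem~\ref{theorem:binomial appx theorem} and the careful choice of $d$ are designed to do, with the $3\epsilon/4$ classifier threshold calibrated to absorb both the $\epsilon/2$ best-response slack of $\sigma^*$ and the $\epsilon/4$ moment-approximation error.
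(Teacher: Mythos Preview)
Your proof is correct and follows essentially the same approach as the paper's: fix the guess matching the aggregates of the hypothesised $\epsilon/2$-equilibrium, argue the moment-matching subroutine succeeds because $\sigma^*$ itself is a witness, then use Theorem~\ref{theorem:binomial appx theorem} (and its corollary for the high-probability block) to bound $\|Y;Y^*\|$ and conclude the $3\epsilon/4$ classifier test passes. Two minor remarks: what you call the ``triangle inequality for total variation distance'' when combining the two blocks is really the subadditivity of total variation over sums of independent random variables (the coupling lemma), and you are a bit more explicit than the paper in spelling out that $\sigma^*$ itself certifies feasibility of Step~4 once each $v_i^*\in\mathcal{S}_i$---both harmless, but worth tightening the wording on the first point.
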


\begin{claim} \label{claim: moment search always outputs equilibrium}
If {\sc Moment Search} outputs a set of mixed strategies $(v_1,\ldots,v_n)$, then these strategies constitute an $\epsilon$-Nash equilibrium.
\end{claim}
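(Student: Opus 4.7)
The plan is to show that for every player $i$, the strategy $v_i$ output by the algorithm is an $\epsilon$-best response against the actual distribution of opponents induced by $\{v_j\}_{j\ne i}$. The key observation is that, although the test in Step~3 uses auxiliary ``witness'' probabilities $q_1,\ldots,q_{t_s-1}$ and $r_1,\ldots,r_{t_b}$ (or analogous collections for the cases $v_i\in\{0,1\}$) to decide whether $v_i\in\mathcal{S}_i$, these witnesses were chosen to match exactly the same guessed moments $\mu_\ell,\mu'_\ell$ that the final assignment $(v_1,\dots,v_n)$ also matches. So the random variable $Y$ used in Step~3 and the true random variable $Y_{\mathrm{act}}=\sum_{j\ne i}v_j \cdot \mathbf{1}[\text{player }j\text{ plays }1]$ (in the appropriate indicator sense) agree on their first $d$ moments.

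First, I would split the randomness of the opponents into deterministic, small-probability, and large-probability contributions: the $t_0$ (resp.\ $t_1$) deterministic zeros (resp.\ ones) contribute a constant shift identical in both $Y$ and $Y_{\mathrm{act}}$; the remaining opponents split into those with mixing probability in $(0,1/2]$ and those with probability in $(1/2,1)$. Using Proposition~\ref{proposition:variable moments to probability moments}, the equality of the first $d$ power sums of the mixing probabilities in each range (as enforced in Steps~\ref{item in Moment Search: determining whether a probability in 0-1/2 is in Si: definition of variable q's and r's} and~\ref{item in Moment Search: check if there is an assignment from the Sis implementing the moments}) is equivalent to the equality of the first $d$ moments of the sums of the corresponding indicators. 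Then, applying Theorem~\ref{theorem:binomial appx theorem} to the small part and Corollary~\ref{theorem:binomial appx theorem for large guys} to the large part, and combining via the triangle inequality for total variation distance, I get
\[
d_{\mathrm{TV}}(Y,\, Y_{\mathrm{act}}) \;\le\; 2\cdot 20(d+1)^{1/4}2^{-(d+1)/2}.
\]
With $d=\lceil 3\log_2(320/\epsilon)\rceil$, the right-hand side is at most $\epsilon/8$.

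Next, because all payoff functions $u^i_s$ take values in $[-1,1]$, bounding the total variation distance by $\epsilon/8$ immediately yields $|\E[u^i_s(Y)]-\E[u^i_s(Y_{\mathrm{act}})]|\le \epsilon/4$ for each $s\in\{0,1\}$. Now fix a player $i$ and consider any pure deviation $s'$. By construction $v_i\in\mathcal{S}_i$, so in Step~3 the expected payoffs $\mathcal{U}^i_{v_i}$ and $\mathcal{U}^i_{s'}$ computed under the witness $Y$ satisfy $\mathcal{U}^i_{v_i}\ge \mathcal{U}^i_{s'}-3\epsilon/4$ (this is precisely the reason the value $3\epsilon/4$ was chosen midway between $\epsilon/2$ and $\epsilon$). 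Transporting both expectations from $Y$ to $Y_{\mathrm{act}}$ via the $\epsilon/4$ estimate, the regret of $v_i$ under the true opponent distribution is at most $3\epsilon/4+\epsilon/4=\epsilon$. Since this holds for every player and every pure deviation, $(v_1,\dots,v_n)$ is an $\epsilon$-Nash equilibrium.

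The main technical point to verify carefully is the moment-matching bookkeeping in the three branches of Step~3 (when $v_i=0$, when $v_i=1$, and when $v_i\in(0,1)$), ensuring that in each case the residual moments of the other players' small/large mixing probabilities used to construct the witness agree exactly with those realized by the final assignment---this is where one must check that subtracting $v_i^\ell$ (or leaving it unsubtracted) from $\mu_\ell$ or $\mu'_\ell$ is done consistently between Steps~3 and~4. Once this bookkeeping is in place, the moment-method bound and the choice of threshold $3\epsilon/4$ combine mechanically to yield the $\epsilon$-equilibrium guarantee.
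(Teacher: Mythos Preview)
Your proposal is correct and follows essentially the same approach as the paper: show that the witness variable $Y$ used in Step~3 and the true opponent count $Z$ (your $Y_{\mathrm{act}}$) agree on their first $d$ power sums in each of the $(0,1/2]$ and $(1/2,1)$ blocks, apply Theorem~\ref{theorem:binomial appx theorem} and Corollary~\ref{theorem:binomial appx theorem for large guys} to bound $\|Y;Z\|\le\epsilon/8$, and then transport the $3\epsilon/4$ test across this gap. One small arithmetic slip: when you move \emph{both} $\mathcal{U}^i_0$ and $\mathcal{U}^i_1$ from $Y$ to $Y_{\mathrm{act}}$ you pay the per-utility error twice, so to land at $3\epsilon/4+\cdot=\epsilon$ you need each $|\E[u^i_s(Y)]-\E[u^i_s(Y_{\mathrm{act}})]|\le\epsilon/8$, not $\epsilon/4$; the paper obtains this directly from $\|Y;Z\|\le\epsilon/8$ (tacitly using a range-one bound on the utilities), and in any case the slack is absorbed by the choice of $d$.
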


\begin{prevproof}{Claim}{claim: moment search never fails}
Let $(p_1,\ldots,p_n)$ be an $\epsilon/2$-Nash equilibrium in which
$t_0$ players play $0$, $t_1$ players play $1$, and $t \le k^3$ players mix,
and their mixed strategies are integer multiples of $1/k^2$. It
suffices to show that there exist guesses for $t_0$, $t_1$, $t_s$, $t_b$,
$\mu_1,\ldots,\mu_d$, $\mu'_1,\ldots,\mu'_d$, such that $p_1 \in
\mathcal{S}_1$, $p_2 \in \mathcal{S}_2$,$\ldots$, $p_n \in
\mathcal{S}_n$. Indeed, let
$$\I_0 := \{i| p_i=0 \},~\I_s := \{i| p_i \in (0,1/2] \},$$
$$\I_b := \{i| p_i \in (1/2,1) \},~\I_1 := \{i| p_i=1 \},$$
and let us choose the following values for our guesses
$$t_0:=|\I_0|, t_s = |\I_s|,~t_b = |\I_b|, t_1:=|\I_1|$$
and, for all $\ell \in [d]$,
$$\mu_{\ell}=\sum_{i \in \I_s} p_i^{\ell},~~\mu'_{\ell}=\sum_{i \in \I_b} p_i^{\ell}.$$
We will show that for the guesses that we defined above $p_i \in
\mathcal{S}_i$, for all $i$. We distinguish the following cases: $i
\in \I_0$, $i \in \I_s$, $i \in \I_b$, $i \in \I_1$. The proof for
all the cases proceeds in the same fashion. We will only argue about
the case $i \in I_s$; in particular, we will show that in
Step~\ref{item in Moment Search: determining whether a probability
in 0-1/2 is in Si} of {\sc Moment Search} the test succeeds for
$j/k^2=p_i$.

At the equilibrium point $(p_1,\ldots,p_n)$, the number of the other
players who choose strategy $1$, from the perspective of player $i$,
is distributed identically to the random variable:
$$Z:= \sum_{j \in \I_s \setminus \{i\}} X_j + \sum_{j \in \I_b} X_j + t_1 \cdot 1,$$
where $\E[X_j]=p_j$ for all $j$. Since $(p_1,\ldots,p_n)$ is an
$\epsilon/2$-Nash equilibrium it must be the case that
\begin{align}
|\E[u^i_0(Z)] - \E[u^i_1(Z)]| \le \epsilon/2. \label{eq: proof of
Moment Search: what I get from Nash}
\end{align}
We will argue that, if in the above equation, we replace $Z$ by $Y$,
where $Y$ is the random variable defined in Step~\ref{item in Moment
Search: determining whether a probability in 0-1/2 is in Si:
definition of variable Y} of {\sc Moment Search}, the inequality
still holds with slightly updated upper bound:
\begin{align}
|\E[u^i_0(Y)] - \E[u^i_1(Y)]| \le 3\epsilon/4. \label{eq: proof of
Moment Search: target equation}
\end{align}
If~\eqref{eq: proof of Moment Search: target equation} is
established, the proof is completed since Step~\ref{item in Moment
Search: determining whether a probability in 0-1/2 is in Si: do i
include j/(k squared) in set Si} will include $j/k^2$ into the set
$\mathcal{S}_i$.

Let $S_1,\ldots,S_{t_s-1}, B_1,\ldots,B_{t_b}$ be the random
variables with expectations $q_1,\ldots,q_{t_s-1},
r_1,\ldots,r_{t_b}$ defined in Step~\ref{item in Moment Search:
determining whether a probability in 0-1/2 is in Si: definition of
variable Y} of {\sc Moment Search}. Observe that, for all
$\ell=1,\ldots,d$,
$$\sum_{j=1}^{t_s-1} q_j^{\ell} = \mu_{\ell} - (j/k^2)^{\ell} = \sum_{j \in \I_s \setminus \{i\}} p_j^{\ell},$$
since $p_i = j/k^2$. Hence, by Theorem~\ref{theorem:binomial appx
theorem},
\begin{align}
\left\| \sum_{j=1}^{t_s-1}S_j - \sum_{j \in \I_s \setminus
\{i\}}X_j\right\| \le 20(d+1)^{1/4} 2^{-(d+1)/2} \le \epsilon/16.
\label{eq: proof of Moment Search: small guys}
\end{align}
Via similar arguments and Corollary~\ref{theorem:binomial appx
theorem for large guys}, we get
\begin{align}
\left\| \sum_{j=1}^{t_b}B_j - \sum_{j \in \I_b}X_j\right\| \le
\epsilon/16. \label{eq: proof of Moment Search: large guys}
\end{align}
\eqref{eq: proof of Moment Search: small guys} and \eqref{eq: proof
of Moment Search: large guys} imply via the coupling lemma
\begin{align}
\| Y ; Z \| \le \frac{\epsilon}{8}. \label{eq: proof of Moment
Search: total variation}
\end{align}
It is not hard to see that
$$|\E[u^i_0(Y)]-\E[u^i_0(Z)]| \le \| Y ; Z \| \le \frac{\epsilon}{8},$$
where we used \eqref{eq: proof of Moment Search: total variation}.
Similarly,
$$|\E[u^i_1(Y)]-\E[u^i_1(Z)]| \le \| Y ; Z \| \le \frac{\epsilon}{8}.$$
Combining the above with \eqref{eq: proof of Moment Search: what I
get from Nash} we get \eqref{eq: proof of Moment Search: target
equation}. This concludes the proof.
\end{prevproof}

\begin{prevproof}{Claim}{claim: moment search always outputs equilibrium}
Let
$$\I_0 := \{i| v_i=0 \},~\I_s := \{i| v_i \in (0,1/2] \},$$
$$\I_b := \{i| v_i \in (1/2,1) \},~\I_1 := \{i| v_i=1 \},$$
$$t_s = |\I_s|,~t_b = |\I_b|,$$
Observe that the moment values that were guessed in Step~\ref{item
in Moment Search: guessing moments} of {\sc Moment Search} satisfy
$$\mu_{\ell}=\sum_{i \in \I_s} v_i^{\ell},~~\mu'_{\ell}=\sum_{i \in \I_b} v_i^{\ell},~~\text{for all $\ell=1,\ldots,d$.}$$

We will argue that $(v_1,\ldots,v_n)$ is an $\epsilon$-Nash
equilibrium. To do this we need to argue that, for each player $i$,
$v_i$ is an $\epsilon$-well supported strategy against the
strategies of her opponents. We distinguish the following cases: $i
\in \I_0$, $i \in \I_s$, $i \in \I_b$, $i \in \I_1$. The proof for
all the cases proceeds in a similar fashion. We will only present
the argument for the case $i \in I_s$.

Let $v_i = j/k^2$ for some $j \in \{1,\ldots,\frac{k^2}{2}\}$. From
the perspective of player $i$, the number of other players who play
$1$ in the mixed strategy profile $(v_1,\ldots,v_n)$ is distributed
identically to the random variable
$$Z:= \sum_{j \in [n] \setminus \{i\}} X_j,$$
where $\E[X_j]=v_j$ for all $j$. To argue that $v_i$ is an
$\epsilon$-well supported strategy against the strategies of $i$'s
opponents, we need to show that
\begin{align}
|\E[u^i_0(Z)] - \E[u^i_1(Z)]| \le \epsilon. \label{eq: proof of
Moment Search 2: what I need to show}
\end{align}

Let us now go back to the iteration of Step~\ref{item in Moment
Search: determining whether a probability in 0-1/2 is in Si} in
which the probability value $j/k^2$ was inserted into the set
$\mathcal{S}_i$. Let $q_1,\ldots,q_{t_s-1}$, $r_1,\ldots,r_{t_b}$ be
the values that were selected at Step \ref{item in Moment Search:
determining whether a probability in 0-1/2 is in Si: definition of
variable q's and r's} of that iteration, and let
$$Y= \sum_{j=1}^{t_s-1} S_j + \sum_{j=1}^{t_b} B_j + t_1 \cdot 1,$$
be the random variable defined in Step~\ref{item in Moment Search:
determining whether a probability in 0-1/2 is in Si: definition of
variable Y}, where the variables $S_1,\ldots,S_{t_s-1},
B_1,\ldots,B_{t_b}$ are mutually independent with expectations
$\E[S_i]=q_i$, for all $i=1,\ldots,t_s-1$, and $\E[B_i]=r_i$, for
all $i=1,\ldots,t_b$. Observe that the $q_j$'s and $r_j$'s where
chosen by Step~\ref{item in Moment Search: determining whether a
probability in 0-1/2 is in Si: definition of variable q's and r's}
so that the following are satisfied

\begin{align}
\sum_{j=1}^{t_s-1} q_j^{\ell} = \mu_{\ell} - (j/k^2)^{\ell} &= \mu_{\ell} - v_i^{\ell} =  \sum_{j \in \I_s \setminus \{i\}} v_j^{\ell},~\text{for all $\ell \in [d]$,} \label{eq: proof of Moment Search 2: agreement in moments by small guys}\\
\text{and~~~}\sum_{j=1}^{t_b} r_j^{\ell} &= \mu'_{\ell} = \sum_{j
\in \I_b} v_j^{\ell},~~~\text{for all $\ell=1,\ldots,d$.}\label{eq:
proof of Moment Search 2: agreement in moments by large guys}
\end{align}
Equation~\eqref{eq: proof of Moment Search 2: agreement in moments
by small guys} implies via Theorem~\ref{theorem:binomial appx
theorem} that
\begin{align}
\left\| \sum_{j=1}^{t_s-1}S_j - \sum_{j \in \I_s \setminus
\{i\}}X_j\right\| \le 20(d+1)^{1/4} 2^{-(d+1)/2} \le \epsilon/16.
\label{eq: proof of Moment Search2: small guys}
\end{align}
Equation~\eqref{eq: proof of Moment Search 2: agreement in moments
by large guys} and Corollary~\ref{theorem:binomial appx theorem for
large guys} imply
\begin{align}
\left\| \sum_{j=1}^{t_b}B_j - \sum_{j \in \I_b}X_j\right\| \le
\epsilon/16. \label{eq: proof of Moment Search2: large guys}
\end{align}
\eqref{eq: proof of Moment Search2: small guys} and \eqref{eq: proof
of Moment Search2: large guys} imply via the coupling lemma
\begin{align}
\| Y ; Z \| \le \frac{\epsilon}{8}. \label{eq: proof of Moment
Search2: total variation}
\end{align}
It is not hard to see that
\begin{align}
|\E[u^i_0(Y)]-\E[u^i_0(Z)]| \le \| Y ; Z \| \le \frac{\epsilon}{8},
\label{eq: proof of Moment Search2: difference in utility for 0}
\end{align}
where we used \eqref{eq: proof of Moment Search2: total variation}.
Similarly,
\begin{align}
|\E[u^i_1(Y)]-\E[u^i_1(Z)]| \le \| Y ; Z \| \le \frac{\epsilon}{8}.
\label{eq: proof of Moment Search2: difference in utility for 1}
\end{align}
Moreover, notice that the random variable $Y$ satisfies the
following condition
\begin{align}
|\E[u^i_0(Y)] - \E[u^i_1(Y)]| \le 3\epsilon/4, \label{eq: proof of
Moment Search2: what I get from the success of moment search}
\end{align}
since, in order for $v_i$ to be included into $\mathcal{S}_i$, the
test in Step~ \ref{item in Moment Search: determining whether a
probability in 0-1/2 is in Si: do i include j/(k squared) in set Si}
of {\sc Moment Search} must have succeeded. Combining \eqref{eq:
proof of Moment Search2: difference in utility for 0}, \eqref{eq:
proof of Moment Search2: difference in utility for 1} and \eqref{eq:
proof of Moment Search2: what I get from the success of moment
search} we get~\eqref{eq: proof of Moment Search 2: what I need to
show}. This concludes the proof.
\end{prevproof}

\paragraph{Computational Complexity}

We will argue that there exists an implementation of {\sc Moment Search} which on input $k=O(1/\epsilon)$ runs in time
$$U\cdot {\rm poly}(n) \cdot (1/\epsilon)^{O( \log^2 (1/\epsilon))},$$
where $U$ is the number of bits required to represent a payoff value of the game.

Observe first that the number of possible guesses for Step~\ref{item in Moment Search: guessing how many small how many large} of {\sc Moment Search} is at most $n^2 O((1/ \epsilon)^6)$. Observe further that the number of possible guesses for $\mu_{\ell}$ in Step~\ref{item in Moment Search: guessing moments} is at most $t \left({k^2 \over 2}\right)^{\ell}$ (where $t \le k^3$ is the number of players who mix), so jointly the number of possible guesses for $\mu_1,\ldots,\mu_{d}$ is at most
$$\prod_{\ell=1}^d{t \left({k^2 \over 2}\right)^{\ell}} = t^d {\left({k^2 \over 2}\right)^{d(d+1)/2}} = \left(1 \over \epsilon \right)^{O\left(\log^2 {1 \over \epsilon}\right)}.$$
The same asymptotic upper bound applies to the total number of guesses for $\mu'_1,\ldots,\mu'_{d}$. Given the above the total number of guesses that {\sc Moment Search} has to do is
$$n^2 \left(1 \over \epsilon \right)^{O\left(\log^2 {1 \over \epsilon}\right)}.$$

We next argue that the running time required to complete Steps~\ref{item in Moment Search: determining the sets Si},~\ref{item in Moment Search: check if there is an assignment from the Sis implementing the moments}, and~\ref{item in Moment Search: output a mixed strategy profile} is at most
$$O(n^3) \cdot U \cdot \left({1 \over \epsilon}\right)^{O(\log^2(1/ \epsilon))}.$$
For this we establish the following; we give the proof in the end of this section.

\begin{claim} \label{claim: easy to solve moment equations}
Given a set of values $\mu_1,\ldots,\mu_d, \mu'_1,\ldots,\mu'_d$, where, for all $\ell=1,\ldots,d$,
$$\mu_{\ell},\mu'_{\ell} \in \left\{0,\left({1 \over k^2}\right)^{\ell},2\left({1 \over k^2}\right)^{\ell},\ldots,B\right\},$$
for some $B \in \mathbb{N}$, discrete sets $\mathcal{T}_1,\ldots,\mathcal{T}_m \subseteq \left\{0, {1 \over k^2}, {2 \over k^2},\ldots, 1\right\}$, and four integers $m_0, m_1 \le m$, $m_s, m_b \le B$, it is possible to solve the system of equations:
\begin{align*}
(\Sigma):~~~\sum_{p_i \in (0,1/2]}p_i^{\ell}&=\mu_{\ell}, \text{ for all $\ell=1,\ldots,d$},\\
\sum_{p_i \in (1/2,1)}p_i^{\ell}&=\mu'_{\ell}, \text{ for all $\ell=1,\ldots,d$},\\
|\{i | p_i = 0\}|&= m_0\\
|\{i | p_i = 1\}|&=m_1\\
|\{i | p_i \in (0, 1/2]\}|&=m_s\\
|\{i | p_i \in (1/2, 1)\}|&=m_b\\
\end{align*}
 with respect to the variables $p_1 \in \mathcal{T}_1,\ldots, p_m \in \mathcal{T}_m$, or to determine that no solution exists, in time
$$O(m^3) B^{O(d)} k^{O(d^2)}.$$
\end{claim}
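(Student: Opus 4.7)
The plan is to solve $(\Sigma)$ by a stage-wise dynamic program that processes the variables $p_1,\ldots,p_m$ in order and, at each stage, stores just enough aggregate information to detect feasibility. After having assigned values to $p_1,\ldots,p_i$, the DP state will consist of (a) the four counts $n_0,n_1,n_s,n_b$ recording how many of $p_1,\ldots,p_i$ have been placed in $\{0\}$, $\{1\}$, $(0,1/2]$, and $(1/2,1)$ respectively; and (b) the running partial power sums $s_\ell = \sum_{j \le i,\, p_j \in (0,1/2]} p_j^{\ell}$ and $s'_\ell = \sum_{j \le i,\, p_j \in (1/2,1)} p_j^{\ell}$ for each $\ell = 1,\ldots,d$. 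Since $n_0+n_1+n_s+n_b=i$, the stage $i$ is determined by the state, so we do not store it separately. Initializing with the zero state and terminating when we reach $(n_0,n_1,n_s,n_b,s_\ell,s'_\ell) = (m_0,m_1,m_s,m_b,\mu_\ell,\mu'_\ell)$ at stage $m$ decides feasibility; a standard back-pointer table then recovers the witness $(p_1,\ldots,p_m)$.

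The transitions are immediate: from a state at stage $i$, we try each candidate $v \in \mathcal{T}_{i+1} \subseteq \{0,1/k^2,\ldots,1\}$ (at most $k^2+1$ values), increment the appropriate count, and, if $v \in (0,1/2]$ or $v \in (1/2,1)$, update each of the $d$ power sums by adding $v^\ell$. Updating a state costs $O(d)$ arithmetic operations. The key quantitative step is bounding the number of reachable states: the counts contribute $O(m^3)$ possibilities (three free coordinates, the fourth determined by $i$); each $s_\ell$ is a nonnegative integer multiple of $1/k^{2\ell}$ bounded above by $B$, giving at most $Bk^{2\ell}+1$ options, and likewise for $s'_\ell$. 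Multiplying,
\[
\#\text{states} \;\le\; O(m^3) \cdot \prod_{\ell=1}^{d}\bigl(Bk^{2\ell}+1\bigr)^2 \;=\; O(m^3)\cdot B^{O(d)} \cdot k^{O(d^2)},
\]
where the exponent of $k$ comes from $2\sum_{\ell=1}^d 2\ell = 2d(d+1)$. Multiplying by the $O(k^2 \cdot d)$ cost per transition absorbs into the same bound, yielding the claimed running time $O(m^3)\, B^{O(d)}\, k^{O(d^2)}$.

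The main conceptual obstacle is simply choosing the right state so that feasibility of $(\Sigma)$ is preserved under the DP's ``forget the identity of past $p_j$'s'' abstraction. This works here because $(\Sigma)$ is expressed purely through the four counts and the $2d$ power sums, and each of these aggregates is additive in the individual $p_j$'s — so the state after processing $i$ variables is all the information needed to continue. The mild numerical subtlety is verifying that the partial power sums always lie on the integer lattice with spacing $1/k^{2\ell}$ (since each summand $p_j^{\ell}$ is a multiple of $1/k^{2\ell}$), which is what keeps the discrete state space of size $B^{O(d)}k^{O(d^2)}$ rather than continuous. No further optimization is needed to reach the stated bound.
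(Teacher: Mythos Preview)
Your proposal is correct and follows essentially the same approach as the paper: a layer-by-layer dynamic program whose state records the four counts $(n_0,n_1,n_s,n_b)$ and the $2d$ partial power sums, with transitions given by enumerating $v\in\mathcal T_{i+1}$. The paper's version stores the stage index $i$ explicitly (yielding a $(2d+5)$-dimensional table $A(i,z_0,z_1,z_s,z_b;\nu_1,\ldots,\nu_d;\nu'_1,\ldots,\nu'_d)$), whereas you observe that $i=n_0+n_1+n_s+n_b$ is redundant; both analyses arrive at the same $O(m^3)\,B^{O(d)}\,k^{O(d^2)}$ bound.
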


Applying Claim~\ref{claim: easy to solve moment equations} with $m \le t$, $B \le t$ (where $t \le k^3$ is the number of players who mix), $m_0=0$, $m_1=0$, shows that Steps~\ref{item in Moment Search: determining whether probability 0 is in Si: definition of variable q's and r's},~\ref{item in Moment Search: determining whether a probability in 0-1/2 is in Si: definition of variable q's and r's} can be completed in time
$$O(t^3) t^{O(d)} k^{O(d^2)} = \left(1\over \epsilon\right)^{O(\log^2(1/\epsilon))}.$$
Another application of Claim~\ref{claim: easy to solve moment equations} with $m = n$, $B \le t$, $m_0 \le n$, $m_1 \le n$ shows that Step~\ref{item in Moment Search: check if there is an assignment from the Sis implementing the moments} of {\sc Moment Search} can be completed in time
$$O(n^3) t^{O(d)} k^{O(d^2)} = O(n^3) \cdot \left(1\over \epsilon\right)^{O(\log^2(1/\epsilon))}.$$
Finally, we argue that the computation of the expected utilities $\U^i_0$ and $\U^i_1$ required in Steps~\ref{item in Moment Search: determining whether probability 0 is in Si: computing expected utilities}, \ref{item in Moment Search: determining whether a probability in 0-1/2 is in Si: computing expected utilities} of {\sc Moment Search} can be done efficiently using dynamic programming with $O(n^2)$ operations on numbers with at most $b(n,k):=\lceil 1+n \log_2{(k^2)} + U) \rceil$ bits, where $U$ is the number of bits required to specify a payoff value of the game.\footnote{To compute a bound on the number of bits required for the expected utility computations, note that every non-zero probability value that is computed along the execution of the algorithm must be an integer multiple of $(\frac{1}{k^2})^{n-1}$, since the mixed strategies of all players are from the set $\{0,1/k^2,2/k^2,\ldots,1\}$. Further note that the expected utility is a weighted sum of $(n-1)$ payoff values, with $U$ bits required to represent each value, and all weights being probabilities.}

Therefore, the overall time required for the execution of {\sc Moment Search} is
$$O(n^3) \cdot U \cdot \left(1\over \epsilon\right)^{O(\log^2(1/\epsilon))}.$$

\begin{prevproof}{Claim}{claim: easy to solve moment equations}
We use dynamic programming. Let us consider the following tensor of
dimension $2d+5$:
$$A(i, z_0, z_1, z_s, z_b; \nu_1,\ldots,\nu_d ; \nu'_1,\ldots,\nu'_d),$$
where $i \in [m]$, $z_0, z_1 \in \{0,\ldots,m\}$, $z_s, z_b \in \{0,\ldots,B\}$ and $$\nu_{\ell},
\nu'_{\ell} \in \left\{0,\left({1 \over k^2}\right)^{\ell},2\left({1
\over k^2}\right)^{\ell},\ldots,B\right\},$$ for $\ell=1,\ldots,d.$
The total number of cells in $A$ is
\begin{align*}&m \cdot (m+1)^2 \cdot (B+1)^2 \cdot \left( \prod_{\ell =1}^d (B k^{2 \ell}+1) \right)^2\\&~~~~~~~~~~~~~~~~\le O(m^3) B^{O(d)} k^{2d(d+1)}.
\end{align*}
Every cell of $A$ is assigned value $0$ or $1$, as follows:
\begin{align*}
&A(i, z_0, z_1, z_s, z_b ; \nu_1,\ldots,\nu_d, \nu'_1,\ldots,\nu'_d)=1\\&~~\Leftrightarrow \left(\begin{minipage}{6.5cm} \centering There exist $p_1 \in \mathcal{T}_1$, $\ldots$, $p_i \in \mathcal{T}_i$ such that $|\{j \le i | p_j =0\}|=z_0$, $|\{j \le i | p_j =1\}|=z_1,$ $|\{j \le i | p_j \in (0,1/2]\}|=z_s$, $|\{j \le i | p_j \in (1/2,1)\}|=z_b,$
$\sum_{j \le i: p_j \in (0,1/2]}p_j^{\ell}=\nu_{\ell}$, for all
$\ell=1,\ldots,d$, $\sum_{j \le i: p_j \in
(1/2,1)}p_j^{\ell}=\nu'_{\ell}$, for all
$\ell=1,\ldots,d$.\end{minipage} \right).
\end{align*} It is easy to complete
$A$ working in layers of increasing $i$. We initialize all entries
to value $0$. Then, the first layer
$A(1,\cdot,\cdot~;~\cdot,\ldots,\cdot)$ can be completed easily as
follows:
\begin{align*}
&A(1, 1, 0, 0, 0 ; 0, 0, \ldots,0 ; 0, 0, \ldots, 0)=1 \Leftrightarrow 0 \in \mathcal{T}_1\\
&A(1, 0, 1, 0, 0 ; 0, 0, \ldots,0 ; 0, 0 \ldots, 0)=1 \Leftrightarrow 1 \in \mathcal{T}_1\\
&A(1, 0, 0, 1, 0 ; p, p^2, \ldots,p^d ; 0,\ldots,0)=1 \Leftrightarrow p \in \mathcal{T}_1 \cap (0,1/2]\\
&A(1, 0, 0, 0, 1 ; 0, \ldots,0; p, p^2, \ldots,p^d)=1 \Leftrightarrow p \in \mathcal{T}_1 \cap (1/2,1)
\end{align*}
Inductively, to complete layer $i+1$, we consider all the non-zero
entries of layer $i$ and for every such non-zero entry and for every
$v_{i+1} \in \mathcal{T}_{i+1}$, we find which entry of layer $i+1$
we would transition to if we chose $p_{i+1}=v_{i+1}$. We set that
entry equal to $1$ and we also save a pointer to this entry from the
corresponding entry of layer $i$, labeling that pointer with the
value $v_{i+1}$. The time we need to complete layer $i+1$ is bounded
by
$$|\mathcal{T}_{i+1}|. (m+1)^2 B^{O(d)} k^{2d(d+1)} \le O(m^2) B^{O(d)} k^{O(d^2)}.$$
Therefore, the overall time needed to complete $A$ is $$O(m^3)
B^{O(d)} k^{O(d^2)}.$$

After completing tensor $A$, it is easy to check if there exists a
solution to $(\Sigma)$. A solution exists if and only if
$$A(m,m_0,m_1,m_s,m_b;\mu_1,\ldots,\mu_d ; \mu'_1,\ldots,\mu'_d)=1,$$ and can
be found by tracing back the pointers from this cell of $A$. The
overall running time is dominated by the time needed to fill in $A$.
\end{prevproof}}
\end{document}